\documentclass{article}

%\PassOptionsToPackage{numbers, sort&compress}{natbib}

%\usepackage[numbers, sort&compress]{natbib}
%\usepackage{algorithm,algorithmic}
%\usepackage[utf8]{inputenc} % allow utf-8 input
%\usepackage[T1]{fontenc}    % use 8-bit T1 fonts
%\usepackage{url}            % simple URL typesetting
\usepackage{booktabs}       % professional-quality tables
\usepackage[margin=1in]{geometry}
\usepackage{amsfonts}       % blackboard math symbols
\usepackage{nicefrac}       % compact symbols for 1/2, etc.
\usepackage{microtype}      % microtypography
\usepackage{xcolor}         % colors
\usepackage{caption}        % captions
\usepackage{subcaption}     % subfigures
\usepackage{multirow}       % multirow in tables
\usepackage{listings}       % code
\usepackage[english]{babel}
\usepackage{natbib}

%\usepackage[dvipsnames]{xcolor}

% PGF
\usepackage{pgfplots}
\pgfplotsset{width=8cm,compat=1.15}

% Attempt to make hyperref and algorithmic work together better:

\usepackage[english]{babel}

%symbols 
\usepackage{amsmath,amssymb,amsfonts,amsopn,amsthm,mathrsfs}
\usepackage{bm,dsfont}
\usepackage{mathtools}
\usepackage{bigints}
\usepackage{latexsym}
\usepackage{stmaryrd}       %weird symbols like square \cap and double brackets

\usepackage{relsize}

% Operators

\def\mybigtimes{\mathop{\mathchoice{%display:
   \vcenter{\hbox to10bp{\vrule height15bp width0pt \pdfliteral{
   q 1 J .8 w 0 1 m 10 14 l S 0 14 m 10 1 l S Q
}\hss}}}{%text:
   \vcenter{\hbox to10bp{\kern1bp\vrule height10bp width0pt \pdfliteral{
   q 1 J .65 w 0 0 m 8 10 l S 0 10 m 8 0 l S Q
}\hss}}}{\times}{\times}%script, scriptscript not defined
}}

% Delimiters

\newcommand{\abs}[1]{\left\lvert #1 \right\rvert}

% Text

\makeatletter
\newcommand{\vvast}{\bBigg@{3}}
\newcommand{\vast}{\bBigg@{4}}
\newcommand{\Vast}{\bBigg@{5}}
\makeatother

% Characters

\newcommand{\1}{\mathds{1}}

 %special \cr is already used

\newcommand{\de}{\mathrm{d}}

%%%%%%%%%%%%%%%%%%%%%%%%%%
% Alphabet
%%%%%%%%%%%%%%%%%%%%%%%%%%

%%% Blackboard Bold

    \def\Eb{\mathbb{E}}

    \def\Pb{\mathbb{P}}
    
    \def\Rb{\mathbb{R}}

%%% Calygraphic

    \def\Ec{\mathcal{E}}

% %%% Roman

     % trace

     % \cr already used by some internals

     %\or logic
%    \def\pr{\mathrm{p}}

%%% Script

    \def\Ps{\mathscr{P}}

    \def\Ws{\mathscr{W}}

%%% Fraktur

     % \bf boldface

     % \if conditional statements

% %%% Bold Face

%General macros

\def\ie{i.e.\ }
\def\eg{e.g.\ }
% Notation

\newcommand{\thresh}{\Phi}
\newcommand{\A}{\mbox{\textup{\textsf{A}}}}
\newcommand{\Alg}{\mbox{\textup{\textsf{ALG}}}}

%\newcommand{\rCR}{\mbox{\textup{\textsf{CR}}}}
 % expected Performance ration
 % random/deterministic Performance ratio
\newcommand{\Unif}{\mbox{\textup{\textsf{Unif}}}}
%project macros 
\def\OMS{one-max-search}

% For the deterministic setting
\def\con{c}
\def\rob{r}
\def\smth{s}
\def\smthr{{\smth_\rho}}

\def\smthu{{u}}
\def\SPO{\mathcal{P}_\rob} %Set of Pareto-Optimal

\newcommand{\lb}{1} % Lower bound of prices
\newcommand{\ub}{\theta} % Upper bound of prices
\def\range{[\lb,\ub]} % for the interval of prices
 % ratio \lb/\ub

\def\truecoupling{\pi^*} %true coupling 
\def\coupling{\pi} %default coupling symbol
\def\couplings{\Pi} %set of couplings (add marginals)

\def\prices{p} %deterministic price (add _i)
\def\Prices{P} %Random Variable for each price (add _i)
 %distribution of each price (add _i)

\def\price{p^*}%deterministic best price (add _i)
\def\Price{P^*} %Random Variable best price
\def\dprice{F^*} %distribution of best price

\def\pred{y} %deterministic prediction
\def\Pred{Y}    %Random Variable prediction
\def\dpred{G} %distribution of the prediction

\def\instance{\mathcal{I}_n}
\newcommand{\indic}{\textbf{1}}
\def\qmax{q}

 % for a generic strategy/algorithm for OMS

%parameters of the algorithm

\def\parrho{\rho}
\def\parr{r}
 %symbol for our algorithm

\def\algrho{\A_{\parr}^{\parrho}}

\def\algone{\A^1_\parr}

\def\family{\{\algone\}_{\parr}}

 % for the thresholds used in the proof of the first theorem

%Analysis

\def\error{\Ec}
\def\additiveerror{\eta}
\def\Wass{\Ws}
\def\detpricefun{\Lambda}
\def\detpredfun{\Upsilon}

\def\renorm{\Gamma}
\def\cstalpha{\alpha}
\def\cstbeta{\beta}

\def\locus{\mu}
\def\weight{w}

% Comments

% \newcommand{\stodo}[1]{\textcolor{cyan}{#1}}

% commenting this out
% The command \stodo gives compiling errors. Can you please define it somewhere?

\usepackage{hyperref}       % hyperlinks

% If accepted, instead use the following line for the camera-ready submission:
%\usepackage[accepted]{icml2025}

\usepackage{thm-restate}

% if you use cleveref..
\usepackage[capitalize,noabbrev]{cleveref}

%%%%%%%%%%%%%%%%%%%%%%%%%%%%%%%%
% THEOREMS
%%%%%%%%%%%%%%%%%%%%%%%%%%%%%%%%
\theoremstyle{plain}
\newtheorem{theorem}{Theorem}[section]

\newtheorem{lemma}[theorem]{Lemma}
\newtheorem{corollary}[theorem]{Corollary}

\theoremstyle{definition}

\theoremstyle{remark}
\newtheorem{remark}[theorem]{Remark}

\usepackage[textsize=tiny]{todonotes}

\newcommand{\mypar}[1]{{\textbf{#1}}}

%%%%%%%%%%%%%%%%%%%%%%%%%%%%%%%
% Editing tools
%%%%%%%%%%%%%%%%%%%%%%%%%%%%%%%

% highlight

%strikethrough
\usepackage[normalem]{ulem}

%%%%%%%%%%%%%%%%%%%%%%%%%%%%%%%
% Title / Authors
%%%%%%%%%%%%%%%%%%%%%%%%%%%%%%%

\title{Pareto-Optimality, Smoothness, and Stochasticity in Learning-Augmented One-Max-Search}

\author{% 
\begin{tabular}{c} Ziyad Benomar\textsuperscript{*} \\  CREST, ENSAE, Ecole Polytechnique, \\ Fairplay joint team, Palaiseau, France \\ \\
Vianney Perchet \\ CREST, ENSAE, Criteo AI LAB \\ Fairplay joint team, Paris, France 
\end{tabular} \and
\begin{tabular}{c} Lorenzo Croissant \thanks{Equal contribution} \\ CREST, ENSAE \\ Fairplay joint team, Palaiseau, France \\ \\
Spyros Angelopoulos \\ CNRS and International Laboratory\\ on Learning Systems, Montreal, Canada\\
\end{tabular} }

\date{}

%%%%%%%%%%%%%%%%%%%%%%%%%%%%%%%%%%%%%%%%%%%%%%%%%%%%%%%%%%%%%%%%%%%%%%%%%%%%%%%%%%%%%%%%%%%%%%%%%%
% Document Start
%%%%%%%%%%%%%%%%%%%%%%%%%%%%%%%%%%%%%%%%%%%%%%%%%%%%%%%%%%%%%%%%%%%%%%%%%%%%%%%%%%%%%%%%%%%%%%%%%%

\begin{document}

\maketitle

\begin{abstract}
One-max search is a classic problem in online decision-making, in which a trader acts on a sequence of revealed prices and accepts one of them irrevocably to maximize its profit. The problem has been studied both in probabilistic and in worst-case settings, notably through competitive analysis, and more recently in learning-augmented settings in which the trader has access to a {\em prediction} on the sequence. However, existing approaches either lack {\em smoothness}, or do not achieve optimal worst-case guarantees: they do not attain the best possible trade-off between the {\em consistency} and the {\em robustness} of the algorithm. We close this gap by presenting the first algorithm that simultaneously achieves both of these important objectives. Furthermore, we show how to leverage the obtained smoothness to provide an analysis
of one-max search in stochastic learning-augmented settings which capture randomness in both the observed prices and the prediction. 
\end{abstract}

\section{Introduction}\label{sec:introduction}

% MOTIVATION
 Recent and rapid advances in machine learning have provided the ability to learn complex patterns in data and time series. These advancements have given rise to a new computational paradigm, in which the algorithm designer has the capacity to incorporate a {\em prediction} oracle in the design, the theoretical analysis, and the empirical evaluation of an algorithm. The field of {\em learning-augmented} algorithms was born out of this emerging requirement to leverage ML techniques towards the development of more efficient algorithms.

 Learning-augmented algorithms have witnessed remarkable growth in recent years, starting with the seminal works~\citep{DBLP:journals/jacm/LykourisV21} and~\citep{NIPS2018_8174},
 particularly in \emph{online} decision making. In this class of problems, the input is a sequence of items, which are revealed one by one, with the algorithm making an irrevocable decision on each.
 Here, the prediction oracle provides some inherently imperfect information on the input items, which the algorithm must be able to leverage in a judicious manner.
  
One of the most challenging aspects of learning-augmented (online) algorithms is their theoretical evaluation. Unlike the prediction-free setting, in which worst-case measures such as the {\em competitive ratio}~\citep{borodin2005online} evaluate algorithms on a single metric, the analysis of learning-augmented settings is multifaceted and must incorporate the effect of the prediction {\em error} to be meaningful. Typical desiderata~\citep{DBLP:journals/jacm/LykourisV21} include: an efficient performance if the prediction is accurate ({\em consistency}); a performance that is not much worse than the competitive ratio if the predictions are arbitrarily inaccurate ({\em robustness}); and between these, a smooth decay of performance as the prediction error grows ({\em smoothness}). This marks a significant departure from the worst-case, and overly pessimistic competitive analysis, and allows for a much more nuanced and \emph{beyond worst-case} performance evaluation.

Achieving all these objectives simultaneously is a challenging task, and is even impossible for certain problems~\citep{elenter2024overcoming}. To illustrate such challenges with an example, consider the {\em \OMS{}} problem, which models a simple, yet fundamental setting in financial trading. Here, the input is a sequence of {\em prices} $(\prices_i)_{i=1}^n \in [1,\ub]$, where $\ub$ is a known upper bound, and the online algorithm (i.e., the trader) must decide, irrevocably, which price to accept. 

Under standard competitive analysis, which compares the algorithm's accepted price
to the maximum price $\price:=\max_i \prices_i$ on a worst-case instance, the problem admits a simple, yet optimal, algorithm~\citep{el-yaniv_competitive_1998}. In contrast, the learning-augmented setting, in which the algorithm has access to a prediction of $\price$, is far more complex. Specifically,~\citet{sun_pareto-optimal_2021} gave a {\em Pareto-optimal} algorithm, \ie one that achieves the best possible trade-off between consistency and robustness. However, this algorithm lacks smoothness, which results in {\em brittleness}. Namely,~\citet{benomar2025tradeoffs} showed that even if the prediction error is arbitrarily small, the algorithm's performance may degrade dramatically, and collapses to the robustness guarantee. This renders the algorithm unsuitable for any practical applications since perfect oracles do not exist in the real world. \citet{benomar2025tradeoffs} addressed the issue of brittleness by using randomization: this results in an algorithm with smoother behaviour, albeit at the cost of deviating from the consistency-robustness Pareto front. In a similar vein,~\citet{DBLP:conf/aaai/0001KZ22} gave a smooth algorithm without any guarantees on the trade-off between consistency and robustness. \citet{DBLP:conf/icml/0004HCHWB24} studied the problem under a model of uncertainty-quantified predictions, in which the algorithm has access to additional, and more powerful, probabilistic information about the prediction error. 

The following natural question arises: 
%Can we find an all-around Pareto-optimal algorithm for \OMS{} that guarantees smoothness, and does not rely on randomization or stochastic assumptions about the prediction?
{\em is there an all-around optimal algorithm, that is simultaneously Pareto-optimal and smooth, and does not rely on randomisation or probabilistic assumptions about the quality of the prediction?}

\subsection{Main contributions}

Our main result answers the above question in the affirmative by giving a deterministic Pareto-optimal algorithm with smooth guarantees. Furthermore, we demonstrate how to leverage smoothness so as to extend this analysis to stochastic settings in which the input and  prediction are random.

In previous works on learning-augmented \OMS{}~\citep{sun_pareto-optimal_2021, benomar2025tradeoffs}, the proposed algorithms select the first price that exceeds a {\em threshold} $\thresh(\pred)$, which is a function of the prediction $\pred$ of the maximum price $\price$ in the sequence.
We revisit the problem by first characterizing the class $\SPO$ of all consistency-robustness Pareto-optimal thresholds $\thresh$. 
%\lst{that achieve an optimal trade-off between consistency and robustness, where $\theta$ is a known upper bound on $\price$}. 
%\ltodo{SA: Here we talk about consistency-robustness thresholds, is it obvious what we mean? ZB: We mentioned before what is a Pareto-optimal algorithm (when we cited Sun et al in th intro), + the reviewers will very probably be familiar with learning-augmented algorithms.} 
%Also we need to explain what $\theta$ is since we have not introduced it yet.}
Next, we focus on a specific family of Pareto-optimal thresholds within $\SPO$ which generalise the algorithm of \citet{sun_pareto-optimal_2021} but also exhibit smoothness guarantees. In particular, our analysis quantifies smoothness in this family, showing it to be inversely proportional to the maximal slope of the corresponding threshold. Guided by this insight, we find the threshold that maximizes smoothness within the class
$\SPO$.

%We use two different measures to quantify the prediction error. The first one is $\eta(\price,\pred) = |\price - \pred|$ and the second is $\error(\price, \pred) = \min(\frac{\price}{\pred}, \frac{\pred}{\price})$.

Furthermore, this quantification of smoothness allows to establish a near-matching lower bound. Specifically, we show that, for a multiplicative definition of the error, no Pareto-optimal algorithm can guarantee better smoothness than our algorithm, for a large range of robustness values. For the additive definition of the prediction error, which is commonly used, we show that our algorithm optimises smoothness for all robustness values, thus attaining the  {\em triple} Pareto front of consistency, robustness and smoothness.

%at least for the regimes in which the robustness $\rob$ is in $[\ub^{-1}, \ub^{-2/3}]$. Note that by the specifications of the problem, the robustness of any algorithm lies in the interval $[\ub^{-1},\ub]$. 

The combination of smoothness and Pareto-optimality of our family of thresholds has direct practical benefits in handling the real-world uncertainty of predictions. When predictions and prices are both tied to a random environment (\eg a financial market), we show how to derive general form bounds in expectation as a function of the distributions of predictions and prices and, for the first time, of their \emph{coupling}. We provide prediction-quality metrics which help us better capture the notion of the ``usefulness'' of a prediction in stochastic environments and give detailed bounds on concrete settings. We also provide a general framework of analysis based on optimal transport.
%\stodo{SA:There is a disconnect between the well-detailed contribution for the deterministic result and the one for the stochastic result}\ltodo{+1}.

%\lc{Unlike prior work, our algorithmic family and analysis method allows us to use inherently uncertain predictions, for which this is a simple model. Prior work requires ad-hoc modifications (such as using the worst-case prediction from $\dpred$), our algorithm functions as-is and its analysis reveals rich performance estimates directly depending on the parameters of $\dpred$. }
%\ltodo{this par didn't fit in the main text, it should fit better in the intro or ccl, so I moved it here}

We validate our theoretical results through numerical experiments, in which we compare our algorithm to the state of the art, by testing it under both synthetic and real data. 
%\stodo{ZB: I am still working on simulations with real data, but not sure I'll have nice plots to include in the paper. I'll let you know in both cases}

\subsection{Related work}
\paragraph{Learning-augmented algorithms.}

Algorithms with predictions have been studied in a large variety of online problems, such as rent-and-buy problems~\citep{DBLP:conf/icml/GollapudiP19}, 
scheduling~\citep{lattanzi2020online}, caching~\citep{DBLP:journals/jacm/LykourisV21}, matching~\citep{antoniadis2020secretary}, packing~\citep{im2021online}, covering~\citep{bamas2020primal} and secretaries~\citet{dutting2024secretaries}. This paradigm also has applications beyond online computation, and has been used to improve the runtime of algorithms for classical problems such as sorting~\citep{bai2023sorting} and graph problems~\citep{azar2022online}, as well as for the design of data structures such as search trees~\citep{lin2022learning},  dictionaries~\citep{DBLP:conf/icml/ZeynaliKH24}, and priority queues \citep{benomar2024learning}. We emphasize that the above lists only some representative works, and we refer to the online repository of~\citet{predictionslist}.

\paragraph{Pareto-optimal algorithms.}
Several studies have focused on consistency-robustness trade-offs in learning-augmented algorithms, \eg~\citep{sun_pareto-optimal_2021,wei2020optimal,DBLP:conf/eenergy/Lee0HL24,DBLP:journals/iandc/Angelopoulos23,bamas2020primal,DBLP:conf/aistats/ChristiansonSW23,almanza2021online}. However, Pareto-optimality imposes constraints which may, in certain cases, compromise smoothness. The brittleness of Pareto-optimal algorithms for problems such as one-way trading was observed by~\citet{elenter2024overcoming}, who proposed a user-defined approach to smoothness, and by~\citet{benomar2025tradeoffs} who relied on randomization. These approaches differ from ours, in that the profile-based framework of~\citet{elenter2024overcoming} does not always lead to an objective and measurable notion of consistency. Moreover, we show that randomization is not necessary to achieve Pareto optimality. 

\paragraph{One-Max Search.} 
\citet{el-yaniv_competitive_1998} showed that the optimal competitive ratio of (deterministic) \OMS{} is $1/\sqrt{\ub}$, under the assumption that each price in the sequence is in $[1,\ub]$, where $\ub$ is a known upper bound on $\price$. This assumption is required in order to achieve a bounded competitive ratio and has remained in all subsequent works on learning-augmented algorithms for this problem, such as~\citep{sun_pareto-optimal_2021,DBLP:conf/aaai/0001KZ22,DBLP:conf/icml/0004HCHWB24, benomar2025tradeoffs}.
The randomized version of \OMS{} is equivalent to the {\em one-way trading} problem, in which the trader can sell fractional amounts. The optimal competitive ratio, for this problem, is $O(1/\log \ub)$~\citep{el-yaniv_competitive_1998}. Pareto-optimal algorithms for one-way trading were given by~\citet{sun_pareto-optimal_2021}, however~\citet{elenter2024overcoming} showed that any Pareto-optimal algorithm for this problem is brittle and thus cannot guarantee smoothness. One-max-search and one-way trading model fundamental settings of trading, and many variants and generalizations have been studied under the competitive ratio, see the survey~\citep{mohr2014online}. One must note that worst-case measures such as the competitive ratio aim to model settings in which no Bayesian assumptions are known to the algorithm designer. There is a very rich literature on optimal Bayesian search, see, 
\eg~\citep{rosenfield1981optimal}.

\section{Preliminaries}
\label{sec:preliminaries}

In the standard setting of the \OMS{} problem, the input consists of a (unknown in advance) sequence of prices
$\prices:=(\prices_i)_{i=1}^n \in [1,\ub]^n$, where the maximal range $\theta$ is known. At each step $i \in [n]$, the algorithm must decide irrevocably whether to accept $\prices_i$, terminating with a payoff of $\prices_i$, or to forfeit $\prices_i$ and proceed to step $i+1$. If no price has been accepted by step $n$, then the payoff defaults to 1. 

The {\em competitive ratio} of an algorithm is defined as the worst-case ratio (over all sequences $\prices$) between the algorithm's payoff and $\price$, the maximum price in the sequence. 
%We say that an algorithm has a {\em threshold} $\thresh$ if it accepts the first price in the sequence that is at least $\Phi$. 
A natural approach to this problem is to use threshold-based algorithms, which select the first price that exceeds a predetermined threshold $\thresh \in [1,\ub]$. We denote such an algorithm by $\A_\thresh$. In particular,
the optimal deterministic competitive ratio is $1/\sqrt{\ub}$ and it is achieved by $\A_{\sqrt{\ub}}$~\cite{el-yaniv_competitive_1998}. Focusing on this class of algorithms is not restrictive, as in worst-case instances any deterministic algorithm performs equivalently to a threshold algorithm (see Appendix \ref{app:sec2}).
%\VP{VP: Is it the natural approach or that any approach can be reduced to that one? Saying it's natural implies that you chose this for some obscure reasons (usually, you do not know how to handle other approaches)} \zb{ZB: On particular instances, which are used to prove impossibility results, any algorithm can be reduced to a threshold algorithm. But this is not true for any instance...}

In the learning-augmented setting, the decision-maker receives a {\em prediction} $\pred$ of the maximum price in the input $\prices$. The payoff of an algorithm $\Alg$ in this setting is denoted by $\Alg(\prices, \pred)$.
In this context, threshold rules are defined as mappings $\thresh: [1, \ub] \to [1, \ub]$ that depend on the prediction. We use again $\A_\thresh$ to denote the corresponding algorithm.
We denote by $\con(\Alg)$ and by $\rob(\Alg)$ the consistency and the robustness of the algorithm, respectively, defined as
\[
\con(\Alg) = \inf_{\prices} \frac{\Alg(\prices, \price)}{\price} \;,
\;
\rob(\Alg) = \inf_{\prices,y} \frac{\Alg(\prices, y)}{\price} \;.
\]
\citet{sun_pareto-optimal_2021} established the Pareto front of the consistency-robustness trade-off, albeit using a different convention (the inverse ratio $\price/\Alg$) for the competitive ratio\footnote{The convention that the competitive ratio of the maximization problem is in $(0,1]$ allows for cleaner bounds on the performance as a function of the prediction error.}. Under our convention, their results show that the Pareto front of consistency and robustness is  the curve:
\begin{align}\label{eq:pareto_optimal_front}
    \left\{\con \rob \ub=1 \text{ for }
    (\con, \rob) \in [\ub^{-1/2},1]\times [\ub^{-1},\ub^{-1/2}] \;
     \right\}\,.
\end{align}
In contrast, we are interested not only in consistency-robustness Pareto-optimality but also in smoothness, namely in the performance as a function of the prediction {\em error} $\eta(\price, \pred):= |\price - \pred|$. An algorithm is called {\em smooth} if the ratio 
$\Alg(\prices,\pred)/{\price}$ is lower bounded by a (non-constant) continuous and monotone function of the error $\eta(\price,\pred)$.

\section{Pareto-Optimal and Smooth Algorithms}\label{sec: deterministic predictions}

In this section, we present our main result in regards to deterministic learning augmented algorithms, namely a Pareto-optimal and smooth family of algorithms for \OMS{}.
Our approach is outlined as follows. We begin by characterising the class of all thresholds $\SPO$ which induce Pareto-optimal algorithms (Theorem~\ref{thm:thresh-pareto-optimal}). We then present a family of thresholds in $\SPO$, parametrised by a value $\rho\in [0,1]$ ( Eq.~\eqref{eq:rho.family}) that characterises their smoothness
%This parametrisation gives a quantification of the smoothness of these algorithms, as a function of $\rho$ (Theorem~\ref{thm: [deterministic Pareto-Optimal smooth Algorithm] consistency-robustness-smoothness complex}). 
and we show that $\rho = 1$ yields the best smoothness guarantees.
%Choosing $\rho=1$ maximises the smoothness. 
We complement this result with Theorem~\ref{thm:lower-bound-smoothness}, which shows that not only is our algorithm smooth, but any Pareto-optimal algorithm cannot improve on its smoothness.
%, in a strong sense.

%In what follows, we present a family of deterministic algorithms that achieve Pareto-optimal consistency and robustness while also being smooth. Remarkably, these algorithms overcome the limitations of previous approaches, combining their advantages without compromising on performance.

%\zb{ZB: I wrote below a quantification of the (negative) gap in consistency robustness from \cite{benomar2025tradeoffs}, but I  don't think this discussion is necessary, given the space constraint.}
%\sa{SA: I agree. It is mostly in case a reviewer criticizes us. We can move the discussion to the appendix. I propose a paragraph below. My understanding is that the gap in the previous work is significant}

Before we discuss our algorithms, we note that the randomized algorithm of~\cite{benomar2025tradeoffs} has a measurable and significant deviation from the Pareto front, even in comparison to deterministic algorithms; see Appendix~\ref{appx:comparison-with-prior-smooth} for the expression of the deviation. Furthermore, the guarantees of their algorithm hold in expectation only, whereas the results we obtain do not rely on randomisation.

We now proceed with the technical statements. 
%For any mapping $\thresh: [1,\ub] \to [1,\ub]$, we define the algorithm $\A_\thresh$ for \OMS{}.  Given a sequence $(\prices_i)_{i=1}^n$ and a prediction $\pred$ of $\price$, the algorithm selects the first price that exceeds the threshold $\thresh(\pred)$. 
Theorem~\ref{thm:thresh-pareto-optimal} below provides a characterization of all thresholds that yield Pareto-optimal levels of consistency and robustness.

%{\color{magenta}VP: what is $\A_\thresh$ below ?}
\begin{restatable}{theorem}{AllParetoOptimalThresholds}\label{thm:thresh-pareto-optimal}
For any fixed of robustness $\rob$, the set of all thresholds $\thresh: [1,\ub] \to [1,\ub]$ such that $\A_\thresh$ has robustness $\rob$ and consistency $1/\rob\ub$ is 
\begin{align*}
\SPO := \{ \thresh  \;:\;
&\forall z \in [1,\ub]: \rob \ub \leq \thresh(z) \leq \frac{1}{\rob}\\
&\forall z \in [\rob \ub,\ub]: \frac{z}{\rob \ub} \leq \thresh(z) \leq z \}.
\end{align*}
\end{restatable}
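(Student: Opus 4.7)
The plan is to characterise $\SPO$ via a direct case-analysis of the worst-case payoff of the threshold algorithm $\A_\thresh$. I would first observe that, for any prediction $\pred$ and any sequence $\prices$ with maximum price $\price$, the worst-case ratio $\A_\thresh(\prices,\pred)/\price$ equals $\thresh(\pred)/\price$ when $\thresh(\pred)\leq\price$—the adversary can insert a price exactly at $\thresh(\pred)$ before a price equal to $\price$, forcing acceptance at $\thresh(\pred)$—and equals $1/\price$ otherwise, since no price reaches the threshold and the algorithm defaults to $1$.

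Given this formula, the robustness constraint is obtained by fixing $\pred$ and minimising over $\price$: the first branch is minimised at $\price=\ub$ and the second as $\price\to\thresh(\pred)^{-}$, yielding an inner infimum of $\min(\thresh(\pred)/\ub,\, 1/\thresh(\pred))$. Requiring this to be at least $\rob$ for every $\pred\in[1,\ub]$ gives exactly the first pair of inequalities $\rob\ub\leq\thresh(\pred)\leq 1/\rob$. For the consistency constraint I would specialise to $\pred=\price$: the ratio becomes $\thresh(\price)/\price$ when $\thresh(\price)\leq\price$ and $1/\price$ otherwise. Demanding consistency $\geq 1/(\rob\ub)$ yields two conditions: whenever $\thresh(\price)>\price$ one needs $\price\leq\rob\ub$, which is the contrapositive of $\thresh(z)\leq z$ for $z\in[\rob\ub,\ub]$; whenever $\thresh(\price)\leq\price$ one needs $\thresh(\price)\geq\price/(\rob\ub)$, giving the lower bound on the same interval. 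On the complementary range $z\in[1,\rob\ub)$, the consistency condition is redundant: the robustness bound $\thresh(z)\geq\rob\ub$ together with $\rob\ub\geq 1\geq z/(\rob\ub)$ (which uses $\rob\ub\leq\sqrt{\ub}$) already imply it, so no new constraint appears there.

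The reverse direction follows from the same case-analysis applied in the other direction: every $\thresh\in\SPO$ makes the worst-case ratio no smaller than $\rob$ (for any $\pred,\prices$) and no smaller than $1/(\rob\ub)$ (when $\pred=\price$), so robustness and consistency are at least $\rob$ and $1/(\rob\ub)$ respectively; Pareto-optimality of this pair (cf.~\eqref{eq:pareto_optimal_front}) then forces equality. I expect the main obstacle to be the careful bookkeeping on the overlap region $z\in[\rob\ub,\ub]$, where robustness contributes the upper bound $\thresh\leq 1/\rob$ while consistency contributes $\thresh\leq z$; verifying that these coexist with the two lower bounds reduces to the elementary comparisons $\rob\ub\leq z/(\rob\ub)\leq\min(z,1/\rob)$ on the appropriate subintervals, all of which are consequences of $\rob\in[\ub^{-1},\ub^{-1/2}]$.
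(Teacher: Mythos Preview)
Your proposal is correct and follows essentially the same approach as the paper: a case analysis on whether $\thresh(\pred)\le\price$ (threshold reached, worst-case accepted price $\thresh(\pred)$) or $\thresh(\pred)>\price$ (default payoff $1$), from which the robustness constraints $\rob\ub\le\thresh(\pred)\le 1/\rob$ and the consistency constraints $\thresh(z)\in[z/(\rob\ub),z]$ on $(\rob\ub,\ub]$ are read off directly, and then the reverse inclusion is verified by the same case split. The only cosmetic difference is that the paper instantiates the adversary via the explicit discrete sequences $\instance(q)$ and passes to the limit $n\to\infty$, whereas you work directly with the continuous worst-case formula by letting the adversary place a price exactly at $\thresh(\pred)$; your version is slightly cleaner but the logic is identical.
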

\Cref{fig:thresh_PO} illustrates the set $\SPO$ (shaded).
%\ltodo{ we can remove $\range\to\range$ here}

We now turn to identifying smooth algorithms within the class $\SPO$. Let us begin by giving the intuition behind our approach. Our starting observation is that the algorithm of~\citet{sun_pareto-optimal_2021} uses the threshold
\[
\thresh^0_\rob(y) := \left\{
 \begin{array}{lll}
    \rob \ub & \text{if } y \in [1,\rob \ub) \\[5pt]
     \varphi_\rob(\pred) &\text{if } \pred \in [\rob \ub, 1/\rob)\\[8pt]
     1/\rob & \text{if } \pred \in [1/\rob, \ub]
\end{array}
\right.\,,
\]
wherein 
\[
\varphi_\rob : z \mapsto \frac{\rob \ub - 1}{1 - \rob} + \frac{1 - \rob^2 \ub}{1 - \rob} \cdot \frac{z}{\rob \ub}
\]
is the line defined by $(\rob\ub, \rob \ub)$ and $(\ub, 1/\rob)$. The function $\thresh^0_\rob$ is illustrated in Figure \ref{fig:thresh_rho}, in dashed orange.

The analysis of \citet{benomar2025tradeoffs} revealed that the brittleness of this algorithm arises from the discontinuity of $\thresh^0_\rob$ at the point $1/\rob$, as illustrated in  \cref{fig:thresh_rho}. This observation suggests that the smoothness of an algorithm $\A_\thresh$ is influenced by the maximal slope of the function $z \mapsto \thresh(z)$. 
%Building on this intuition,\ltodo{I found this step quite jarring. Also we don't need to restrict ourselves to $\rho=1$ this early, we only do it after thm 3.3. 
%SA: Agree, but any discussion of the intuition should lead that $\rho$ has to be as high as 1. If this is not "obvious" but rather follows from analysis, so much the better, but we should rephrase this intuition}
%for fixed robustness $\rob$, we study the smoothness properties of the algorithm with threshold $\thresh^1_\rob: z \mapsto \max(\rob \ub, \varphi_\rob(z))$, which minimizes the maximum slope among thresholds in $\SPO$. 
%To further substantiate this insight,
To confirm this intuition,  we analyse a family of algorithms $\{\algrho\}_{\rho \in [0,1]}$, associated with the thresholds %
%that use thresholds interpolating $\thresh^0_\rob$ of \cite{sun_pareto-optimal_2021} and $\thresh^1_\rob$. For all $\rho \in [0,1]$, $\algrho$
$\{\thresh^\rho_\rob(\pred)\}_\rho$ defined by
\begin{equation}
\hspace{-1em}\left\{
 \begin{array}{lll}
    \hspace{-5pt}\rob \ub & \hspace{-4pt}\mbox{if } y \in [1,\rob \ub) \\[5pt]
     \hspace{-5pt}\varphi_\rob(y) & \hspace{-4pt}\mbox{if } \pred \in [\rob \ub, \tfrac{1}{\rob})\\
     \hspace{-5pt}\varphi_\rob(\tfrac{1}{\rob}) + \dfrac{\tfrac{1}{\rob} - \varphi_\rob(\tfrac{1}{\rob})}{\rho (1-\rob)} \cdot \dfrac{\pred}{\rob \ub} & \hspace{-4pt}\mbox{if } \pred \in [\tfrac{1}{\rob}, \tfrac{1}{\rob} + \rho(\ub - \tfrac{1}{\rob}))\\[8pt]
     \hspace{-5pt}1/\rob & \hspace{-4pt}\mbox{if } \pred \in [\tfrac{1}{\rob} + \rho(\ub - \tfrac{1}{\rob}), \ub]
\end{array}
\right. \hspace{-4pt}.
\label{eq:rho.family}
\end{equation}

Figure~\ref{fig:thresh_rho} illustrates the threshold functions  $(\thresh^\rho_\rob)_{\rho\in[0,1]}$. Notably, the case $\rho = 0$ corresponds to the algorithm of \citet{sun_pareto-optimal_2021}, while at the other extreme ($\rho = 1$) we obtain the threshold $\thresh^1_\rob: z \mapsto \max(\rob \ub, \varphi_\rob(z))$.

\begin{figure}
    \centering
    \begin{minipage}{0.49\textwidth}
        \centering
        \includegraphics[width=\linewidth]{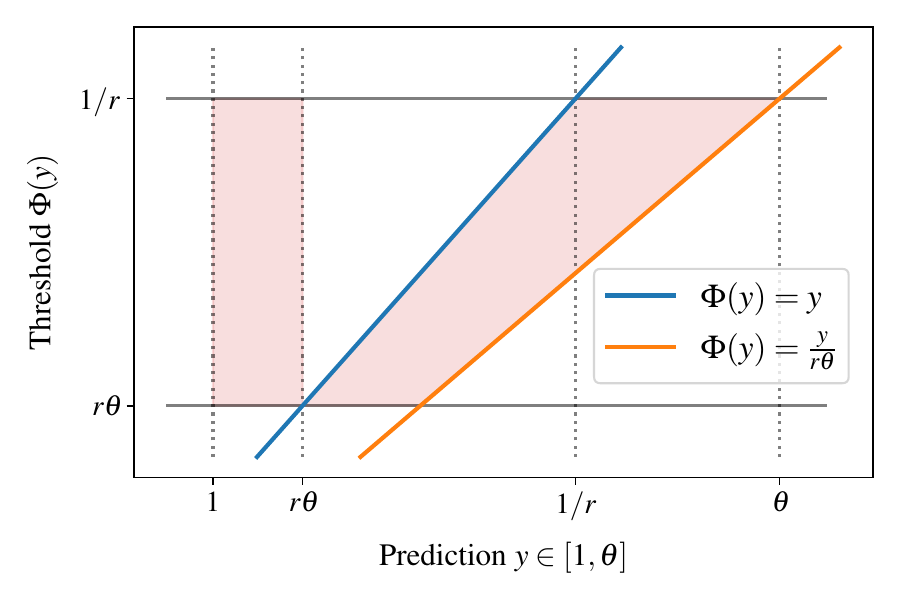}
        \caption{
        The set $\SPO$ of Theorem~\ref{thm:thresh-pareto-optimal} is depicted shaded. A threshold $\thresh$ is $\rob$-robust and $1/\rob\ub$-consistent if and only if its graph lies in this shaded area.
        }
        \label{fig:thresh_PO}
    \end{minipage}
    \hfill
    \begin{minipage}{0.49\textwidth}
        \centering
        \includegraphics[width=\linewidth]{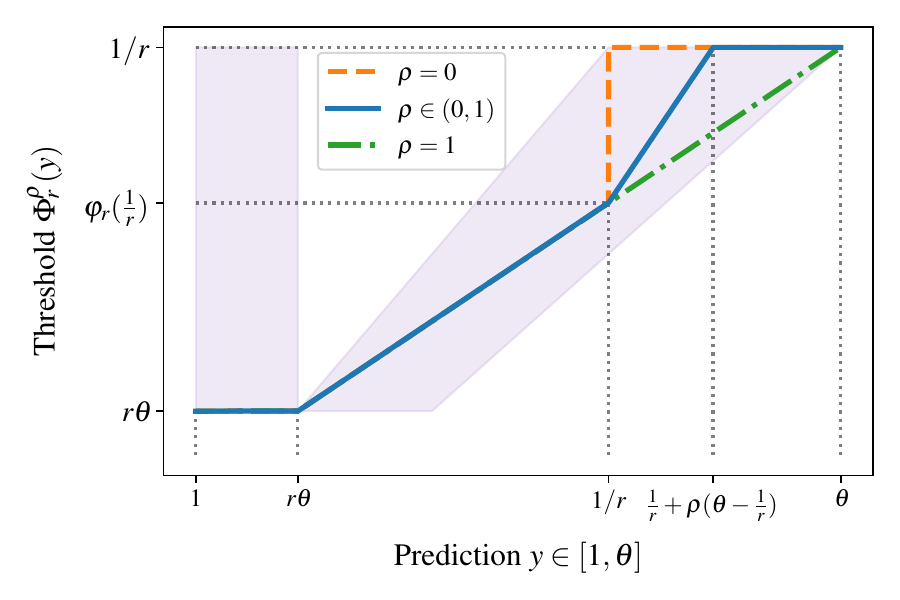}
        \caption{
        The threshold functions $\thresh^\rho_r$ of $\algrho$, as defined by Eq.~\eqref{eq:rho.family}. Note that all are equal on $[1,\rob^{-1})$.
        }
        \label{fig:thresh_rho}
        \end{minipage}
\end{figure}

%Also, I would reorder this to argue first about the quality of the measure then introduce it.}
%To establish smoothness guarantees, we adopt a multiplicative prediction error measure, defined as 

The standard definition of smoothness involves demonstrating a continuous degradation of algorithm performance as a function of the prediction error $\eta(\price, \pred) = |\price - \pred|$. A key limitation of $\eta$ is its sensitivity to rescaling---multiplication of the instance
by a constant factor---which makes it less suitable in the context of competitive analysis.  
A common solution is to express bounds in terms of the relative error $\eta(\price, \pred) / \price = |1 - \frac{\pred}{\price}|$, as is often done in prior work on learning-augmented algorithms. However, this introduces another complication: the asymmetry between $\pred$ and $\price$.  

To overcome these issues, we use a multiplicative error measure that retains the desirable properties of scale invariance and symmetry:  
\begin{align}
    \error:(\price,\pred)\mapsto \min\left\{\frac{\price}{\pred}\,,\,\frac{\pred}{\price}\right\}\in [\ub^{-1},1]\;.
\end{align}
%which better captures the nature of the \OMS{} search problem. 
%Unlike the additive error $\additiveerror(\price, \pred) = |\price - \pred|$, this measure is scale-invariant, making it more appropriate for analyzing competitive ratios. 
Note that a perfect prediction corresponds to $\error(\price, \pred) = 1$.
We primarily focus on the multiplicative error $\error$ in the remainder of the paper. However, we also provide smoothness results using the additive error measure $\eta$ for completeness, in Section~\ref{sec:additive-error-smooth}.

%Unlike the additive error $\additiveerror(\price, \pred) = |\price - \pred|$, this measure is scale-invariant and thus more suitable for analysing competitive ratios. Nevertheless, our smoothness guarantees extend to the additive definition of the error, as discussed in \cref{appendix:additive-error}. Using $\error$, the following theorem quantifies the smooth performance of the family of thresholds $(\thresh_\rob^\rho)_{\rho\in[0,1]}$.

\subsection{Multiplicative error \texorpdfstring{$\error$}{}}
This first theorem establishes smoothness guarantees of the family of algorithms $\{\A^\rho_\rob\}_{\rob,\rho}$ with respect to the multiplicative error measure $\error$.

\begin{restatable}{theorem}{ConsistencyRobustnessSmoothnessComplex}\label{thm: [deterministic Pareto-Optimal smooth Algorithm] consistency-robustness-smoothness complex}
The family $\{\algrho\}_{\rob, \rho}$ satisfies
\begin{equation}\label{eq: [deterministic Pareto-Optimal smooth Algorithm/ THM consistency-robustness-smoothness complex] PR bound}
\frac{\algrho(\prices, \pred)}{\price} \geq 
\max\left(
\rob, \frac{1}{\rob \ub} {\error(\price,\pred)}^{\smthr}
\right)\;,
\end{equation}
%\ltodo{Will it be clear to readers that this bound holds pointwise in $p,y$ or should we add a $\forall(p,y)\in[1,\theta]^{n+1}$ to \eqref{eq: [deterministic Pareto-Optimal smooth Algorithm/ THM consistency-robustness-smoothness complex] PR bound}? SA: I think it is fine as is.}
with $\smthr := \max\left(1, \frac{1}{\rho}\big(\tfrac{\ln \ub}{ \ln (\rob \ub)}-2\big)\right)$, for  $\rho\in[0,1]$.
\end{restatable}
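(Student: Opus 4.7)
The plan is to establish the two bounds in the max separately. For the robustness bound $\algrho(\prices, \pred)/\price \geq \rob$, I would invoke Theorem~\ref{thm:thresh-pareto-optimal} after verifying that $\thresh^\rho_\rob \in \SPO$ for every admissible $\rob$ and $\rho \in [0,1]$. The value-range constraint $\thresh^\rho_\rob(z) \in [\rob\ub, 1/\rob]$ is transparent from \eqref{eq:rho.family}; the slope condition $z/(\rob\ub) \leq \thresh^\rho_\rob(z) \leq z$ on $[\rob\ub, \ub]$ can be checked at the endpoints of each piece and then extended by linearity, with the only nontrivial ingredient being $\rob\ub \geq 1$ (which holds since $\rob \geq \ub^{-1}$).

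For the smoothness bound, I would use the elementary observation that the threshold algorithm collects payoff at least $\thresh^\rho_\rob(\pred)$ whenever some price exceeds the threshold, and payoff $1$ otherwise. I then case-split on the sign of $\pred - \price$ (fixing the form of $\error$) and on the piece of \eqref{eq:rho.family} in which $\pred$ lies. The under-prediction case ($\pred \leq \price$), and the over-prediction case in which the threshold is reached ($\price \geq \thresh^\rho_\rob(\pred)$), both reduce after rearrangement to the $\SPO$ lower bound $\thresh^\rho_\rob(\pred) \geq \pred/(\rob\ub)$ combined with a monotonicity argument in $\price$ that uses only $\smthr \geq 1$.

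The step that dictates the precise form of $\smthr$ is the over-prediction case where $\price < \thresh^\rho_\rob(\pred)$ and the algorithm earns only the default payoff $1$. Here the target inequality rearranges to
\[
\rob\ub \cdot (\pred/\price)^{\smthr} \;\geq\; \price,
\]
which I expect to be binding as $\price \to \thresh^\rho_\rob(\pred)^{-}$. Sweeping $\pred$ over the third and fourth pieces of \eqref{eq:rho.family}, the binding instance appears at $\pred = 1/\rob + \rho(\ub - 1/\rob)$, where $\thresh^\rho_\rob(\pred) = 1/\rob$. Writing $u := \rob\ub$, the requirement becomes $(1 + \rho(u-1))^{\smthr} \geq \ub/u^{2}$. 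I would then invoke the logarithmic concavity inequality $\ln(1 + \rho(u-1)) \geq \rho \ln u$ (valid for $u \geq 1$ and $\rho \in [0,1]$) to replace $1 + \rho(u-1)$ by $u^{\rho}$, yielding the sufficient condition $\smthr \geq \tfrac{1}{\rho}\bigl(\tfrac{\ln \ub}{\ln(\rob\ub)} - 2\bigr)$ as in the statement. The outer $\max$ with $1$ absorbs the regime near $\rob = \ub^{-1/2}$ where the logarithmic term is non-positive, and simultaneously ensures compatibility with the $\smthr \geq 1$ requirement from the easier subcases.

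The main obstacle is the piecewise bookkeeping: systematically verifying the target inequality across the four pieces of \eqref{eq:rho.family} and in both prediction regimes, and confirming that the default-payoff subcase above is indeed the binding one. The conceptual core of the argument is the concavity comparison which packages the tight requirement on $\smthr$ into the clean logarithmic expression; everything else is careful but routine computation using the explicit formulas in \eqref{eq:rho.family}.
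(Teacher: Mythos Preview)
Your plan is correct and mirrors the paper's proof: both case-split over the four pieces of $\thresh^\rho_\rob$ and over whether the threshold is reached, and both isolate the log-concavity comparison $\ln(1+\rho(u-1)) \ge \rho\ln u$ (with $u=\rob\ub$) as the step that fixes $\smthr$. The paper additionally packages the claim that the ratio $\thresh^\rho_\rob(\pred)^{\smth+1}/\pred^{\smth}$ is maximized at an endpoint of each linear piece into a separate calculus lemma---you will need this (or an equivalent) to rigorously justify that your ``binding instance'' really is binding on piece~3---and it explicitly verifies the piece-2 default-payoff case, which you should not skip since it independently forces $\smthr \ge \tfrac{\ln\ub}{\ln(\rob\ub)}-2$ (and hence is covered by, but not subsumed in, your piece-3 computation).
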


%Theorem~\ref{thm: [deterministic Pareto-Optimal smooth Algorithm] consistency-robustness-smoothness complex} shows that the smoothness of this class of algorithms is \zb{The smoothness is quantified by the exponent because of the error measure. It is not something proper to the class of algorithms we considered, and not really a result of the theorem, as the sentence suggests.}
The smoothness in Theorem~\ref{thm: [deterministic Pareto-Optimal smooth Algorithm] consistency-robustness-smoothness complex} is quantified by the exponent $\smthr$ of $\error$.
A smaller value of $\smthr$ results in slower degradation of the bound as a function of prediction error,  \ie improved smoothness. In the limit $\rho \to 0$, the exponent $\smthr$ becomes arbitrarily large, which results in extreme sensitivity to prediction errors, \ie brittleness. 
%In this case, even a slight imperfection in the prediction, despite being arbitrarily close to the true maximum price, results in degradation of the bound from $1/\rob \ub$ to $\rob$. 
In contrast, the best smoothness is achieved by $\rho = 1$, yielding an exponent
%\VP{yielding an almost optimal exponent of}
%\stodo{The near-optimality becomes clear only later, in particular the "near" part.}
\[
\smth:=s_1=\max\left(1, \tfrac{\ln \ub}{ \ln (\rob \ub)} - 2\right)\;.
\]
The above positive result naturally raises the question: is the smoothness guarantee of Theorem~\ref{thm: [deterministic Pareto-Optimal smooth Algorithm] consistency-robustness-smoothness complex} optimal among Pareto-optimal algorithms?
This question is addressed by \cref{thm:lower-bound-smoothness} which gives a lower bound on the smoothness achievable by any Pareto-optimal algorithm.

\begin{restatable}{theorem}{LowerBoundSmoothness}\label{thm:lower-bound-smoothness}
Let $\A$ be any algorithm with robustness $\rob$ and consistency $1/\rob \ub$. Suppose that $\A$ satisfies for all $\prices \in [1,\ub]^n$ and $\pred \in [1,\ub]$ that
\begin{equation}
\frac{\A(\prices, \pred)}{\price} \geq 
\max\left(
\rob, \frac{1}{\rob \ub} \error(\price,\pred)^{\smthu}
\right)\;
\end{equation}
for some $\smthu \in \mathbb{R}$, then necessarily $\smthu \geq \tfrac{\ln \ub}{ \ln (\rob \ub)}-2$.
\end{restatable}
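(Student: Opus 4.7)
The plan is to exhibit a single family of near-worst-case instances at which any $\A$ with the assumed consistency-robustness trade-off simultaneously attains the robustness ratio $\rob$ and has a multiplicative error bounded away from $1$. Substituting this pair into the smoothness hypothesis then collapses it to a direct algebraic inequality in $\smthu$. Throughout, I would fix the prediction $\pred = \ub$.

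The main technical step is the following claim: under $\pred = \ub$, the algorithm $\A$ cannot accept any price strictly smaller than $1/\rob$ on any prefix. Suppose, toward contradiction, that $\A$ commits to some $p < 1/\rob$ at some step of a prefix $(p_1,\dots,p_k)$. I would then extend the instance by a single additional price of value $\ub$: the resulting full sequence satisfies $\price = \ub = \pred$, so the consistency hypothesis forces $\A(\prices,\pred)/\price \geq 1/(\rob\ub)$. However, $\A$ has already committed to $p$, so its payoff is $p < 1/\rob$, giving ratio $p/\ub < 1/(\rob\ub)$, a contradiction. Crucially, this argument does not rely on reducing $\A$ to a threshold algorithm and so applies to an arbitrary deterministic $\A$.

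Given this, I would take the single-price instance $\prices = (p)$ with $p \in (\rob\ub,1/\rob)$ and $\pred = \ub$. The algorithm does not accept, so the default payoff is $1$ with $\price = p$, giving $\A(\prices,\pred)/\price = 1/p$ and $\error(\price,\pred) = p/\ub$. Plugging into the smoothness hypothesis and letting $p \to (1/\rob)^-$ yields $\rob \geq (\rob\ub)^{-(1+\smthu)}$, equivalently $\rob^{\smthu+2}\ub^{\smthu+1} \geq 1$. Taking logarithms (valid since $\rob\ub > 1$ whenever $\rob > \ub^{-1}$) and using the identity $-2\log\rob - \log\ub = \log\ub - 2\log(\rob\ub)$ produces exactly $\smthu \geq \log\ub/\log(\rob\ub) - 2$. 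The only delicate point in the argument is justifying the extension-by-$\ub$ trick for an algorithm that need not be a threshold rule; once that is handled by invoking consistency on the extended sequence, the remainder is routine algebra.
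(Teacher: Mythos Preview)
Your proof is correct and follows essentially the same approach as the paper: fix $\pred=\ub$, argue that the algorithm cannot accept any price below $1/\rob$, let the maximum price tend to $1/\rob$ from below so that the attained ratio is $\rob$ while $\error\to 1/(\rob\ub)$, and pass to the limit in the smoothness hypothesis to obtain $\rob\ge(\rob\ub)^{-(u+1)}$. The only difference is cosmetic: the paper obtains the ``no acceptance below $1/\rob$'' fact by reducing to a threshold rule on the instances $\instance(q)$ and invoking the characterization of Pareto-optimal thresholds (Theorem~\ref{thm:thresh-pareto-optimal}) to get $\thresh(\ub)=1/\rob$, whereas your extension-by-$\ub$ argument derives it directly from consistency and is slightly more self-contained.
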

%\ltodo{In this thm and only this theorem, we have for all $\price$ and for all $\pred$. SA: Again I believe it is fine. It is implied in the positive results, and it is required here}

This lower bound shows the optimality of the exponent achieved by $\A^1_\rob$ for $\rob \leq \ub^{-2/3}$. 
%Indeed, for a robustness level $\rob \in [\ub^{-1}, \ub^{-2/3}]$, it holds that $\tfrac{\ln \ub}{\ln (\rob \ub)} - 2 \geq 1$. 
%VP: I do not think those explicit computations are necessary
Indeed, if this condition is satisfied, then Theorem \ref{thm: [deterministic Pareto-Optimal smooth Algorithm] consistency-robustness-smoothness complex} gives
\[
\frac{\A^1_\rob(\prices, \pred)}{\price} \geq 
\max\left(
\rob, \frac{1}{\rob \ub} \error(\price,\pred)^{\tfrac{\ln \ub}{ \ln (\rob \ub)}-2}
\right)\;.
\]
%which is the best possible exponent for $\error$, as shown in Theorem \ref{thm:lower-bound-smoothness}. 

This implies that, for $\rob \in [\ub^{-1}, \ub^{-2/3}]$, Algorithm $\A^1_\rob$ attains the triple Pareto-optimal front for consistency, robustness, and smoothness among all deterministic algorithms for \OMS{}. For $\rob \in [\ub^{-2/3}, \ub^{-1/2}]$, $\A^1_\rob$ remains smooth and Pareto-optimal; however, its smoothness guarantee might admit further improvement.

We conclude this section with some observations. First, note that many learning-augmented algorithms in the literature express consistency and robustness in terms of a parameter $\lambda \in [0,1]$, which reflects the decision-maker's trust in the prediction. 
%For instance, in~\citet{sun_online_2024}, the consistency $\con(\lambda)$ and the robustness $\rob(\lambda)$ of this family satisfy
%\[
%\con(\lambda) \rob(\lambda) \ub = 1 
%\quad, \quad
%\frac{1}{\con(\lambda)} = \frac{\lambda}{\rob(\lambda)} + 1 - \lambda\;.
%\]
A simple yet effective parametrisation of $\A^1_\rob$ can be achieved by setting $\rob = \ub^{-(1-\lambda/2)}$. Noting that $1/\rob \ub = \ub^{-\lambda/2}$, and $\frac{\ln \ub}{\ln (\rob \ub)} = \frac{2}{\lambda}$, the result of Theorem \ref{thm: [deterministic Pareto-Optimal smooth Algorithm] consistency-robustness-smoothness complex} can be restated, with this parametrization, as 
\[
\frac{\A^1_\rob(\prices, \pred)}{\price} \geq 
\max\left(
\ub^{-(1-\frac{\lambda}{2})}, \ub^{-\frac{\lambda}{2}} \error(\price,\pred)^{\max(1,\frac{2}{\lambda} - 2)}
\right)\;.
\]
A second observation is that the obtained bounds can be readily adapted to the inverse ratio $\price/\A^1_\rob(\prices,\pred)$, which is also commonly used to define the competitive ratio in \OMS{}~\cite{el-yaniv_competitive_1998}.
Specifically, by defining the inverse error as $\Bar{\error} = 1/\error = \max\left\{\frac{\price}{\pred}, \frac{\pred}{\price}\right\}$, we obtain 
\[
\frac{\price}{\A^1_\rob(\prices, \pred)} \leq 
\min\left(
\ub^{1-\frac{\lambda}{2}}, \ub^{\frac{\lambda}{2}} \Bar{\error}(\price,\pred)^{\max(1,\frac{2}{\lambda} - 2)}
\right)\;.
\]

\subsection{Extension to the additive error \texorpdfstring{$\eta$}{}}\label{sec:additive-error-smooth}
%Last, we note that 
While the multiplicative error provides a more natural fit for the problem at hand, we also derive smoothness guarantees for $\A^1_\rob$ using the additive error $\additiveerror(\price,\pred) = |\price - \pred|$. Moreover, we prove that the smoothness it achieves is optimal for $\eta$, for all possible values of $\rob \in [\ub^{-1}, \ub^{-1/2}]$.
%while it was only optimal when $\rob \leq \ub^{-2/3}$ for $\error$.

\begin{restatable}{theorem}{LowerBoundSmoothnessAdditive}\label{thm: [deterministic Pareto-Optimal smooth Algorithm] additive smoothness}
Let $\A$ be any algorithm with robustness $\rob$ and consistency $1/\rob \ub$. Suppose that $\A$ satisfies for all $\prices \in [1,\ub]^n$ and $\pred \in [1,\ub]$ that
\begin{equation}\label{eq:A-smooth-additive}
\frac{\A(\prices, \pred)}{\price} \geq 
\max\left(
\rob, \frac{1}{\rob \ub} - \beta \frac{\eta(\price, \pred)}{\price}
\right)\;
\end{equation}
for some $\beta \geq 0$, then necessarily $\beta \geq \cstbeta^*$, where
\[
\cstbeta^* := \frac{1-\rob^2\ub}{\rob \ub} \max\left( \frac{1}{1-\rob}, \frac{1}{\rob \ub - 1} \right).
\]
Moreover, Algorithm $\A^1_\rob$ satisfies  \eqref{eq:A-smooth-additive} with $\beta = \beta^*$, which shows its optimality.
%\VP{can we add again something like "yielding its optimality" ?}
\end{restatable}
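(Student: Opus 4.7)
The plan is to prove the theorem in two halves: a lower bound exhibited by explicit adversarial $(\prices,\pred)$ pairs that forces $\beta \geq \cstbeta^*$ for any Pareto-optimal $\A$, and a matching upper bound verified by case analysis on $\A^1_\rob$. By Theorem~\ref{thm:thresh-pareto-optimal} together with the reduction to threshold algorithms noted in Section~\ref{sec:preliminaries}, I may restrict to $\A = \A_\thresh$ with $\thresh \in \SPO$. The constraints of $\SPO$ immediately pin $\thresh(\rob\ub) = \rob\ub$ and $\thresh(\ub) = 1/\rob$; these two rigid endpoints drive both halves of the argument.

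For the lower bound, I would exhibit two adversarial instances, each saturating one of the two terms inside the $\max$ defining $\cstbeta^*$. The first uses $\pred = \rob\ub$ and $\prices = (\rob\ub, \ub)$: the algorithm immediately accepts the first price $\rob\ub$, producing $\A/\price = \rob$ and $\eta/\price = 1-\rob$, so \eqref{eq:A-smooth-additive} forces $\beta \geq (1-\rob^2\ub)/(\rob\ub(1-\rob))$. The second uses $\pred = \ub$ and a single price $\prices = (1/\rob - \eps)$: since $\thresh(\ub) = 1/\rob$, the algorithm rejects and defaults to $\A = 1$; letting $\eps \to 0$ yields $\beta \geq (1-\rob^2\ub)/(\rob\ub(\rob\ub - 1))$. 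Taking the maximum of these two inequalities gives $\beta \geq \cstbeta^*$.

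For the upper bound, I would verify \eqref{eq:A-smooth-additive} for $\A^1_\rob$ with $\beta = \cstbeta^*$ by case analysis on the regime of $\pred$ and on whether the algorithm accepts some price or defaults to $1$. When $\pred \leq \rob\ub$ the threshold is $\rob\ub$: the only non-trivial sub-case is $\A = \rob\ub$ with $\price \geq \rob^2\ub^2$, where the adversary's best choice is $\pred = \rob\ub$, $\price = \ub$ and the required bound reduces exactly to the first term of $\cstbeta^*$. When $\pred \in (\rob\ub, \ub]$ the threshold is $\varphi_\rob(\pred)$ and the argument hinges on two affine identities. First, the ratio $(\ub - \rob\ub\,\varphi_\rob(\pred))/(\rob\ub(\ub - \pred))$ is constant and equal to $(1-\rob^2\ub)/(\rob\ub(1-\rob))$ for every $\pred \in [\rob\ub, \ub]$, which (after showing this ratio is the worst over $\price \in (\rob\ub\,\varphi_\rob(\pred), \ub]$) handles the $\A = \varphi_\rob(\pred)$ sub-case. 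Second, the ratio $(\varphi_\rob(\pred) - \rob\ub)/(\rob\ub(\pred - \varphi_\rob(\pred)))$ simplifies to $(1-\rob^2\ub)/(\rob\ub(\rob\ub - 1))$ independently of $\pred$, and handles the $\A = 1$ sub-case (worst at $\price \to \varphi_\rob(\pred)^{-}$). The main obstacle is carrying out this case analysis cleanly without missing edge cases; the two $\pred$-independent simplifications, both direct consequences of the explicit affine form of $\varphi_\rob$, are the non-routine step that makes $\cstbeta^*$ expressible as a maximum of two simple terms rather than as a supremum over $\pred$.
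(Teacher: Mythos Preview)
Your proposal is correct and follows essentially the same route as the paper. The lower bound uses the same two adversarial instances (exploiting the pinned values $\thresh(\rob\ub)=\rob\ub$ and $\thresh(\ub)=1/\rob$), and the upper bound is the same case analysis; your two ``$\pred$-independent affine identities'' are exactly the paper's Lemma on $\varphi_\rob(z)-\rob\ub=\frac{1-\rob^2\ub}{\rob\ub-1}(z-\varphi_\rob(z))$ together with the direct computation it performs in the sub-case $\price\in[\pred,\ub]$.
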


The above theorem establishes that  $\A^1_\rob$ has the best possible smoothness guarantee amongst all Pareto-optimal algorithms.
%having a robustness $\rob$ and consistency $1/\rob \ub$. 
Consequently, it achieves a triple Pareto-optimal trade-off between consistency, robustness, and smoothness.

\section{Stochastic One-Max Search}\label{sec: stochastic predictions}

One-max-search under competitive analysis is a worst-case abstraction of online selection
%decision-making,
which is highly skewed towards pessimistic scenarios. This is an approach rooted in theoretical computer science that has the benefit of worst-case guarantees, but does not capture the stochasticity of real markets, \eg \citep{cont_financial_2004,donnelly_optimal_2022}. In contrast,  in mathematical (and practical) finance, probabilistic analyses such as risk management are preferred, \eg \cite{MERTON1975621}.  While reconciling the two approaches remains a very challenging perspective, we aim to narrow the very large gap between the worst-case and stochastic regimes by leveraging a probabilistic approach. This necessitates algorithms that can be robust to the randomness of the market, and to this end, the established smoothness of our algorithm (\cref{sec: deterministic predictions}) will play a pivotal role, as we will show.  A probabilistic analysis can thus yield two main practical benefits: 1) estimate performance under price distributions obtained from financial modelling; 2) leverage the consistency-robustness trade-off to handle risk.

%the ability to directly manage the risk of using predictions by leveraging the trade-off presented by the consistency-robustness Pareto-front in the deterministic case.

In the stochastic formulation of \OMS{}, we now consider the prices $(\Prices_i)_{i=1}^n$ to be random variables whose maximum is $\Price\sim\dprice$. Since market prices are random, the historical data used to generate a machine-learned prediction should also be random, hence we consider the prediction to be a random variable $\Pred\sim\dpred$. As before, we consider that $\Prices_i$, for $i\in[n]$, and $\Pred$ take value in $\range$. The trading window unfolds as in the classic \OMS{} problem, except that the prices and predictions are now random. 

We will first give, in \cref{subsec: competitive analysis in stochastic framework}, a general probabilistic competitive analysis of the \OMS{} problem which shows that the bounds of \cref{sec: deterministic predictions} transfer naturally by weighting the bounds of \cref{thm: [deterministic Pareto-Optimal smooth Algorithm] consistency-robustness-smoothness complex} according to the coupling of $(\Price,\Pred)$. In order to better understand the intuition behind these results, in \cref{subsec: instances}, we instantiate the analysis with three insightful models. 
Finally, in \cref{subsec: OT}, we show how to isolate the interaction of $\dpred$ and $\dprice$ using analytical tools from optimal transport theory.

%Consistency fails to characterize the quality of an algorithm with predictions as the probability that the prediction $Y$ is exactly the maximal price $P^*$ can be arbitrarily close, or even equal, to $0$. While robustness (in the worst-case) might still hold, a probabilistic analysis based on the smooth error bounds obtained in \cref{thm: [deterministic Pareto-Optimal smooth Algorithm] consistency-robustness-smoothness complex} offers a natural beyond worst-case alternative.  It  has two main practical benefits: 1) the ability to estimate performance under realistic  conditions, \ie for realistic price distributions; 2) the ability to  exploit good predictions through the risk trade-off presented by the consistency-robustness Pareto-front.

%In \cref{subsec: competitive analysis in stochastic framework}, we give a general probabilistic competitive analysis of the \OMS{} problem. \Cref{thm: [Stochastic bounds] bound in expectation using true coupling} shows a natural transfer of our deterministic bounds into expectation. However, since $\dprice$ and $\dpred$ are arbitrary distributions in this setting, this approach yields integral functionals of the coupling of $(\Price,\Pred)$. In order to better understand the intuition behind these complex expression, we detail two practical settings in \cref{subsec: example models/deterministic prices,subsec: example models/independent predictions},\ltodo{explicti the settings} while deferring more technical results based on optimal transport to \cref{subsec: OT}.   

\subsection{Competitive analysis in the stochastic framework}\label{subsec: competitive analysis in stochastic framework}
In the stochastic setting, we will evaluate the performance of the algorithm using the ratio of expectations $\Eb[\Alg(\Price,\Pred)]/\Eb[\Price]$, but our results and arguments transfer readily to $\Eb[\Alg(\Price,\Pred)/\Price]$.

%a performance ratio analogous to $\Alg/\price$ can still capture the performance relative to an optimal algorithm. However, the probabilistic nature of the analysis introduces two possible choices for this ratio: \emph{Ratio of Expectations} (RoE) 
%or \emph{Expectation of Ratio} (EoR) $\Eb\left[{\Alg(\Price,\Pred)}/{\Price}\right]$.
%Both these ratios are considered in the competitive analysis literature, \eg \cite{ezra_prophet_2023}, though RoE is the most common. Hereafter, we state results for RoE, but our arguments and results adapt easily to EoR as well.
%

Because any algorithm must operate on the realisation of $\Pred$, its performance becomes a random variable depending on the specific relationship of $\Price$ and $\Pred$. This is captured the coupling $\truecoupling$ of $(\Price,\Pred)$, yielding
\begin{align}
    \Eb[\Alg(\Price,\Pred)]:=\int \Alg(\price,\pred) \de\truecoupling(\price,\pred)\,. \label{eq: expectation=coupling integral}
\end{align}
In consequence, we can identify $\truecoupling$ and the instance $(\Price,\Pred)\sim\truecoupling$ without loss of generality, as all such instances are indistinguishable to a probabilistic analysis.

Taking into account the coupling, the bound proved in \cref{thm: [deterministic Pareto-Optimal smooth Algorithm] additive smoothness} adapts to the stochastic setting to yield \cref{thm: [Stochastic bounds] bound in expectation using true coupling} below.

\begin{lemma}\label{thm: [Stochastic bounds] bound in expectation using true coupling}
The family $\family$ satisfies 
    \begin{align}
     \hspace{-1em}\frac{\Eb[\algone(\Price,\Pred)]}{\Eb[\Price]}\ge \max\left\{\parr\,,\, \frac{1}{\rob\ub}\frac{\Eb\left[\Price\error(\Price,\Pred)^{\smth}\right]}{\Eb[\Price]}\right\}\hspace{-1pt}.\label{eq: PR bound in Expectation of ratio using coupling}
    \end{align}
\end{lemma}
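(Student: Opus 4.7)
The plan is to derive the stochastic bound by integrating the pointwise (deterministic) bound of \cref{thm: [deterministic Pareto-Optimal smooth Algorithm] consistency-robustness-smoothness complex} against the coupling $\truecoupling$ of $(\Price,\Pred)$. The deterministic theorem guarantees that for every realisation of prices $\prices$ with maximum $\price$ and every prediction $\pred$,
\[
\algone(\prices,\pred) \;\geq\; \max\!\left(\rob\,\price,\; \tfrac{1}{\rob \ub}\,\price\,\error(\price,\pred)^{\smth}\right),
\]
obtained by multiplying the bound of \cref{thm: [deterministic Pareto-Optimal smooth Algorithm] consistency-robustness-smoothness complex} by $\price$.

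From there I would proceed in three short steps. First, since $\max(A,B)\geq A$ and $\max(A,B)\geq B$ pointwise, the above inequality splits into the two separate lower bounds $\algone(\prices,\pred)\geq \rob\,\price$ and $\algone(\prices,\pred)\geq \tfrac{1}{\rob\ub}\,\price\,\error(\price,\pred)^{\smth}$. Second, I take expectations with respect to $\truecoupling$ on each of these two inequalities, invoking the coupling representation~\eqref{eq: expectation=coupling integral} and linearity of expectation. This produces
\[
\Eb[\algone(\Price,\Pred)] \;\geq\; \rob\,\Eb[\Price], \qquad \Eb[\algone(\Price,\Pred)] \;\geq\; \tfrac{1}{\rob\ub}\,\Eb\!\left[\Price\,\error(\Price,\Pred)^{\smth}\right].
\]
Third, I divide both inequalities by $\Eb[\Price]>0$ and combine them by taking the maximum of the two lower bounds, which yields exactly~\eqref{eq: PR bound in Expectation of ratio using coupling}.

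There is essentially no hard obstacle in this argument, as it is a direct lift of the deterministic pointwise guarantee to expectations via the coupling. The only subtlety worth noting is that taking the maximum after integration is weaker than integrating the maximum; this is why the bound in \eqref{eq: PR bound in Expectation of ratio using coupling} is phrased as the maximum of two expectations rather than as $\Eb[\max\{\ldots\}]/\Eb[\Price]$. For the statement of the lemma the weaker form suffices and is in fact the more useful one for subsequent analysis in \cref{subsec: instances,subsec: OT}, since each of the two terms isolates a clean dependence on either the robustness parameter or the coupling between $\Price$ and $\Pred$.
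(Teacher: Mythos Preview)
Your proposal is correct and is essentially the same as the paper's proof, which is the one-liner ``Apply Jensen's inequality to \cref{thm: [deterministic Pareto-Optimal smooth Algorithm] consistency-robustness-smoothness complex}.'' Your explicit splitting of the pointwise $\max$ into two inequalities, integrating each, and then recombining is precisely the content of $\Eb[\max(A,B)]\ge\max(\Eb[A],\Eb[B])$, i.e.\ Jensen for the convex function $\max$.
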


\begin{proof}
    Apply Jensen's inequality to \cref{thm: [deterministic Pareto-Optimal smooth Algorithm] consistency-robustness-smoothness complex}.
\end{proof}
As expected,~\eqref{eq: PR bound in Expectation of ratio using coupling} shows that the robustness of $\family$ carries over to the stochastic setting through the $\max\{r,\cdot\}$ term.

%\section{Models of Stochastic Predictions}\label{sec: example models}

%The theorems of \cref{sec: stochastic predictions} provide clear general results which apply across the breadth of instances of the stochastic \OMS{} setting. However, it is difficult to parse practical bounds for specific instances from the general bounds. In this section, we will turn to the deterministic price and independent sampling settings introduced in \cref{sec:introduction} in order to demonstrate how probabilistic tools can be mobilised to obtain ad-hoc detailed bounds in specific instances. 

\subsection{Instantiations of Lemma \ref{thm: [Stochastic bounds] bound in expectation using true coupling}}\label{subsec: instances}

The coupling $\truecoupling$, and Eq.~\eqref{eq: PR bound in Expectation of ratio using coupling} more broadly, encode effects that influence the quality of a prediction from two different sources: the relationship of $\dpred$ and $\dprice$ and the relationship between $\Pred$ and $\Price$ themselves (\eg correlation). In this section, we aim to isolate the effect of $\dpred$ and $\dprice$.

\paragraph{Stochastic predictions, deterministic prices.}
This semi-deterministic model, in which $\dprice=\delta_{\price}$ (which is to say $\Price=\price$ almost surely), isolates the effect of $\dpred$. From a practical standpoint, it can also be used to model predictions which are noisy measurements of deterministic, but unknown, quantities.
Its theoretical interest comes from the fact that it simplifies Eq.~\eqref{eq: expectation=coupling integral} into an integral over $\dprice$. This allows us to derive \cref{lemma: [deterministic marginal] general bounds lemma} from \cref{thm: [Stochastic bounds] bound in expectation using true coupling}, in which the function $\detpricefun:\range\to[0,1]$ defined by 
\begin{align}
\detpricefun(\price)&= \Eb\left[\error(\Price,\Pred)^{\smth}\vert \Price=\price\right] \,
\label{eq: [deterministic marginal] detpredfun definition}
\end{align}
for $\price\in\range$, directly quantifies the quality of the prediction in terms of the performance, with respect to the true, realised, maximal price $\price$. Indeed, $\Lambda(\price) \leq 1$ for all $\price$, and the closer to one, the better the prediction.

In particular, if the maximal price is deterministic, but the prediction is stochastic, this yields the following guarantees. For the sake of clarity of the results, we will no longer specify the term coming from the robustness, with the understanding that one can add a maximum with $\parr$ to any bound on the performance of $\family$. 

\begin{restatable}{corollary}{DeterministicMarginalBounds}\label{lemma: [deterministic marginal] general bounds lemma}
    Let $\dprice=\delta_{\price}$ for some $\price\in\range$, then the family $\family$ satisfies
    \begin{align}
        \frac{\Eb[\algone(\Price,\Pred)]}{\Eb[\Price]}\ge \frac{1}{r\theta} \detpricefun(\price)\,.\label{eq: [deterministic marginal] general bound for det price}
    \end{align}
\end{restatable}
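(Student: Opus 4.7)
The plan is to apply Lemma~\ref{thm: [Stochastic bounds] bound in expectation using true coupling} directly to the degenerate marginal $\dprice=\delta_{\price}$, and then simplify the right-hand side using the definition of $\detpricefun$ in Eq.~\eqref{eq: [deterministic marginal] detpredfun definition}.

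First I would observe that, since $\Price=\price$ almost surely, the denominator trivialises to $\Eb[\Price]=\price$. For the numerator appearing in the second argument of the max in Eq.~\eqref{eq: PR bound in Expectation of ratio using coupling}, conditioning on $\Price=\price$ is redundant because $\Price$ is a.s. constant, so
\[
\Eb\!\left[\Price\,\error(\Price,\Pred)^{\smth}\right]
= \price\cdot \Eb\!\left[\error(\price,\Pred)^{\smth}\right]
= \price\cdot \Eb\!\left[\error(\Price,\Pred)^{\smth}\,\middle|\,\Price=\price\right]
= \price\cdot \detpricefun(\price),
\]
where the last equality is the definition of $\detpricefun$. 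Dividing by $\Eb[\Price]=\price$ then yields
\[
\frac{1}{\rob\ub}\,\frac{\Eb[\Price\,\error(\Price,\Pred)^{\smth}]}{\Eb[\Price]} \;=\; \frac{1}{\rob\ub}\,\detpricefun(\price).
\]

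Finally, invoking the convention stated just before the corollary (that the $\max\{\parr,\cdot\}$ term is dropped but implicit in any bound on $\family$), the desired inequality Eq.~\eqref{eq: [deterministic marginal] general bound for det price} follows immediately. No obstacle is anticipated: the corollary is a direct specialisation of Lemma~\ref{thm: [Stochastic bounds] bound in expectation using true coupling} to the Dirac marginal, with the only bookkeeping being the pulling out of the constant $\price$ from the expectation and the recognition of $\detpricefun(\price)$ from its definition.
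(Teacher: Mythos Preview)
Your proposal is correct and follows essentially the same approach as the paper: both directly instantiate Lemma~\ref{thm: [Stochastic bounds] bound in expectation using true coupling} with $\dprice=\delta_{\price}$, pull out the almost-sure constant $\price$, and identify the remaining expectation as $\detpricefun(\price)$ from its definition.
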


Viewing $\detpricefun$ as a map (taking $\dpred$ to a real-valued function on $\range$) reveals that it quantifies the usefulness of $\dpred$ as a prediction distribution at any $\price\in\range$.

As an integral functional of $\dpred$, $\detpricefun$ may not admit a closed form. Nevertheless, it can be estimated to capture subtle stochastic phenomena as demonstrated by \cref{prop: [deterministic marginal] Additive and multiplicative uniform error}. 

\begin{restatable}{example}{DeterministicMarginalsUniformError}\label{prop: [deterministic marginal] Additive and multiplicative uniform error} Let $\dprice=\delta_{\price}$ for some $\price\in\range$ and $\dpred=\Unif([\price-\epsilon,\price+\epsilon])$. There is a constant $C>0$, dependent only on $(\smth,\ub)$, such that
        \begin{align}
            \frac{\Eb[\algone(\Price,\Pred)]}{\Eb[\Price]}\ge  \frac{1}{\rob\ub} \left(1 -\frac{\smth}{2\price}\epsilon- C\epsilon^2\right)%\tconsistency(\parc,\parlambda) \left(\frac{{\price}^{1-\parlambda}+{\price}^{\parlambda-1}}2 - \frac{\epsilon (1-\parlambda)}4\left(p^{-\parlambda}+{\price}^{\parlambda-2}\right)+C\epsilon^{2}\right)
            \label{eq: [deterministic marginal] Additive uniform error PR bound},
        \end{align}
        as soon as $0<\epsilon\le \min\{\ub-\price,\price-\lb\}$.
\end{restatable}

Eq.~\eqref{eq: [deterministic marginal] Additive uniform error PR bound} reveals that the performance of $\family$ decays from consistency at a rate linear in the uncertainty $\epsilon$ determined by the smoothness $s$ of the algorithm. This captures the scale of the effect of smoothness on a practical example. In Eq.~\eqref{eq: [deterministic marginal] Additive uniform error PR bound} we characterised the rate up to the second order ($\epsilon^2$), but higher-order estimates can be obtained similarly.
%

%All distributions with sufficiently regular (\eg Lipschitz) densities can be approximated using mixtures over the model of \cref{prop: [deterministic marginal] Additive and multiplicative uniform error}, to obtain precise bounds, see \cref{cor: [deterministic marginal] multiple interval predictions}.
%\ltodo{I have removed a paragraph here. We will reinstate it in a more or less long form if there is enough space}

Moreover, this shows that all sufficiently regular distributions can be approximated in terms of $\detpricefun$ using mixtures over the model of \cref{prop: [deterministic marginal] Additive and multiplicative uniform error}, \ie $\dpred=\sum_{k=1}^{K}\weight_k\Unif(I_k)$ for $w_i>0$, $\sum_i w_i=1$, and $(I_k)_k$ disjoint subintervals of $\range$ (see \cref{cor: [deterministic marginal] multiple interval predictions}). Numerical integration (\eg Monte-Carlo) offers another alternative method to estimate~$\detpricefun$.\\

%More sophisticated distributions can be approximated by considering mixtures over the model of \cref{prop: [deterministic marginal] Additive and multiplicative uniform error}, \ie 
%\[
%    \dpred=\sum_{k=1}^{K}\weight_k\Unif(I_k)\;,
%\]
%wherein $w_i>0$, $\sum_i w_i=1$, and $(I_k)_k$ are disjoint subintervals of $\range$, it is possible (see \cref{cor: [deterministic marginal] multiple interval predictions}) to construct piece-wise linear approximations of $\dpred$ to estimate $\detpricefun$ when it does not admit a closed form. Numerical integration (\eg Monte-Carlo) offers an alternative method to estimate $\detpricefun$.

\noindent
\mypar{Deterministic predictions, stochastic prices.}
The performances of our family of algorithms can also be computed if the prices are stochastic, but the prediction is deterministic. This model swaps the randomness: now the prices are random so that $\price\sim\dprice$ is generic and it is $\Pred\sim \delta_\pred$ which is deterministic. 

While this setting appears symmetrical to the previous one, this is not the case as the \OMS{} problem itself is highly asymmetrical. Indeed, using a threshold means that predictions too high or too low do not have the same impact. 
By defining
    \begin{align}
        \detpredfun(\pred):&=\frac{\Eb[\Price\error(\Price,\Pred)^{\smth}\vert \Pred=\pred]}{\Eb[\Price]}\quad\mbox{  for } \pred\in\range 
        %{\pred}^{-\smth}\int_\lb^\pred {\price}^{1+\smth}\de \dprice(\price) + {\pred}^{\smth}\int_\pred^\ub {\price}^{1-\smth}\de \dprice(p)
        \,,\notag
\end{align}
we can establish a quality quantification which mirrors $\detpricefun$: this functional of $\dpred$ states how good any unique prediction $\pred$ is at influencing algorithmic performance. This yields the following \cref{rem:  [deterministic marginal] detpredfun}, an analogue of \cref{lemma: [deterministic marginal] general bounds lemma}. Note that $\detpredfun(y)\le 1$ for all $y\in\range$. 

\begin{restatable}{corollary}{RemarkDetPredFun}\label{rem:  [deterministic marginal] detpredfun}
    Let $\dpred=\delta_{\pred}$ for some $\pred\in\range$. The family $\family$ satisfies
\begin{align}
        \frac{\Eb[\algone(\Price,\Pred)]}{\Eb[\Price]}\ge \frac{1}{r\theta}\detpredfun(y)\,.\label{eq: [deterministic marginal] general bound for det pred}
    \end{align}    %\VP{Still not fixed I guess}
\end{restatable}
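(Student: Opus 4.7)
The plan is to derive this corollary as a direct specialisation of \cref{thm: [Stochastic bounds] bound in expectation using true coupling} to the case where the prediction marginal is a Dirac mass. First I would recall the general stochastic bound
\[
\frac{\Eb[\algone(\Price,\Pred)]}{\Eb[\Price]}\ge \max\left\{\parr\,,\, \frac{1}{\rob\ub}\frac{\Eb\left[\Price\,\error(\Price,\Pred)^{\smth}\right]}{\Eb[\Price]}\right\},
\]
which holds for any coupling $\truecoupling$ of $(\Price,\Pred)$, then argue that under the assumption $\dpred=\delta_{\pred}$ the coupling is entirely determined by $\dprice$ since $\Pred=\pred$ almost surely. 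In particular, the only coupling of $\dprice$ and $\delta_\pred$ is the product measure $\dprice\otimes\delta_\pred$, so no further design of $\truecoupling$ is needed.

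Next I would simplify the numerator of the second term in the max. Because $\Pred=\pred$ almost surely, conditioning on $\{\Pred=\pred\}$ is trivial and we obtain
\[
\Eb\left[\Price\,\error(\Price,\Pred)^{\smth}\right]
=\Eb\left[\Price\,\error(\Price,\pred)^{\smth}\right]
=\Eb\left[\Price\,\error(\Price,\Pred)^{\smth}\,\vert\,\Pred=\pred\right].
\]
Dividing by $\Eb[\Price]$ identifies this quantity with $\detpredfun(\pred)$ by its definition. Substituting into the general bound yields
\[
\frac{\Eb[\algone(\Price,\Pred)]}{\Eb[\Price]}\ge \max\left\{\parr,\,\frac{1}{\rob\ub}\detpredfun(\pred)\right\},
\]
and dropping the robustness term, in accordance with the convention announced just before \cref{lemma: [deterministic marginal] general bounds lemma}, delivers \eqref{eq: [deterministic marginal] general bound for det pred}.

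There is no substantive obstacle here: the result is a direct corollary, structurally identical to \cref{lemma: [deterministic marginal] general bounds lemma} but with the roles of $\dprice$ and $\dpred$ swapped. The only conceptual point worth flagging in the writeup is the asymmetry with \cref{lemma: [deterministic marginal] general bounds lemma}: in the deterministic-prices case the factor $\Price$ inside the expectation becomes a constant $\price$ that cancels with $\Eb[\Price]$, whereas here the factor $\Price$ must remain inside, which is why the quality functional $\detpredfun$ carries a weighting by the price rather than depending only on $\error(\cdot,\pred)$. The bound $\detpredfun(\pred)\le 1$ follows immediately from $\error\le 1$.
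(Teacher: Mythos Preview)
Your proposal is correct and follows exactly the approach the paper takes for the mirror result \cref{lemma: [deterministic marginal] general bounds lemma}: direct instantiation of \cref{thm: [Stochastic bounds] bound in expectation using true coupling} under the degenerate marginal, followed by identification with the definition of $\detpredfun$ and omission of the robustness term per the stated convention. The paper does not spell out a separate proof for this corollary, treating it as immediate from the same argument.
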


\paragraph{Stochastic independent predictions and prices}
The theoretical value of the above two models is their isolation of the effect of $\dpred$ into $\detpricefun$ (resp. $\dprice$ into $\detpredfun$). We now turn to a model in which $\Pred$ and $\Price$ are independent (denoted by $\truecoupling=\dprice\otimes\dpred$) which will illustrate that predictions can be useful even without any correlation. The intuition is simple: some inaccurate predictions can still induce (on average) good thresholds because of the algorithm's internal mechanics.
This effect is captured by the interaction between the functional $\detpricefun$ and the distribution of prices $\dprice$ (resp. $\detpredfun$ and $\dpred$), as shown by \cref{lemma: [independent marginals] general bounds lemma}\footnote{The following result also applies to sampling of distribution-valued predictions \citep{angelopoulos_contract_2024,dinitz_binary_2024}.}.

\begin{restatable}{corollary}{IndependentMarginalsGeneralBounds}\label{lemma: [independent marginals] general bounds lemma}
    Let $\truecoupling=\dprice\otimes\dpred$, the family $\family$ satisfies 
    \begin{align}
        \frac{\Eb[\algone(\Price,\Pred)]}{\Eb[\Price]}&\ge \frac{1}{\rob\ub} \int\detpredfun(\pred)\de\dpred(\pred)
        \, \notag\\
        &=
        \frac{1}{\rob\ub} \int\frac{\price\detpricefun(\price)}{\Eb[\Price]}\de\dprice(\price).
        %\notag
        \label{eq: expectation form of detpricefun bound}
         %\label{eq: [independent marginals] general bound}
    \end{align}
    %Alternatively, this rewrites, for some appropriate distribution $H$ on $\range$, as 
    %\begin{align}
        %\frac{\Eb[\algone(\Price,\Pred)]}{\Eb[\Price]} &\ge \frac{1}{\rob\ub}\Eb_{Z\sim H}[\detpricefun(Z)]\,.\label{eq: expectation form of detpricefun bound}
    %\end{align}
\end{restatable}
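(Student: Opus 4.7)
The plan is to start from Lemma \ref{thm: [Stochastic bounds] bound in expectation using true coupling}, which already gives us
\[
\frac{\Eb[\algone(\Price,\Pred)]}{\Eb[\Price]} \;\ge\; \frac{1}{\rob\ub}\,\frac{\Eb\!\left[\Price\,\error(\Price,\Pred)^{\smth}\right]}{\Eb[\Price]},
\]
so the whole task reduces to rewriting the right-hand side under the independence assumption $\truecoupling = \dprice \otimes \dpred$ as the two claimed integrals. Since the bound from Lemma \ref{thm: [Stochastic bounds] bound in expectation using true coupling} includes a maximum with $\parr$, I will first invoke the convention stated just above \cref{lemma: [deterministic marginal] general bounds lemma} that the $\max\{\parr,\cdot\}$ term is implicit, so that only the smoothness term remains to analyse.

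Next, I will apply Fubini's theorem to the product measure. On the one hand,
\[
\Eb\!\left[\Price\,\error(\Price,\Pred)^{\smth}\right] \;=\; \int\!\!\int \price\,\error(\price,\pred)^{\smth}\,\de\dprice(\price)\,\de\dpred(\pred).
\]
Under independence, conditioning on $\Pred = \pred$ leaves the law of $\Price$ unchanged, so
\[
\detpredfun(\pred) \;=\; \frac{\Eb[\Price\,\error(\Price,\pred)^{\smth}]}{\Eb[\Price]} \;=\; \frac{1}{\Eb[\Price]}\int \price\,\error(\price,\pred)^{\smth}\,\de\dprice(\price).
\]
Integrating over $\dpred$ and exchanging the order of integration recovers exactly $\Eb[\Price\,\error(\Price,\Pred)^{\smth}]/\Eb[\Price]$, giving the first equality. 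Symmetrically, the identity $\detpricefun(\price) = \int \error(\price,\pred)^{\smth}\de\dpred(\pred)$ (again using independence to drop the conditioning in \eqref{eq: [deterministic marginal] detpredfun definition}) and another application of Fubini yield the second expression.

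The proof is essentially bookkeeping: the only substantive ingredient beyond Lemma \ref{thm: [Stochastic bounds] bound in expectation using true coupling} is the fact that under $\truecoupling=\dprice\otimes\dpred$ the conditional expectations defining $\detpricefun$ and $\detpredfun$ collapse to ordinary integrals against the other marginal, at which point Fubini does all the work. There is no real obstacle; the only care needed is to verify the integrability required for Fubini, which is immediate since $\error(\cdot,\cdot)^{\smth} \in [\ub^{-\smth},1]$ is bounded and $\Price$ is bounded in $\range$, so both integrands are bounded on a finite-measure product space.
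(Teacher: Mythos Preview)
Your proposal is correct and follows essentially the same approach as the paper: start from \cref{thm: [Stochastic bounds] bound in expectation using true coupling}, then use independence to write the expectation as a double integral and recognise the definitions of $\detpredfun$ and $\detpricefun$. The paper's proof is a one-line display of the double integral, whereas you additionally spell out why the conditional expectations collapse under independence and why Fubini applies (boundedness on $\range^2$); these extra details are welcome but do not constitute a different method.
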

Since $\detpricefun(z)$ is always smaller than 1 (and again, the closer to one, the better the predictions are), Eq.~\eqref{eq: expectation form of detpricefun bound} gives an intuitive bound on the performance
%\stodo{this should not be consistency, but rather competitive ratio}
of the algorithms.

The theoretical benefit of the model transpires in \cref{lemma: [independent marginals] general bounds lemma}: independence separates the integral against $\truecoupling$ in Eq.~\eqref{eq: expectation=coupling integral} into a double integral revealing $\detpredfun$. Unfortunately, it is often difficult to obtain a closed form for the resulting expression (see, \eg, \cref{prop: [independent marginals] uniform marginals case}), but one can rely on numerical integration instead (see \cref{fig:3D plot} in \cref{app: sec3}).

\subsection{Dependent predictions and optimal transport}\label{subsec: OT}

The previous models successfully isolated the effect of the distributions $\dprice$ and $\dpred$. Using tools from Optimal Transport (OT) theory, one can generalise this approach. For brevity, we refer simply to \cite{villani_optimal_2009} for the technicalities and background of this field. The key observation is that the right-hand side of Eq.~\eqref{eq: expectation=coupling integral} is a \emph{transport functional} of $\pi^*$, which can be lower bounded uniformly over the set of couplings $\couplings(\dprice,\dpred)$  of $\dprice$ and $\dpred$. This set is exactly the set of joint distributions for $(\Price,\Pred)$ when $\Price\sim\dprice$ and $\Pred\sim\dpred$. Minimising a transport functional over couplings is the classic OT problem \citep{villani_optimal_2009}, hence \cref{thm: [Stochastic bounds] OT bound for coupling robustness}.

\begin{theorem}\label{thm: [Stochastic bounds] OT bound for coupling robustness}
    The family $\family$ satisfies 
        \begin{align}
        \frac{\Eb[\algone(\Price,\Pred)]}{\Eb[\Price]}\ge \frac{1}{r\theta}\frac{\inf\mathlarger{\int} \price\error(\price,\pred)^{\smth}\de\coupling(\price,\pred)}{\Eb[\Price]}\label{eq: in expectation of ratio using OT}\,.
    \end{align}
    where infimum is taken over couplings $\coupling\in\couplings(\dprice,\dpred)$ ; in particular, the numerator is as most $\Eb[\Price]$.
\end{theorem}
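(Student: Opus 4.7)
The plan is to derive this bound as a direct relaxation of Lemma~\ref{thm: [Stochastic bounds] bound in expectation using true coupling}, exploiting the fact that the true coupling $\truecoupling$ of $(\Price,\Pred)$ lies inside the admissible set $\couplings(\dprice,\dpred)$. In essence, the OT viewpoint amounts to noting that the probabilistic guarantee depends on $\truecoupling$ only through a specific transport cost, so replacing that cost with its infimum over all transport plans compatible with the marginals yields a bound that holds uniformly over all couplings consistent with the observed distributions of prices and predictions.

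Concretely, I would first rewrite the expectation in Lemma~\ref{thm: [Stochastic bounds] bound in expectation using true coupling} using Eq.~\eqref{eq: expectation=coupling integral}, obtaining
\[
\frac{\Eb[\algone(\Price,\Pred)]}{\Eb[\Price]} \;\ge\; \frac{1}{\rob\ub}\cdot\frac{\int \price\,\error(\price,\pred)^{\smth}\,\de\truecoupling(\price,\pred)}{\Eb[\Price]}\,.
\]
Next, I would invoke the definition of $\couplings(\dprice,\dpred)$ as the set of joint distributions on $\range\times\range$ with first marginal $\dprice$ and second marginal $\dpred$; since $\Price\sim\dprice$ and $\Pred\sim\dpred$, by construction $\truecoupling\in\couplings(\dprice,\dpred)$. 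Hence the numerator above is bounded below by the infimum of $\int \price\,\error(\price,\pred)^{\smth}\,\de\coupling(\price,\pred)$ over $\coupling\in\couplings(\dprice,\dpred)$, which gives the stated inequality.

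Finally, to verify the ``in particular'' claim, I would observe that by construction $\error(\price,\pred)\in[\ub^{-1},1]$, and $\smth\ge 1>0$, so $\error(\price,\pred)^{\smth}\le 1$ pointwise. Therefore, for any $\coupling\in\couplings(\dprice,\dpred)$,
\[
\int \price\,\error(\price,\pred)^{\smth}\,\de\coupling(\price,\pred) \;\le\; \int \price\,\de\coupling(\price,\pred) \;=\; \Eb[\Price]\,,
\]
because the first marginal of $\coupling$ is $\dprice$. Taking the infimum over $\coupling$ preserves the upper bound.

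There is no substantive obstacle here: the statement is a direct consequence of Lemma~\ref{thm: [Stochastic bounds] bound in expectation using true coupling} together with the elementary observation that the true coupling is itself admissible. The genuine content of the theorem is conceptual rather than technical, as it repackages the stochastic guarantee as a transport-cost quantity that is well-studied in OT theory and can therefore be estimated, dualised, or bounded using established analytical machinery in subsequent corollaries.
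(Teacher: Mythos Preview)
Your proposal is correct and matches the paper's own reasoning: the paper states the result as an immediate consequence of recognising the right-hand side of Lemma~\ref{thm: [Stochastic bounds] bound in expectation using true coupling} (via Eq.~\eqref{eq: expectation=coupling integral}) as a transport functional of $\truecoupling\in\couplings(\dprice,\dpred)$ and taking the infimum over couplings. Your verification of the ``in particular'' claim via $\error^{\smth}\le 1$ and the marginal constraint is exactly what is needed and is left implicit in the paper.
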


\Cref{thm: [Stochastic bounds] OT bound for coupling robustness} highlights a novel connection between (stochastic) competitive analysis and optimal transport.
Contrary to most literature in OT, in which the optimal configuration tries to minimise the distance points $(\price,\pred)$ are moved, the infimum in ~\eqref{eq: in expectation of ratio using OT} tries to push them far apart to induce the algorithm to make mistakes.

Optimal transport tools have been used before in algorithms with predictions, notably in the \emph{distributional predictions} setting in which the algorithm is given $\dpred$ itself \citep{angelopoulos_contract_2024,dinitz_binary_2024}. This analysis, however, is fundamentally different: it uses Wasserstein distances (see \cref{subapp: proba additive analysis}) in place of $\additiveerror$ in quantifying the error of the distributional prediction $\dpred$ of $\dprice$. Our stochastic framework ties its error metric closely to the asymmetric nature of the problem through $\price\error(\price,\pred)^\smth$, which is why our OT problem, \ie Eq.~\eqref{eq: in expectation of ratio using OT}, is \emph{not} symmetric: exchanging the roles of $(\dpred,\dprice)$ cannot be expected to yield the same performance. 
The optimal transport problem in Eq.~\eqref{eq: in expectation of ratio using OT} generally has no closed form, but thanks to its (strong) dual form (see \cref{subapp: OT}), one can use problem-specific knowledge to derive lower bounds, as demonstrated by \cref{prop: [stochastic predictions] OT dual bounds}.

\begin{restatable}{proposition}{OTdualBoundsOne}\label{prop: [stochastic predictions] OT dual bounds}
    The family $\family$ satisfies
    \begin{align}
    \frac{\Eb[\algone(\Price,\Pred)]}{\Eb[\Price]}\ge\frac{1}{\rob\ub} \frac{\displaystyle \int_\lb^{\ub^{\frac12}}\hspace{-11pt} {\price}^{1+\smth}\de\dprice(\price) \hspace{-2pt}+\hspace{-4pt} \int_{\hspace{-1pt}\ub^{\frac12}}^\ub\hspace{-3pt} {\price}^{1-\smth}\de\dprice(\price)}{\Eb[P^*]}\hspace{2pt}.\notag
    \end{align}
    Moreover, the RHS is the infimum over $\dpred$ of Eq.~\eqref{eq: in expectation of ratio using OT}.
\end{restatable}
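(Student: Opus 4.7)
The plan is to leverage \cref{thm: [Stochastic bounds] OT bound for coupling robustness}, which bounds $\Eb[\algone(\Price,\Pred)]/\Eb[\Price]$ from below by an OT value that depends on the full coupling $\truecoupling\in\couplings(\dprice,\dpred)$, and then to invoke the strong dual form of OT in order to produce a dual-feasible pair whose value no longer depends on $\dpred$. Concretely, setting $c(\price,\pred):=\price\,\error(\price,\pred)^{\smth}$ on $[1,\ub]^2$, Kantorovich duality reads
\[
\inf_{\coupling\in\couplings(\dprice,\dpred)}\int c\,\de\coupling \;=\; \sup_{(\phi,\psi):\,\phi(\price)+\psi(\pred)\le c(\price,\pred)}\left\{\int\phi\,\de\dprice+\int\psi\,\de\dpred\right\}.
\]
Choosing the admissible pair $\psi\equiv 0$ and $\phi(\price):=\inf_{\pred\in[1,\ub]} c(\price,\pred)$ produces a valid lower bound on the primal, and, crucially, one whose expression contains only $\dprice$, matching the form of the RHS of the claim once the factor $1/(\rob\ub\,\Eb[\Price])$ from \cref{thm: [Stochastic bounds] OT bound for coupling robustness} is reinstated.

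The next step is the pointwise minimization $\inf_{\pred\in[1,\ub]}\price\,\error(\price,\pred)^{\smth}$. Because $\pred\mapsto\error(\price,\pred)=\min\{\price/\pred,\pred/\price\}$ is a unimodal ``tent'' with maximum $1$ at $\pred=\price$, its infimum over $[1,\ub]$ is attained at an endpoint: either $\error(\price,1)=1/\price$ or $\error(\price,\ub)=\price/\ub$. Comparing these two values splits $[1,\ub]$ at the threshold $\sqrt{\ub}$: on $[1,\sqrt{\ub}]$ the infimum is $\price/\ub$, attained at $\pred=\ub$, while on $[\sqrt{\ub},\ub]$ it is $1/\price$, attained at $\pred=1$. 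Substituting back into $c$ yields a piecewise power of $\price$, and integrating against $\dprice$ reproduces the two integrals appearing in the proposition.

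For the ``moreover'' clause, tightness of the dual lower bound must be shown by exhibiting an explicit minimizing pair $(\dpred^\star,\coupling^\star)$. I would take the Markov kernel $\coupling^\star(\cdot\mid\price):=\delta_{\pred^\star(\price)}$ with $\pred^\star(\price)=\ub$ on $[1,\sqrt{\ub}]$ and $\pred^\star(\price)=1$ on $[\sqrt{\ub},\ub]$, and define $\dpred^\star$ as the corresponding push-forward; the resulting transport cost equals, by construction, the pointwise-infimum expression. Since that value is simultaneously an upper bound on $\inf_{\dpred}\inf_{\coupling}\int c\,\de\coupling$ (via this witness) and a lower bound on the same quantity (via the dual step), the two coincide, so the RHS is indeed the infimum over $\dpred$ of Eq.~\eqref{eq: in expectation of ratio using OT}.

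The only technical care needed is in checking the regularity of $c$: the cost $\error(\price,\cdot)$ is continuous but only piecewise smooth, with a kink at $\pred=\price$, which is more than sufficient both for the validity of Kantorovich duality (lower semi-continuity and boundedness of $c$ on a compact set) and for the measurable selection of $\pred^\star(\cdot)$. The boundary case $\price=\sqrt{\ub}$, where both endpoint values of $\error$ coincide, is a $\dprice$-null issue that can be handled with a one-line remark. Otherwise the argument is essentially routine once the witness $\psi\equiv 0$ has been identified as the right dual certificate for obtaining a bound independent of $\dpred$.
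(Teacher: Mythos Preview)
Your proposal is correct and, for the main inequality, follows precisely the paper's route: choose the dual pair $\psi\equiv 0$, $\phi(\price)=\inf_{\pred}c(\price,\pred)$, compute the pointwise infimum by comparing the two endpoint values of $\error(\price,\cdot)$, and observe the split at $\sqrt{\ub}$.

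For the ``moreover'' clause, your argument diverges from the paper's and is in fact more direct. You establish tightness by constructing an explicit primal witness: the deterministic kernel $\coupling^\star(\cdot\mid\price)=\delta_{\pred^\star(\price)}$ with $\pred^\star(\price)\in\{1,\ub\}$ chosen to attain the endpoint minimum, together with its push-forward $\dpred^\star$. This gives the upper bound $\inf_{\dpred}\inf_{\coupling}\int c\,\de\coupling\le\int\inf_{\pred}c(\price,\pred)\,\de\dprice$, which combined with the dual lower bound from Part~1 yields equality. The paper instead re-derives the lower bound via a min--max swap: it writes $\inf_{\dpred}\inf_{\coupling}=\inf_{\dpred}\sup_{(\varphi,\psi)\in\Xi}\{\cdots\}\ge\sup_{(\varphi,\psi)}\{\int\varphi\,\de\dprice+\inf_{\dpred}\int\psi\,\de\dpred\}$, reduces to constant $\psi\equiv\iota$, and shows the resulting supremum is $\int\inf_{\pred}c\,\de\dprice$ independently of $\iota$; the matching upper bound is left implicit. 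Your constructive witness is shorter, avoids the second invocation of duality, and makes the extremal $\dpred^\star$ (a two-point distribution on $\{1,\ub\}$) explicit, which is informative in its own right.
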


% \lc{
% \begin{restatable}{proposition}{OTdualBoundsOne}\label{prop: [stochastic predictions] OT dual bounds}
%     The ratio $\dfrac{\Eb[\algone(\Price,\Pred)]}{\Eb[\Price]}$ is at least
%     \begin{align}
%     \frac{1}{\rob\ub} \frac{\displaystyle \int_\lb^{\ub^{\frac12}}\hspace{-11pt} {\price}^{1+\smth}\de\dprice(\price) \hspace{-2pt}+\hspace{-4pt} \int_{\hspace{-1pt}\ub^{\frac12}}^\ub\hspace{-3pt} {\price}^{1-\smth}\de\dprice(\price)}{\Eb[P^*]}\hspace{2pt}.\notag
%     \end{align}
% \end{restatable}
% }
%\VP{both options are equally good}

\Cref{prop: [stochastic predictions] OT dual bounds} once more highlights the asymmetry of the problem through different contributions of the regions above and below $\sqrt{\ub}$, which is the threshold that guarantees $1/\sqrt{\theta}$-robustness. The dual problem provides thus a practical tool for designing lower bounds for the performance of $\family$ in the stochastic \OMS{} setting.
\section{Numerical Experiments}\label{sec: experiments}

\begin{figure}[t]
    \begin{minipage}{0.49\textwidth}
    \centering
    \includegraphics[width=\linewidth]{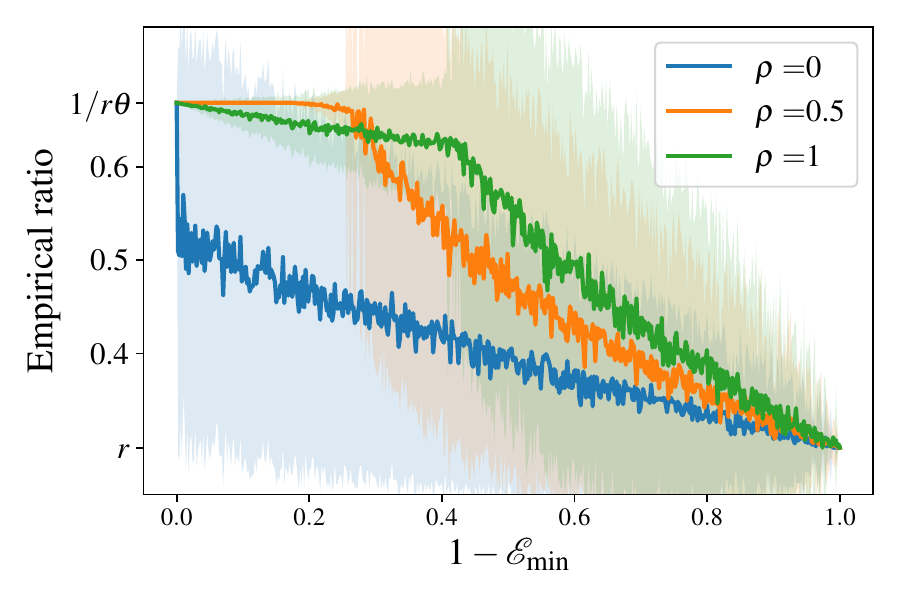}
    \caption{Performance of $\A^\rho_\rob$ with $\rho \in \{0,0.5,1\}$.}
    \label{fig:multiplicative experiment}
    \end{minipage}
    \hfill
    \begin{minipage}{0.49\textwidth}
    \centering
    \includegraphics[width=\linewidth]{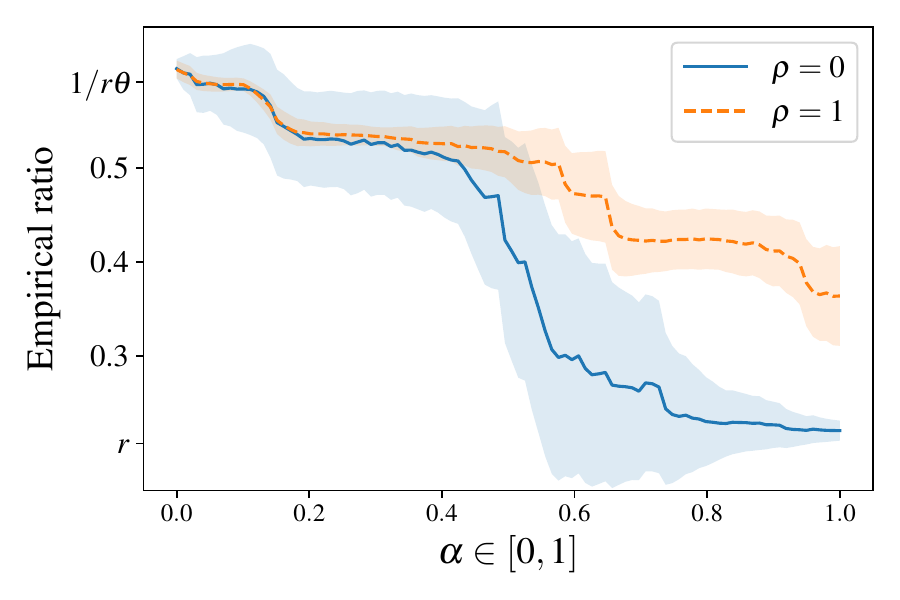}
    \caption{Comparison of $\A^1_\rob$ and $\A^0_\rob$ on the Bitcoin price dataset.}
    \label{fig:BTC-lmb0.5}
    \end{minipage}
\end{figure}

To complement our theoretical analysis and evaluate the performance of our algorithm in practice, we present experimental results in this section. We defer additional experimental results to Appendix \ref{app: experiments}.
%\ltodo{alt par}

\subsection{Experiments on synthetic data}
We fix $\ub = 5$, $\lambda = 0.5$, and $\rob = \ub^{-(1-\lambda/2)}$. We consider instances $\{\instance(\qmax)\}_{\qmax \in [1,\ub]}$, where $\instance(\qmax)$ is the sequence starting at $1$, and increasing by $\frac{\ub-1}{n-1}$ at each step until reaching $\qmax$, after which the prices drop to $1$. These instances model worst-case instances with maximum price $\qmax$.

We fix an error level $\error_{\min}$ and,
for each $\price \in [1,\ub]$, we generate the prediction $\pred$ by sampling uniformly at random in the interval $[\price \error_{\min}, \price / \error_{\min}] = \{z: \error(\price, z) \geq \error_{\min}\}$, then compute the ratio $\A^\rho_\rob (\instance(\price), \pred)/\price$. For $\error_{\min} \in (0,1]$, Figure \ref{fig:multiplicative experiment} illustrates the worst-case ratio $\inf_{\price \in [1,\ub]} \mathbb{E}_\pred[\A^\rho_\rob(\price,\pred)]/\price$, 
where the expectation is estimated empirically using $500$ independent trials.  

Figure~\ref{fig:multiplicative experiment} shows that for the different values of $\rho$, the worst-case ratio is $1/\rob \ub$ when the prediction is perfect, i.e. $\error_{\min} = 1$, 
%\stodo{Why the notation $\error_{\min}$? Not sure what the min signifies in this section}
 and degrades to $\rob$ when the prediction can be arbitrarily bad, which is consistent with Theorem~\ref{thm:thresh-pareto-optimal}. However, the ratio achieved for $\rho=0$ drops significantly even with a slight perturbation in the prediction, while the ratios with $\rho \in \{0.5, 1\}$ decrease much slower. This is again consistent with the smoothness of $\A^\rho_{\rob}$, as shown in Theorem~\ref{thm: [deterministic Pareto-Optimal smooth Algorithm] consistency-robustness-smoothness complex}.

\subsection{Experiments on real data}
To further validate our algorithm’s practicality, we evaluate it on the experimental setting of~\citep{sun_pareto-optimal_2021}. Specifically, we use real Bitcoin data (USD) recorded every minute from the beginning of 2020 to the end of 2024. The dataset’s prices range from $L = 3,858$ USD to $U = 108,946$ USD, yielding $\theta = U/L \approx 28$. 

We randomly sample a 10-week window of prices $W_0$, let $W_{-1}$ be the 10-week window of prices preceding $W_0$ and take the prediction  
$y = \alpha \max W_{-1} + (1-\alpha) \max W_0$. Here, $\alpha$ captures the prediction error: $\alpha = 0$ represents perfect foresight, while $\alpha = 1$ corresponds to a naive prediction equal to the maximum price in $W_{-1}$. To simulate worst-case scenarios, the last price in $W_0$ is changed to $L$ with a probability of 0.75.
For each value of $\alpha$, we sample $m = 100$ windows $(W^j_0)_{j=1}^m$ and compute the 
ratio $R_m = \min_j \{\A^\rho_\rob(W^j_0, \pred)/\max W^j_0\}$ for $\rho \in \{0,1\}$. We then empirically estimate $\Eb[R_m]$ by repeating this process 50 times. We choose the robustness $\rob$ of $\A^\rho_\rob$ by setting $\lambda \in [0,1]$ and $\rob = \ub^{-(1-\lambda/2)}$. 

Figure \ref{fig:BTC-lmb0.5} shows the obtained results with $\lambda = 0.5$, and compares our algorithm $\A^1_r$ to the algorithm of \cite{sun_pareto-optimal_2021}, which corresponds to $\A^0_r$. For $\alpha = 0$, i.e. perfect prediction, they both achieve the consistency $1/r\theta$. However, as the error increases, $\A^0_r$ quickly degrades to the robustness guarantee $r$, whereas $\A^1_r$ degrades more gradually.

\section{Conclusion}

We provided an intuitive Pareto-optimal and smooth algorithm for a fundamental online decision problem, namely \OMS{}. We believe our methodology can be applied to generalizations such as the {\em $k$-search} problem~\cite{lorenz2009optimal}, \ie multi-unit \OMS{}, recently studied in a learning-augmented setting~\citep{DBLP:conf/eenergy/Lee0HL24}. More broadly, we believe our framework can help bring competitive analysis much closer to the analysis of real financial markets since it combines three essential aspects: worst-case analysis, adaptivity to stochastic settings, and smooth performance relative to the error. A broader research direction is thus to extend the study of competitive financial optimization (see, \eg, Chapter 14 in~\citep{borodin2005online}) to such realistic learning-augmented settings. 
This work also sheds light on connections between competitive analysis and optimal transport, suggesting 
the study of the geometry of OT problems induced by competitive analysis as a promising direction for both theories.

\section*{Acknowledgements}
This research was supported in part by the French National Research Agency (ANR) in the framework of the PEPR IA FOUNDRY project (ANR-23-PEIA-0003) and through the grant DOOM ANR23-CE23-0002. It was also funded by the European Union (ERC, Ocean, 101071601). 
Views and opinions expressed are however those of the author(s) only and do not necessarily reflect those of the European Union or the European Research Council Executive Agency. Neither the European Union nor the granting authority can be held responsible for them.

This work was also partially funded by the project PREDICTIONS, grant ANR-23-CE48-0010 from the French National Research Agency (ANR).

%\newpage

\bibliographystyle{plainnat}
\bibliography{biblio.bib}

%%%%%%%%%%%%%%%%%%%%%%%%%%%%%%%%%%%%%%%%%%%%%%%%%%%%%%%%%%%%%%%%%%%%%%%%%%%%%%%
%%%%%%%%%%%%%%%%%%%%%%%%%%%%%%%%%%%%%%%%%%%%%%%%%%%%%%%%%%%%%%%%%%%%%%%%%%%%%%%
% APPENDIX
%%%%%%%%%%%%%%%%%%%%%%%%%%%%%%%%%%%%%%%%%%%%%%%%%%%%%%%%%%%%%%%%%%%%%%%%%%%%%%%
%%%%%%%%%%%%%%%%%%%%%%%%%%%%%%%%%%%%%%%%%%%%%%%%%%%%%%%%%%%%%%%%%%%%%%%%%%%%%%%

\newpage
\appendix
\onecolumn
\part*{Appendices}

\section{Organisation and Notation}\label{app:intro}

\subsection{Organisation of Appendices}\label{subapp: orga}

The following appendices are divided in the following way: \cref{app:sec2} contains the proofs of \cref{sec: deterministic predictions}, while \cref{app: sec3} contains proofs of \cref{sec: stochastic predictions}. In both cases, they follow the order of the main text.  \Cref{app: experiments} provides further experiments not included in \cref{sec: experiments}. 
After these sections which are ordered according to the text, \cref{appendix:additive-error} is transversal and regroups results in which the error analysis is additive instead of multiplicative. 

\subsection{Notation}\label{subapp: notation}

%\mypar{Gen. Notation} For $n\in\Nb$, $[n]:=\{1,\ldots,n\}$. 

The space of probability measures over a set $S$ is denoted $\Ps(S)$. The set of couplings between $\dpred$ and $\dprice$ is $\couplings(\dpred,\dprice):=\{\coupling\in\Ps(\range^2): \coupling(\cdot,\range)=\dpred, \coupling(\range,\cdot)=\dprice\}$
For $x\in\Rb$, $\delta_x$ denotes the Dirac Delta distribution (\ie the distribution of a degenerate random variable $X$ satisfying $\Pb(X=x)=1$).

\section{Proofs of Section {\ref{sec: deterministic predictions}}}\label{app:sec2}

We present in this appendix the proofs of the results stated in Section \ref{sec: deterministic predictions}. Let us introduce some notations and observations that we will use throughout the proofs.
For all $n \geq 1$, we denote by $(\prices^n_i)_{i=1}^n$ the sequence of prices defined by
\[
\forall i \in [n]: \quad \prices^n_i = 1 + \frac{\ub-1}{n-1}(i-1)\;,
\]
and for all $q \in [1,\ub]$, we denote by $\instance(q)$ the sequence of prices that are equal to $\prices^n_i$ while the latter is smaller than $q$ then drops to $1$. Formally, the $i^{\text{th}}$ price in this sequence is
\begin{equation}\label{eq:worst-case-instance}
\instance(q)_i = 1 + (\prices^n_i -1) \indic_{\prices^n_i \leq q}\;.    
\end{equation}

On the class of instances $\{\instance(\qmax)\}_{\qmax \in [1,\ub]}$,  any deterministic learning-augmented algorithm for \OMS{} is equivalent to a single-threshold algorithm $\A_\thresh$. 
Moreover, observe that the payoff of $\A_{\thresh}$ satisfies
\begin{equation}\label{eq:worst-case-payoff}
\A_{\thresh}(\instance(\qmax),y) =
\left\{
\begin{array}{ll}
    1 & \mbox{ if } \thresh(y) > \qmax\\
    \thresh(\pred) + O(\tfrac{1}{n}) & \mbox{ otherwise}
\end{array}
\right.\,.
\end{equation}
Indeed, if the threshold exceeds the maximum price $\qmax$ in the sequence, then no price is selected and the algorithm is left with a payoff of $1$. On the hand, if the threshold is at most $\qmax$, then the selected price is $\min \{\prices^n_i  \mid \prices^n_i \geq \thresh(\pred)\}$. By definition of the prices $\prices^n_i$, this value is in $[\thresh(\pred), \thresh(\pred) + \tfrac{\ub-1}{n-1}]$.
In particular, for $n$ arbitrarily large, the payoff of an algorithm with threshold $\thresh(\pred) \leq q$ is arbitrarily close $\thresh(\pred)$. This observation will be useful in our proofs.

\subsection{The class of all Pareto-optimal thresholds}
\AllParetoOptimalThresholds*

\begin{proof}
Let $\thresh : [1,\ub] \to [1,\ub]$, and consider that $\A_\thresh$ is $\rob$-robust and $1/\rob \ub$-consistent. We will prove that $\thresh$ is necessarily in the set $\SPO$.
\paragraph{First inclusion.} Let us first prove that $\thresh(z) \in [\rob\ub, 1/\rob]$ for all $z \in [1,\ub]$.
Let $z \in [1,\ub]$, and consider the sequence of prices $\instance(\qmax)$. Since $\A_\thresh$ is $\rob$-robust, then by Eq.~\eqref{eq:worst-case-payoff}, it holds for $\qmax = \thresh(z)-\tfrac{1}{n}$ and $\pred = z$ that
\[
\rob \leq \frac{\A_\thresh(\instance(\qmax),z)}{\qmax} = \frac{1}{\qmax} = \frac{1}{\thresh(z)-\tfrac{1}{n}}\;.
\]
On the other hand, for $\qmax = \ub$, we obtain again using Eq.~\eqref{eq:worst-case-payoff} that
\[
\rob \leq \frac{\A_\thresh(\instance(\qmax),z)}{\qmax} = \frac{\thresh(z) + O(\tfrac{1}{n})}{\ub}\;.
\]
We deduce that $\thresh(z)$ satisfies $\thresh(z) \in [\rob \ub - O(\tfrac{1}{n}), 1/\rob + \tfrac{1}{n}]$, and taking the limit for $n\to \infty$ gives that $\thresh(z) \in [\rob \ub, 1/\rob]$.

Consider now $z \in (\rob \ub, \ub]$, and let us prove that $\thresh(z) \in [\frac{z}{\rob \ub}, z]$. We first prove by contradiction that $\thresh(z) \leq z$. Suppose this is not the case, i.e. $\thresh(z) > z$ then using Equation \eqref{eq:worst-case-payoff} and that the algorithm is $1/\rob \ub$-consistent, considering that the maximum price is $z$ and the prediction is perfect, we deduce that
\[
\frac{1}{\rob \ub} \leq \frac{\A_\thresh(\instance(z),z)}{z} = \frac{1}{z}\;,
\]
hence $z \leq \rob \ub$, which contradicts the initial assumption that $z \in (\rob \ub, \ub]$. Therefore, $\thresh(z) \leq z$  for all $z \in (\rob \ub, \ub]$. Using this inequality, it follows again by Eq.~\eqref{eq:worst-case-payoff} and $1/\rob$-consistency of the algorithm that
\[
\frac{1}{\rob \ub} \leq \frac{\A_\thresh(\instance(z),z)}{z} = \frac{\thresh(z)}{z}\;,.
\]
Consequently, $\thresh(z) \in [\frac{z}{\rob \ub}, z]$ for all $z \in (\rob \ub, \ub]$. This proves that the set of all thresholds yielding Pareto-optimal levels of robustness and consistency $(\rob, 1/\rob\ub)$ is included in the set $\SPO$.

\paragraph{Second inclusion.} 
The other inclusion is easier to prove. Let $\thresh \in \SPO$, and let us prove that $\A_\thresh$ is $\rob$-robust and $1/\rob \ub$-consistent. Consider an arbitrary sequence of prices $\prices = (\prices_i)_{i=1}^n \in [1,\ub]$ and a prediction $\pred \in [1,\ub]$, and let us denote by $\price$ the maximum price in the sequence $\prices$.

The robustness of $\A_\thresh$ follows from the bounding $\thresh(\pred) \in [\rob\ub, 1/\rob]$. Indeed, we obtain using Eq.~\eqref{eq:worst-case-payoff} that
\begin{itemize}
    \item if $\price < \thresh(\pred)$ then 
    $\frac{\A_\thresh(\prices, \pred)}{\price} = \frac{1}{\price} \geq \frac{1}{\thresh(\pred)} \geq \rob$,
    \item if $\price \geq \thresh(\pred)$ then 
    $\frac{\A_\thresh(\prices, \pred)}{\price} = \frac{\thresh(\pred)}{\price} \geq \frac{\thresh(\pred)}{\ub} \geq \rob$\;,
\end{itemize}
which proves that $\A_\thresh$ is $\rob$-robust. Now the consistency of the algorithm follows from the bounding $\thresh(z) \in [\frac{z}{\rob \ub}, z]$ for all $z \in (\rob \ub, \ub]$. Indeed, assume that the prediction is perfect, i.e. $\pred = \price$, then we have the following:
\begin{itemize}
    \item if $\price \leq \rob \ub$ then $\frac{\A_\thresh(\prices, \price)}{\price} = \frac{1}{\price} \geq \frac{1}{\rob \ub}$,
    \item if $\price > \rob \ub$ then we have that $\thresh(\price) \leq \price$, hence $\frac{\A_\thresh(\prices, \price)}{\price} = \frac{\thresh(\price)}{\price} \geq \frac{1}{\rob \ub}$.
\end{itemize}
This proves $\A_\thresh$ is $1/\rob \ub$-consistent, which concludes the proof.
\end{proof}

\subsection{Smoothness analysis of \texorpdfstring{$\A^\rho_\rob$}{A}}

\begin{lemma}\label{lem:max(az+b)/z}
Let $a >0$, $b \in \mathbb{R}$, and $u<v \in (0,\infty)$ satisfying that $z \mapsto az+b \geq 0$ on the interval [u,v], then it holds for all $\ell \in \mathbb{R}$ that
\[
\max_{z \in [u,v]} \frac{(az+b)^{\ell+1}}{z^\ell} 
= \max \left\{ \frac{(au+b)^{\ell+1}}{u^\ell}, \frac{(av+b)^{\ell+1}}{v^\ell} \right\}\;.
\]
\end{lemma}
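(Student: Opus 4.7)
The plan is to treat this as a one-variable optimisation problem: I will show that the function $f(z) := (az+b)^{\ell+1}/z^\ell$ has at most one critical point in the open interval $(u,v)$, and that any such critical point is a local minimum, so the maximum on $[u,v]$ must be attained at $u$ or $v$.

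First I would differentiate. Applying the product rule to $f(z) = (az+b)^{\ell+1} z^{-\ell}$ and factoring out $(az+b)^\ell z^{-\ell-1}$ yields
\[
f'(z) \;=\; \frac{(az+b)^\ell \, (az - \ell b)}{z^{\ell+1}},
\]
valid wherever $az+b > 0$. Since $a > 0$ and $z > 0$ on $[u,v]$, the factors $(az+b)^\ell$ and $z^{\ell+1}$ are strictly positive where defined, so the sign of $f'(z)$ is governed by the affine function $az - \ell b$. This function is strictly increasing in $z$ and vanishes only at $z^* := \ell b / a$.

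From here the argument is a simple sign analysis. If $z^* \notin (u,v)$, then $az - \ell b$ keeps constant sign on $(u,v)$, so $f$ is strictly monotone there and its maximum lies at $u$ or $v$. If instead $z^* \in (u,v)$, then $f'$ passes from negative to positive as $z$ crosses $z^*$, making $z^*$ a strict local minimum; again the maximum on $[u,v]$ is reached at an endpoint. Comparing $f(u)$ and $f(v)$ then yields the stated formula.

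The only subtlety I anticipate is the possible boundary degeneracy $au+b=0$, which (because $z\mapsto az+b$ is affine with positive slope) can occur only at the left endpoint. In that case $f(u) = 0^{\ell+1}/u^\ell$ must be interpreted as $0$, as $u^{-\ell}$, or as $+\infty$ according to whether $\ell+1$ is positive, zero, or negative; one checks that each of these matches $\lim_{z\downarrow u} f(z)$, so the endpoint-maximum formula still holds by a continuity argument. Apart from this minor bookkeeping at the boundary, the proof is a routine calculus exercise and I would expect no real obstacle.
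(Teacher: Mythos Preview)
Your proposal is correct and follows essentially the same approach as the paper: differentiate, observe that the sign of $f'$ is governed by the affine factor $az-\ell b$, and conclude that $f$ is monotone or has a unique interior local minimum, so the maximum lies at an endpoint. Your differentiation is slightly more direct than the paper's (which first rewrites $f(z)=(az^{1/(\ell+1)}+bz^{-\ell/(\ell+1)})^{\ell+1}$ before differentiating), and your explicit treatment of the boundary case $au+b=0$ is a nice addition the paper omits.
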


\begin{proof}
For all $z \in [u,v]$, we can write that
\[
\frac{(az+b)^{\ell+1}}{z^\ell} 
= \left( \frac{az+b}{z^{\ell/(\ell+1)}}  \right)^{\ell + 1}
= \left( a z^{1/(\ell+1)} + b z^{-\ell/(\ell+1)} \right)^{\ell+1}
\]
hence, computing the derivative gives
\begin{align*}
\frac{d}{dz} \left[ \frac{(az+b)^{\ell+1}}{z^\ell} \right]
&= \frac{d}{dz} \left[ \left( a z^{1/(\ell+1)} + b z^{-\ell/(\ell+1)} \right)^{\ell+1}\right]\\
&= (\ell+1) \left( \frac{a}{\ell+1} z^{-\ell/(\ell+1)} - \frac{b\ell}{\ell+1} z^{-\ell/(\ell+1) - 1} \right) \left( a z^{1/(\ell+1)} + b z^{-\ell/(\ell+1)} \right)^{\ell}\\
&= a z^{-\ell/(\ell+1)-1} \left( z - \frac{b\ell}{a}  \right) \left( a z^{1/(\ell+1)} + b z^{-\ell/(\ell+1)} \right)^{\ell}\;.
\end{align*}
The monotonicity of $z \mapsto \frac{(az+b)^{\ell+1}}{z^\ell}$ on the interval $[u,v]$ is therefore determined by the sign of $z-\frac{b\ell}{a}$. Indeed, $az^{-\ell/(\ell+1)-1} \geq 0$ because $a \geq 0$ and $z \geq u > 0$, and the term $\left( a z^{1/(\ell+1)} + b z^{-\ell/(\ell+1)} \right)^{\ell}$ is also non-negative because we can write that
\[
\left( a z^{1/(\ell+1)} + b z^{-\ell/(\ell+1)} \right)^{\ell}
= \left[ \left( a z^{1/(\ell+1)} + b z^{-\ell/(\ell+1)} \right)^{\ell+1} \right]^{\frac{\ell}{\ell+1}}
= \left[ \frac{(az+b)^{\ell+1}}{z^\ell}\right]^{\frac{\ell}{\ell+1}}\;,
\]
and both $z$ and $az+b$ are positive on the interval $[u,v]$.

Consequently, depending on how $\frac{b\ell}{a}$ compares to the bounds $u,v$ of the interval, the mapping $z \mapsto \frac{(az+b)^{\ell+1}}{z^\ell}$ can be either decreasing, increasing, or decreasing then increasing. In the three cases, its maximum is reached in one of the interval limits $u$ or $v$, thus
\[
\max_{z \in [u,v]} \frac{(az+b)^{\ell+1}}{z^\ell} 
= \max \left\{ \frac{(au+b)^{\ell+1}}{u^\ell}, \frac{(av+b)^{\ell+1}}{v^\ell} \right\}\;.
\]
\end{proof}

\begin{corollary}\label{cor:max-thresh/z<rub}
For all $\rho \in (0,1]$, for $\smth = \frac{1}{\rho}\left( \frac{\ln \ub}{\ln (\rob \ub)} - 2 \right)$ it holds that
\[
\max_{z \in [\frac{1}{\rob}, \frac{1}{\rob}+\rho(\ub-\frac{1}{\rob})]} \frac{\thresh^\rho_\rob(z)^{\smth+1}}{z^{\smth}} \leq \rob \ub\;.
\]
\end{corollary}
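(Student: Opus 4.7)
The plan is to apply Lemma~\ref{lem:max(az+b)/z} with $\ell = \smth$ to the interval $I := [1/\rob,\ 1/\rob + \rho(\ub - 1/\rob)]$. By equation~\eqref{eq:rho.family}, on $I$ the threshold $\thresh^\rho_\rob$ is an affine function $z \mapsto az + b$, with strictly positive slope (since $\rho > 0$ and $\varphi_\rob(1/\rob) < 1/\rob$), whose values lie in the positive range $[\rob\ub,\ 1/\rob] \subset (0,\infty)$. This ensures the lemma's hypotheses are met, and reduces the maximum of $\thresh^\rho_\rob(z)^{\smth+1}/z^{\smth}$ over $I$ to the maximum of its values at the two endpoints $z = 1/\rob$ and $z = 1/\rob + \rho(\ub - 1/\rob)$.

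Next, I would verify each resulting endpoint bound separately. At the left endpoint, $\thresh^\rho_\rob(1/\rob) = \varphi_\rob(1/\rob)$ by continuity of the piecewise definition, so the inequality becomes $\varphi_\rob(1/\rob)^{\smth+1}/(1/\rob)^{\smth} \le \rob\ub$. Substituting the explicit value of $\varphi_\rob(1/\rob)$ from the affine definition of $\varphi_\rob$, taking logarithms, and dividing through by $\log\ub$ turns this into a linear condition in the quantity $\smth\rho$. At the right endpoint, I would use the explicit expression for $\thresh^\rho_\rob$ prescribed by~\eqref{eq:rho.family} (the $\rho$-interpolation is designed precisely so this branch joins the constant branch $1/\rob$ on $[1/\rob + \rho(\ub - 1/\rob),\ub]$), and the same logarithmic reduction yields a second linear condition in $\smth\rho$. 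The prescribed value $\smth\rho = \ln\ub/\ln(\rob\ub) - 2$ is then exactly what is needed to make both conditions hold, with equality in an appropriate limiting regime.

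The principal obstacle is the sharp verification at the right endpoint, which is the binding constraint: the exponent $\smth$ is calibrated so that this endpoint meets the bound $\rob\ub$ exactly. The algebraic manipulation requires carefully inverting the relation $\log(\rob\ub) = \log\ub/(\smth\rho + 2)$ that defines $\smth$, while tracking the joint roles of $\rho \in (0,1]$ and $\rob \in [\ub^{-1},\ub^{-1/2}]$. The left-endpoint bound is expected to be less tight and should follow from a cruder estimate of $\varphi_\rob(1/\rob)$ (for instance using $\varphi_\rob(1/\rob) \leq 1/\rob$) combined with the same logarithmic bookkeeping.
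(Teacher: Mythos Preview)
Your overall plan---apply Lemma~\ref{lem:max(az+b)/z} on $I=[1/\rob,\,1/\rob+\rho(\ub-1/\rob)]$ and check the two endpoints---is exactly what the paper does, and your treatment of the right endpoint is on target. However, your handling of the \emph{left} endpoint contains a genuine gap.

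The crude estimate you propose, $\varphi_\rob(1/\rob)\le 1/\rob$, only yields
\[
\frac{\varphi_\rob(1/\rob)^{\smth+1}}{(1/\rob)^{\smth}} \le \frac{(1/\rob)^{\smth+1}}{(1/\rob)^{\smth}}=\frac{1}{\rob},
\]
and since $\rob\le\ub^{-1/2}$ we have $1/\rob\ge\rob\ub$, so this does \emph{not} give the required bound $\rob\ub$. Nor does the direct substitution of the explicit value $\varphi_\rob(1/\rob)=\frac{\rob^3\ub^2-2\rob^2\ub+1}{(1-\rob)\rob^2\ub}$ produce a clean ``linear condition in $\smth\rho$'' as you suggest: the logarithmic inequality $(\smth+1)\ln\varphi_\rob(1/\rob)+\smth\ln\rob\le\ln(\rob\ub)$ is linear in $\smth$ alone, with a coefficient $\ln(\rob\varphi_\rob(1/\rob))$ that has no transparent relation to $\rho$ or to $\ln(\rob\ub)$. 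So the left endpoint is not in fact ``less tight'' in any way that your proposed estimates can exploit.

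The paper's device here is to apply Lemma~\ref{lem:max(az+b)/z} a \emph{second} time, now to the affine function $\varphi_\rob$ on the larger interval $[\rob\ub,\ub]\ni 1/\rob$, which gives
\[
\frac{\varphi_\rob(1/\rob)^{\smth+1}}{(1/\rob)^{\smth}}\le \max_{z\in[\rob\ub,\ub]}\frac{\varphi_\rob(z)^{\smth+1}}{z^{\smth}}=\max\Big\{\rob\ub,\;\frac{\ub}{(\rob\ub)^{\smth+1}}\Big\},
\]
using $\varphi_\rob(\rob\ub)=\rob\ub$ and $\varphi_\rob(\ub)=1/\rob$. The second term is $\le\rob\ub$ precisely when $\smth\ge k:=\tfrac{\ln\ub}{\ln(\rob\ub)}-2$, which holds since $\smth=k/\rho\ge k$. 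This second application of the lemma is the missing idea in your proposal.
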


\begin{proof}
By definition of the threshold $\thresh^\rho_\rob$, we have for $z \in [\frac{1}{\rob}, \frac{1}{\rob}+\rho(\ub-\frac{1}{\rob})]$ \[ \thresh^\rho_\rob(z) = \varphi_\rob(z) + \frac{\frac{1}{\rob} - \varphi_\rob(z)}{\rho(\ub-\frac{1}{\rob})}\left(z-\frac{1}{\rob}\right)\;, \]
which can be written as $az+b$ with $a = \frac{1/\rob - \varphi_\rob(z)}{\rho(\ub-1/\rob)} \geq 0$ because $\varphi_\rob(z) < 1/\rob$ for all $z \leq \ub$. Consequently, Lemma \ref{lem:max(az+b)/z} gives that
\begin{align}
\max_{z \in [\frac{1}{\rob}, \frac{1}{\rob}+\rho(\ub-\frac{1}{\rob})]} \frac{\thresh^\rho_\rob(z)^{\smth+1}}{z^{\smth}} 
&= \max\left\{ \frac{\thresh^\rho_\rob(1/\rob)^{\smth+1}}{1/\rob^{\smth}}, \frac{\thresh^\rho_\rob(\frac{1}{\rob}+\rho(\ub-\frac{1}{\rob}))^{\smth+1}}{(\frac{1}{\rob}+\rho(\ub-\frac{1}{\rob}))^{\smth}} \right\} \nonumber \\
&= \max\left\{ \frac{\varphi_\rob(1/\rob)^{\smth+1}}{1/\rob^{\smth}}, \frac{1/\rob^{\smth+1}}{(\frac{1}{\rob}+\rho(\ub-\frac{1}{\rob}))^{\smth}} \right\}\;. \label{aligneq:max2terms}
\end{align}
We will now prove that both terms in the maximum are at most equal to $\rob \ub$.

For all $h \in \mathbb{R}$, we have the equivalences
\begin{align*}
\frac{1/\rob^{h+1}}{(\frac{1}{\rob}+\rho(\ub-\frac{1}{\rob}))^{h}} \leq \rob \ub
&\iff \frac{1}{(1+\rho(\rob \ub - 1))^h} \leq \rob^2\ub\\
&\iff -h \ln\big(1+\rho(\rob \ub - 1)\big) \leq \ln (\rob^2 \ub) = - \big(\ln\ub - 2 \ln(\rob\ub) \big)\\
&\iff h \geq \frac{\ln\ub - 2\ln(\rob\ub)}{\ln\big(1+\rho(\rob \ub - 1)\big)}\;.
\end{align*}
Moreover, we have by concavity of $x \mapsto \ln x$ that
\[
\ln\big(1+\rho(\rob \ub - 1)\big)
= \ln\big(\rho\rob\ub + (1-\rho)\big) \geq \rho \ln(\rob \ub) + (1-\rho) \ln 1 = \rho \ln (\rob \ub)\;,
\]
therefore, $\smth = \frac{1}{\rho}\left( \frac{\ln \ub}{\ln(\rob \ub)}-2\right)$ satisfies 
\[
\smth 
= \frac{1}{\rho}\left( \frac{\ln \ub}{\ln(\rob \ub)}-2\right)
=  \frac{\ln \ub - 2 \ln(\rob\ub)}{\rho\ln(\rob \ub)}
\geq \frac{\ln\ub - 2\ln(\rob\ub)}{\ln\big(1+\rho(\rob \ub - 1)\big)}\;,
\]
and we deduce with the previous equivalences that 
\begin{equation}\label{eq:max2terms_term1}
\frac{1/\rob^{\smth+1}}{\left(\frac{1}{\rob}+\rho(\ub-\frac{1}{\rob})\right)^{\smth}} \leq \rob \ub\;. 
\end{equation}

Let us now prove that $\frac{\varphi_\rob(1/\rob)^{\smth+1}}{1/\rob^{\smth}} \leq \rob \ub$.
Since $\varphi_\rob$ is a linear mapping with a positive slope, using that $1/\rob \in [\rob \ub, \ub]$ and Lemma \ref{lem:max(az+b)/z}, we obtain that
\begin{align*}
\frac{\varphi_\rob(1/\rob)^{\smth+1}}{1/\rob^{\smth}}
&\leq \max_{z \in [\rob \ub, \ub]} \frac{\varphi_\rob(z)^{\smth+1}}{z^{\smth}}\\
&= \max \left\{ \frac{\varphi_\rob(\rob\ub)^{\smth+1}}{(\rob\ub)^{\smth}}, \frac{\varphi_\rob(\ub)^{\smth+1}}{\ub^{\smth}} \right\}\\
&= \max \left\{ \frac{(\rob\ub)^{\smth+1}}{(\rob\ub)^{\smth}}, \frac{(1/\rob)^{\smth+1}}{\ub^{\smth}} \right\}\\
&= \max \left\{ \rob \ub, \frac{\ub}{(\rob\ub)^{\smth+1}} \right\}\;.
\end{align*}
Observing that $k = \rho \smth = \frac{\ln \ub}{\ln \rob \ub}-2$ is the solution of $\frac{\ub}{(\rob\ub)^{k+1}} = \rob \ub$, and given that $k \leq s$ and $\rho \leq 1$, we have that
\[
\frac{\ub}{(\rob\ub)^{\smth+1}}
\leq \frac{\ub}{(\rob\ub)^{k+1}}
= \rob \ub\;,
\]
hence 
\begin{equation}\label{eq:max2terms_term2}
\frac{\varphi_\rob(1/\rob)^{\smth+1}}{1/\rob^{\smth}} 
\leq \frac{\varphi_\rob(1/\rob)^{k+1}}{1/\rob^{k}}
= \max \left\{ \rob \ub, \frac{\ub}{(\rob\ub)^{k+1}} \right\}
= \rob \ub\;.
\end{equation}

Finally, using Equations \eqref{aligneq:max2terms}, \eqref{eq:max2terms_term1}, and \eqref{eq:max2terms_term2}, we deduce that
\[
\max_{z \in [\frac{1}{\rob}, \frac{1}{\rob}+\rho(\ub-\frac{1}{\rob})]} \frac{\thresh^\rho_\rob(z)^{\smth+1}}{z^{\smth}} \leq \rob \ub \;,
\]
which concludes the proof.
\end{proof}

%\ConsistencyRobustnessSmoothnessComplex*

%\subsubsection*{Proof of Theorem \ref{thm: [deterministic Pareto-Optimal smooth Algorithm] consistency-robustness-smoothness complex}}

\ConsistencyRobustnessSmoothnessComplex*

\begin{proof}
Consider $\rob \in [\ub^{-1}, \ub^{-1/2}]$, an instance $\prices = (\prices_i)_{i=1}^n$, a prediction $\pred$ of $\price$, and let $\error = \min(\frac{\pred}{\price}, \frac{\price}{\pred})$. We will prove the smoothness guarantee separately on the intervals $[1,\rob \ub)$, $[\rob \ub,1/\rob]$, $[1/\rob, 1/\rob + \rho(\ub - 1/\rob)]$, and $[1/\rob + \rho(\ub - 1/\rob), \ub]$. To lighten the notation, we simply write $\smth$ for $\max\big(1,\frac{1}{\rho}(\frac{\ln \ub}{\ln (\rob \ub)}-2)\big)$ instead of $\smth_\rho$.

\paragraph{Case 1.} If $\pred \in [1,\rob \ub)$, then $\thresh^\rho_\rob(\pred) = \rob \ub$. If $\price < \rob \ub$ then
\[
\frac{\A^\rho_\rob(p,y)}{\price}
\geq \frac{1}{\price}
\geq \frac{1}{\rob \ub}
\geq \frac{1}{\rob \ub} \error^\smth\;,
\]
and if $\price \geq \rob \ub$ then the payoff of the algorithm is at least equal to the threshold $\rob \ub$, hence
\[
\frac{\A^\rho_\rob(p,y)}{\price}
\geq \frac{\thresh^\rho_\rob(\pred)}{\price} = \frac{\rob \ub}{\price}
\geq \frac{\pred}{\price}
\geq \frac{1}{\rob \ub} \error
\geq \frac{1}{\rob \ub} \error^\smth
\;.
\]
We used in the last two inequalities that $\frac{\pred}{\price} \geq \error$, $\rob \ub \geq 1$ and $\smth \geq 1$.

\paragraph{Case 2.} Consider now the case of $\pred \in [\rob \ub, 1/\rob]$, then 
$\thresh^\rho_\rob(\pred) = \varphi_\rob(\pred) = \frac{\rob \ub - 1}{1-\rob} + \frac{1-\rob^2 \ub}{1-\rob}\cdot \frac{\pred}{\rob \ub}$.
If $\price \geq \thresh^\rho_\rob(\pred)$ then the payoff of the algorithm is at least equal to the threshold. Using that $\frac{\pred}{\price} \geq \error$, $1 \geq \error$ and $\price \leq \ub$, we obtain
\begin{align}
\frac{\A^\rho_\rob(p,y)}{\price}
&\geq \frac{\thresh^\rho_\rob(\pred)}{\price}
= \frac{\varphi_\rob(\pred)}{\price} \nonumber\\
&= \frac{\rob \ub - 1}{1-\rob}\cdot \frac{1}{\price} + \frac{1-\rob^2 \ub}{1-\rob}\cdot \frac{1}{\rob \ub}\cdot \frac{\pred}{\price}\nonumber\\
&\geq \left( \frac{\rob \ub - 1}{1-\rob}\cdot \frac{1}{\price} + \frac{1-\rob^2 \ub}{1-\rob}\cdot \frac{1}{\rob \ub} \right) \error \nonumber \\
&\geq \left( \frac{\rob \ub - 1}{1-\rob}\cdot \frac{1}{\ub} + \frac{1-\rob^2 \ub}{1-\rob}\cdot \frac{1}{\rob \ub} \right) \error \nonumber\\
&= \frac{1}{\rob \ub (1-\rob)}\left(\rob^2\ub - \rob + 1 - \rob^2\ub \right) \error \nonumber\\
&= \frac{1}{\rob \ub} \error
\geq \frac{1}{\rob \ub} \error^\smth\;. \label{aligneq:case-thresh=varphi-p>thresh}
\end{align}
On the other hand, if $\price < \thresh^1_\rob(\pred)$, recalling that $\thresh^\rho_\rob(z) = \varphi_\rob(z)$ for $z \in [\rob \ub, 1/\rob]$, we can use Inequality \eqref{eq:max2terms_term2} from the proof of Corollary \ref{cor:max-thresh/z<rub}, which gives for $k = \frac{\ln \ub}{\ln (\rob \ub)}-2$ that 
\[
\max_{z \in [\rob \ub, \ub]} \frac{\thresh^\rho_\rob(z)^{k+1}}{z^k} 
= \frac{\varphi_\rob(z)^{k+1}}{z^k} 
= \max \left\{ \rob \ub, \frac{\ub}{(\rob\ub)^{k+1}} \right\}
= \rob \ub\;,
\]
and it follows that
\begin{align*}
\frac{\A^\rho_\rob(\prices,\pred)}{\price}
&\geq \frac{1}{\price}
\geq \frac{1}{\thresh^\rho_\rob(\pred)}\\
&\geq \frac{1}{\thresh^\rho_\rob(\pred)} \cdot \frac{1}{\rob \ub} \cdot \frac{\thresh^\rho_\rob(\pred)^{k+1}}{\pred^k}
= \frac{1}{\rob \ub} \left( \frac{\thresh^\rho_\rob(\pred)}{\pred} \right)^k\\
&\geq \frac{1}{\rob \ub} \left( \frac{\price}{\pred}\right)^k
\geq \frac{1}{\rob \ub} \error^k
\geq \frac{1}{\rob \ub} \error^\smth\;,
\end{align*}
where we used in the last inequality that $\error \leq 1$ and $k \leq \frac{k}{\rho} \leq \max(1,\frac{k}{\rho}) = \smth$.

\paragraph{Case 3.} For $\pred \in [1/\rob, 1/\rob + \rho(\ub-1/\rob)]$, if $\price \geq \thresh^\rho_\rob(\pred)$, then observing that $\thresh^\rho_\rob(\pred) \geq \varphi_\rob(\pred)$, we obtain with the same computation as Eq.~\eqref{aligneq:case-thresh=varphi-p>thresh} that
\[
\frac{\A^\rho_\rob}{\price} 
\geq \frac{\thresh^\rho_\rob(\pred)}{\price}
\geq \frac{\varphi_\rob(\pred)}{\price}
\geq \frac{1}{\rob \ub} \error^\smth\;.
\]
On the other hand, if $\price < \thresh^\rho_\rob(\pred)$, then by Corollary \ref{cor:max-thresh/z<rub}, we have for $k = \frac{\ln\ub}{\ln(\rob \ub)}-2$ that 
\[
\max_{z \in [\frac{1}{\rob}, \frac{1}{\rob}+\rho(\ub-\frac{1}{\rob})]} \frac{\thresh^\rho_\rob(z)^{\frac{k}{\rho}+1}}{z^{\frac{k}{\rho}}} \leq \rob \ub\;,
\]
therefore, the ratio between the algorithm's payoff and the maximum payoff can be lower bounded as follows
\begin{align*}
\frac{\A^\rho_\rob(\prices,\pred)}{\price}
&\geq \frac{1}{\price}
\geq \frac{1}{\thresh^\rho_\rob(\pred)}\\
&\geq \frac{1}{\thresh^\rho_\rob(\pred)} \cdot \frac{1}{\rob \ub} \cdot \frac{\thresh^\rho_\rob(\pred)^{\frac{k}{\rho}+1}}{\pred^{\frac{k}{\rho}}}
= \frac{1}{\rob \ub} \left( \frac{\thresh^\rho_\rob(\pred)}{\pred} \right)^k\\
&\geq \frac{1}{\rob \ub} \left( \frac{\price}{\pred}\right)^{\frac{k}{\rho}}
\geq \frac{1}{\rob \ub} \error^{\frac{k}{\rho}}
\geq \frac{1}{\rob \ub} \error^\smth\;.
\end{align*}
The last inequality holds because $\frac{k}{\rho} \leq \max(1,{\frac{k}{\rho}}) = \smth$.

\paragraph{Case 4.} Finally, if $y \in (1/\rob + \rho(\ub - 1/\rob), \ub]$, then $\thresh^\rho_\rob(\pred) = 1/\rob$. If $\price \geq 1/\rob$, then we have immediately that
\[
\frac{\A^\rho_\rob(\prices,\pred)}{\prices}
\geq \frac{\thresh^\rho_\rob(\pred)}{\price}
= \frac{1}{\rob \price}
\geq \frac{1}{\rob \ub}
\geq \frac{1}{\rob \ub} \error^\smth\;.
\]

Now if $\price < 1/\pred$, let $k = \frac{\ln\ub}{\ln(\rob \ub)}-2$ and $z_\rho = 1/\rob + \rho(\ub - 1/\rob)$. Observing that $\smth \geq \frac{k}{\rho} \geq k$, and $\pred \geq z_\rho$, and $\rob \pred \geq 1$, we deduce from Corollary \ref{cor:max-thresh/z<rub}
\[
\frac{1}{\rob}\left( \frac{1}{\rob \pred} \right)^\smth
\leq \frac{1}{\rob}\left( \frac{1}{\rob \pred} \right)^{k/\rho}
= \frac{(1/\rob)^{\frac{k}{\rho}+1}}{y^{k/\rho}}
\leq \frac{(1/\rob)^{\frac{k}{\rho}+1}}{z_\rho^{k/\rho}}
= \frac{\thresh(z_\rho)^{\frac{k}{\rho}+1}}{z_\rho^{k/\rho}}
\leq \max_{z \in [\frac{1}{\rob}, z_\rho]} \frac{\thresh^\rho_\rob(z)^{\frac{k}{\rho}+1}}{z^{k/\rho}} 
\leq \rob \ub\;,
\]
which yields for $\price < \thresh^\rho_\rob(\pred) = 1/\rob$ that
\[
\frac{\A^\rho_\rob(\prices,\pred)}{\price} 
\geq \frac{1}{\price}
> \frac{1}{1/\rob}
\geq \frac{1}{\rob \ub} \left( \frac{1}{\rob \pred} \right)^\smth
> \frac{1}{\rob \ub} \left( \frac{\price}{\pred} \right)^\smth
= \frac{1}{\rob \ub} \error^\smth\;.
\]

\paragraph{Conclusion.} 
All in all, for any $\pred \in [1,\theta]$, it holds that
\[
\frac{\A^\rho_\rob(p,y)}{\price}
\geq \frac{1}{\rob \ub} \error^\smth \;,
\]
with $\smth = \max\left(1,\frac{1}{\rho}( \frac{\ln \ub}{\ln(\rob \ub)}-2)\right)$.

Finally, the threshold function $\thresh^\rho_\rob$ is in the class $\SPO$, then we have by Theorem \ref{thm:thresh-pareto-optimal} that $\A^\rho_\rob$ is $\rob$-robust, and it deduce that
\[
\frac{\A^\rho_\rob(p,y)}{\price}
\geq \max\left(\rob, \frac{1}{\rob \ub} \error^{\smth} \right)\;,
\]
which concludes the proof.
  
\end{proof}

\subsection{Lower Bound on Smoothness}
\LowerBoundSmoothness*

\begin{proof}
Let $\A$ be a deterministic algorithm for \OMS{} with, and assume that it satisfies for all $\prices$ and $\pred$ that
\[
\frac{\A(\prices,\pred)}{\price} \geq \max\left( \rob, \frac{1}{\rob \ub}\error(\pred, \price)^{\smthu} \right)\;.
\]
In particular, $\A$ has robustness $\rob$ and consistency $1/\rob\ub$.

To prove the lower bound, we will use the instances $\{\instance(\qmax)\}_{\qmax \in [1,\ub]}$ defined in Eq.~\eqref{eq:worst-case-instance}. On these instances, for a fixed prediction $\pred$, any deterministic algorithm behaves as a single threshold algorithm. Therefore, there exists $\thresh : [1,\ub] \to [1,\ub]$ satisfying that $\A(\instance(\qmax),y) = \A_\thresh(\instance(\qmax),y)$ for all $\qmax, \pred \in [1,\ub]$. 

The lower bound satisfied by $\A$ ensures that it achieves Pareto-optimal consistency $1 / (\rob \ub)$ and robustness $\rob$. Consequently, $\A_\thresh$ also attaints them on the sequences of prices $\{\instance(\qmax)\}_{\qmax \in [1,\ub]}$. These instances are precisely those used to establish the constraints on Pareto-optimal thresholds in Theorem \ref{thm:thresh-pareto-optimal}, which implies that the theorem’s constraints hold for $\thresh$.  In particular, we have that $\thresh(\rob \ub) = \rob \ub$ and $\thresh(\ub) = 1/\rob$.

Let $\pred = \ub$ and $\qmax = \frac{\thresh(\ub)}{1+\varepsilon} = \frac{1/\rob}{1+\varepsilon}$ for some $\varepsilon > 0$. Since $\qmax < \thresh(\ub)$, when $\A$ is given as input the instance $\instance(\qmax)$, it does not select the maximum price and ends up selecting a price of $1$, hence
\[
\frac{\A(\instance(\qmax),\ub)}{\qmax} 
= \frac{1}{\qmax}
= (1+\varepsilon) \rob\;.
\]
Furthermore, by assumption, this ratio above is at least 
\[
\frac{1}{\rob \ub} \error(\qmax, \ub)^{\smthu}
= \frac{1}{\rob \ub} \error\left( \frac{1}{(1+\varepsilon)\rob}, \ub\right)^{\smthu}
= \frac{1}{\rob \ub} \left( \frac{1}{(1+\varepsilon)\rob \ub}\right)^{\smthu}
= \frac{1}{(\rob \ub)^{{\smthu}+1}}(1+\varepsilon)^{\smthu} \;,
\]
therefore, we have that
\begin{align*}
(1+\varepsilon)\rob \geq \frac{1}{(\rob \ub)^{{\smthu}+1}}(1+\varepsilon)^{\smthu}\;.
\end{align*}
This inequality holds for all $\varepsilon>0$, which gives in the limit $\varepsilon \to 0$ that
\[
\rob  \geq \frac{1}{(\rob \ub)^{{\smthu}+1}}\;,
\]
and we obtain by equivalences that
\begin{align*}
\rob  \geq \frac{1}{(\rob \ub)^{{\smthu}+1}}\;,
&\iff (\rob \ub)^{\smthu} \geq \frac{\ub}{(\rob\ub)^2}\\
&\iff {\smthu} \ln (\rob \ub) \geq \ln \ub - 2 \ln(\rob \ub)\\
&\iff {\smthu} \geq \frac{\ln \ub}{\ln(\rob \ub)} - 2\;,
\end{align*}
which gives the claimed lower bound on ${\smthu}$.

\end{proof}
\section{Complements to Section \ref{sec: stochastic predictions}}\label{app: sec3}

Recall in this section the notations $\Price\sim\dprice$ for the maximum price, and $\Pred\sim\dpred$ for the prediction. When considered, their coupling is denoted $\truecoupling$. 

\subsection{Complements on Section \ref{subsec: instances}}

\mypar{Stochastic predictions, deterministic prices.}

\DeterministicMarginalBounds*

\begin{proof}
    This is obtained by direct instantiation of \cref{thm: [Stochastic bounds] bound in expectation using true coupling}. In particular, for $\dprice=\delta_{\price}$ the second term of Eq.~\eqref{eq: PR bound in Expectation of ratio using coupling} can be substituted into with
    \begin{align}
        \Eb[\Price\error(\price,\pred)^{\smth}]&=\Eb[\Price\error(\Price,\Pred)^{\smth}\vert\Price=\price] \price=\int \min\left\{\frac{\pred}{\price}\,,\,\frac{\price}{\pred} \right\}^{\smth}\de\dpred(\pred)\,.\notag
    \end{align}
    Computing this integral explicitly reveals it to be $\detpricefun(\price)$.
\end{proof}

Through this section, we will use the following identity
\begin{align}
\detpricefun(\price)&=\int_\lb^{\price}\left(\frac{\pred}{\price}\right)^{\smth}\de \dpred(\pred) + \int_{\price}^\ub \left(\frac{\price}{\pred}\right)^{\smth}\de \dpred(\pred)\,.
\label{eq: [deterministic marginal] detpredfun form 2}
\end{align}
which is easily derived from the proof above.

Inspection of Eq.~\eqref{eq: [deterministic marginal] detpredfun form 2} reveals that $\detpricefun$ contains two different regimes (above and below $\price$). The \emph{mirrored} coefficients of $(\price,\pred)$ in each term reflect the inherent asymmetry of a threshold algorithm: performance is highly sensitive to whether $p^*\le\thresh_\rob^1(Y)$, which transfers to Eq.~\eqref{eq: [deterministic marginal] detpredfun form 2} via the definition of $\error$.

%\cstalpha

\DeterministicMarginalsUniformError*

\begin{proof}[{Proof of \cref{prop: [deterministic marginal] Additive and multiplicative uniform error}}]\hfill
    \begin{enumerate}
        \item  Consider first $s>1$. Compute $\price\detpricefun$ for this choice of $\dpred$, which yields
        \begin{align}
            \price\detpricefun(\price)&=\frac {{\price}^{1-\smth}}{2\epsilon}\int_{\price-\epsilon}^{\price} \pred^{\smth}\de y + \frac{{\price}^{1+\smth}}{2\epsilon}\int_{\price}^{\price+\epsilon} \pred^{-\smth}\de \pred\notag\\
            &= \frac {{\price}^{1-\smth}}{2\epsilon} \frac{{\price}^{1+\smth}-(\price-\epsilon)^{1+\smth}}{1+\smth} + \frac{{\price}^{1+\smth}}{2\epsilon}\frac{(\price+\epsilon)^{1-\smth}-{\price}^{1-\smth}}{1-\smth}\,.\label{eq: PR [deterministic marginal] Additive and multiplicative uniform error | proof 1}
        \end{align}
    Continuous differentiability of $\price\mapsto {\price}^{1+\smth}$ and $\price\mapsto {\price}^{1-\smth}$, along with Taylor's theorem, implies the existence of $(\hat {\price}_1,\hat{\price}_2)\in[\price-\epsilon,\price]\times[\price,\price+\epsilon]$ such that:
    \begin{align*}
        \frac{{\price}^{1+\smth}-(\price-\epsilon)^{1+\smth}}{1+\smth} &= \epsilon {\price}^{\smth} -\frac{\epsilon^2}2\smth {\price}^{\smth-1} + \frac{\epsilon^3}6\smth(\smth-1){\hat {\price}_1}^{\smth-2}\,,\\
        \frac{(\price+\epsilon)^{1-\smth}-{\price}^{1-\smth}}{1-\smth}&= \epsilon {\price}^{-\smth} - \frac{\epsilon^2}2\smth {\price}^{-1-\smth} -\frac{\epsilon^3}6\smth(\smth+1){\hat{\price}_2}^{-(2+\smth)}
        \,.
    \end{align*}
    Remark that one has the remainder bounds:
    \begin{align*}
        C_1&:= \frac{\smth(\smth-1)}{6}\ub^{\min\{0,2-\smth\}}\le \frac{\smth(\smth-1)}{6}{\hat{\price}_1}^{\cstalpha-3}\\
        C_2&:= \frac{\smth(1+\smth)}{6}\le \frac{\smth(1+\smth)}{6{\hat{\price}_2}^{2+\smth}}\,.
    \end{align*}
    Applying these bounds and the Taylor expansions of $\Phi$ to Eq.~\eqref{eq: PR [deterministic marginal] Additive and multiplicative uniform error | proof 1}, yields
    \begin{align}
             \price\detpricefun(\price)\ge \price-\frac\smth  2\epsilon -C\price\epsilon^2
    \label{eq: PR [deterministic marginal] Additive and multiplicative uniform error | proof 2}  
    \end{align}
    with $C=(C_1\theta^{-s}+C_2)/2$. Finally, injecting Eq.~\eqref{eq: PR [deterministic marginal] Additive and multiplicative uniform error | proof 2}  into \cref{lemma: [deterministic marginal] general bounds lemma} and recalling that $\dpred=\delta_{\price}$ implies $\renorm=\con/\price$ yields Eq.~\eqref{eq: [deterministic marginal] general bound for det price}.

  \item Now, for $s=1$, the computation of $\detpricefun$ reduces to
  \begin{align*}
      \price\detpricefun(\price)&=\frac1{2\epsilon}\frac{{\price}^2-(\price+\epsilon)^2}{2} + \frac{{\price}^2}{2\epsilon}\int_{\price}^{\price+\epsilon} \frac1\pred\de\pred\\
      &=\frac{\price}2 + \frac{\epsilon}{4} + \frac{{\price}^2}{2\epsilon}(\log(\price+\epsilon)-\log(\price))\,.
  \end{align*}
    Using a Taylor expansion on $\log$, for some $t\in[0,1]$, we have 
    \begin{align*}
            \frac{{\price}^2}{2\epsilon}(\log(\price+\epsilon)-\log(\price))&\ge \frac{{\price}^2}{2\epsilon}\left(\frac{\epsilon}{\price}-\frac{\epsilon^2}2\frac{1}{{\price}^2} + \frac{\epsilon^3}{6}\frac{1}{2(\price+\epsilon t)^3}\right)\\
            &\ge \frac{\price}2 + \frac{\epsilon}4 + \frac{\epsilon^2}{24}\,.
    \end{align*}
    and thus, we obtain an overall bound matching Eq.~\eqref{eq: PR [deterministic marginal] Additive and multiplicative uniform error | proof 2} up to modifying $C$.\qedhere
\end{enumerate}
\end{proof}

\begin{restatable}{corollary}{DeterministicMarginalsUniformError2}\label{cor: [deterministic marginal] multiplicative uniform error} Let $\dprice=\delta_{\price}$ for some $\price\in\range$ and $\dpred=\Unif([\price(1-\epsilon'),\price(1+\epsilon')])$. There is $C'>0$ dependent only on $(\smth,\ub)$ such that
        \begin{align}
            \frac{\Eb[\algone(\Price,\Pred)]}{\Eb[\Price]}\ge \frac{1}{\rob\ub}\left(1-\frac{\smth}2\epsilon-C'\epsilon^2\right)\label{eq: [deterministic marginal] Multiplicative uniform error PR bound},
        \end{align}
    as soon as $0<\epsilon'\le \min\{1-{\price}^{-1},\ub{\price}^{-1}-1\}$.
\end{restatable}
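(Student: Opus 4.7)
The plan is to reduce this multiplicative uniform case to the additive one already treated in the preceding example by a simple reparametrisation. Observe that $\Unif([\price(1-\epsilon'), \price(1+\epsilon')]) = \Unif([\price - \price\epsilon', \price + \price\epsilon'])$, which is precisely the distribution appearing in \cref{prop: [deterministic marginal] Additive and multiplicative uniform error} with the additive window half-width $\epsilon$ replaced by $\price\epsilon'$.

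First, I would verify that the hypothesis transfers correctly. The condition $0 < \epsilon \le \min\{\ub - \price, \price - \lb\}$ required by \cref{prop: [deterministic marginal] Additive and multiplicative uniform error} becomes, under the substitution $\epsilon = \price\epsilon'$, the inequality $0 < \price\epsilon' \le \min\{\ub - \price, \price - 1\}$. Dividing by $\price > 0$ yields exactly $0 < \epsilon' \le \min\{1 - \price^{-1}, \ub\price^{-1} - 1\}$, matching the corollary's hypothesis.

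Next, substituting $\epsilon = \price\epsilon'$ into the bound of \cref{prop: [deterministic marginal] Additive and multiplicative uniform error} gives
\begin{align*}
\frac{\Eb[\algone(\Price,\Pred)]}{\Eb[\Price]} \ge \frac{1}{\rob\ub}\left(1 - \frac{\smth}{2\price}\cdot \price\epsilon' - C(\price\epsilon')^2\right) = \frac{1}{\rob\ub}\left(1 - \frac{\smth}{2}\epsilon' - C\price^{2}\epsilon'^{2}\right).
\end{align*}
Setting $C' := C\ub^{2}$ and using $\price \le \ub$, we absorb the $\price^{2}$ into the constant to obtain the claimed bound $\frac{1}{\rob\ub}(1 - \frac{\smth}{2}\epsilon' - C'\epsilon'^{2})$. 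Since $C$ depends only on $(\smth,\ub)$ by the previous proposition, so does $C'$.

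There is no real technical obstacle here: the result is essentially a change of variables applied to the previously proven additive bound, with the only care required being that the resulting constant remains uniform in $\price \in \range$, which is guaranteed by the universal upper bound $\price \le \ub$. The only cosmetic ambiguity is that the corollary's display writes $\epsilon$ where the hypothesis uses $\epsilon'$; I read this as a typographical inconsistency and treat the two as the same symbol throughout.
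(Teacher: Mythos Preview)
Your proposal is correct and matches the paper's own proof, which is the one-line instruction ``Follow the proof of \cref{prop: [deterministic marginal] Additive and multiplicative uniform error} with $\epsilon=\epsilon'\price$.'' You have simply spelled out the substitution, verified the hypothesis transfers, and absorbed the extra $\price^{2}$ factor into the constant via $\price\le\ub$, which is exactly what the paper leaves implicit.
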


\begin{proof}
    Follow the proof of \cref{prop: [deterministic marginal] Additive and multiplicative uniform error} with $\epsilon=\epsilon'p^*$.
\end{proof}

\begin{restatable}{corollary}{DeterministicMarginalsMultiInterval}\label{cor: [deterministic marginal] multiple interval predictions}
    Let $(I_k)_{k=1}^K$, $I_k:=[\locus_k-\epsilon_k,\locus_k+\epsilon_k]$ for $k\in[K]$ be a collection of (w.l.o.g. disjoint) sub-intervals of $\range$. Let $\dprice=\delta_{\price}$ and let
    \[\dpred=\sum_{k=1}^{K}\weight_k\Unif(I_k)\]
    be a mixture of Uniforms of the intervals $I_k$, \ie $\weight_k>0$ for all $k\in[K]$ and $\sum_{k}\weight_k=1$.
    Then, there is a constant $C''>0$ dependent only on $(\parr,\ub)$ such that
    \begin{equation}\label{eq: [deterministic marginal/cor multiple interval predictions] bound}
        \frac{\Eb[\algone(\Price,\Pred)]}{\Eb[\Price]}\ge \frac{1}{\rob\ub\Eb[\Price]}\left(\weight_{k^*}\left(1-\frac{\smth }{2\price}\epsilon_k\right)\right.
            + \sum_{k\neq k^*}\weight_k\error(\price,\locus_k)^{\smth}+C''\hspace{-3pt}\sum_{k\in[K]}\weight_k\epsilon_k^2 \Big).\notag
    \end{equation}
    in which $k^*$ denotes the index (if it exists) such that $\price\in I_{k^*}$.
\end{restatable}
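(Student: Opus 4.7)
The plan is to decompose the integral defining $\detpricefun$ over the mixture components and handle each separately, combining a direct application of \cref{prop: [deterministic marginal] Additive and multiplicative uniform error} on the component containing $\price$ with a symmetry-based Taylor expansion on the remaining components. First, I invoke \cref{lemma: [deterministic marginal] general bounds lemma}, which reduces the claim to lower bounding $\detpricefun(\price) = \int \error(\price,\pred)^{\smth}\,\de\dpred(\pred)$. By linearity in the mixture,
\begin{equation*}
\detpricefun(\price) = \sum_{k=1}^K \weight_k \cdot \frac{1}{2\epsilon_k}\int_{\locus_k-\epsilon_k}^{\locus_k+\epsilon_k} \error(\price,\pred)^{\smth}\,\de\pred,
\end{equation*}
so it suffices to bound each interval contribution separately and sum with the weights $\weight_k$.

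For the (at most one) index $k^*$ such that $\price \in I_{k^*}$, the relevant integral is exactly the quantity computed in the proof of \cref{prop: [deterministic marginal] Additive and multiplicative uniform error}, after identifying $\epsilon = \epsilon_{k^*}$; this immediately yields a lower bound of the form $1 - \tfrac{\smth}{2\price}\epsilon_{k^*} - C_0 \epsilon_{k^*}^2$, with $C_0$ depending only on $(\smth,\ub)$.

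For each $k \neq k^*$, the map $\pred \mapsto \error(\price,\pred)^{\smth}$ is smooth on the whole interval $I_k$ because $\price \notin I_k$ implies $\pred \neq \price$ throughout $I_k$, so the piecewise definition of $\error$ does not switch on $I_k$. I would then Taylor expand this map around $\locus_k$ to second order: the order-one term vanishes under integration by the symmetry $\de\pred$ around $\locus_k$ on $[\locus_k-\epsilon_k,\locus_k+\epsilon_k]$, leaving
\begin{equation*}
\frac{1}{2\epsilon_k}\int_{\locus_k-\epsilon_k}^{\locus_k+\epsilon_k} \error(\price,\pred)^{\smth}\,\de\pred = \error(\price,\locus_k)^{\smth} + R_k(\price,\locus_k,\epsilon_k),
\end{equation*}
with $|R_k| \le C_1 \epsilon_k^2$, where $C_1$ bounds the second derivative of $\pred\mapsto \error(\price,\pred)^{\smth}$ uniformly over $\price, \pred\in\range$ away from the diagonal (which is bounded using $\price,\pred\in[\lb,\ub]$, giving a constant depending only on $(\smth,\ub)$).

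Finally I combine all $K$ contributions: weighting the $k^*$-bound by $\weight_{k^*}$ and each remaining bound by $\weight_k$, and folding the remainders $C_0\epsilon_{k^*}^2$ and $C_1\epsilon_k^2$ into a single constant $C'' := \max(C_0,C_1)$, yields the claimed expression. The main obstacle is ensuring the Taylor remainder constant $C_1$ is uniform across all pairs $(\price,\locus_k)$; this is handled by noting that $\pred\mapsto \error(\price,\pred)^{\smth}$ is either $({\pred}/{\price})^{\smth}$ or $({\price}/{\pred})^{\smth}$ on $I_k$, whose second derivatives are polynomial expressions in $\pred,\price$ uniformly bounded on $[\lb,\ub]^2$ with a bound depending only on $(\smth,\ub)$. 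Plugging the resulting lower bound on $\detpricefun(\price)$ into \cref{lemma: [deterministic marginal] general bounds lemma} and using $\Eb[\Price]=\price$ gives the stated inequality.
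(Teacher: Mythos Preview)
Your proof is correct and follows essentially the same route as the paper: decompose $\detpricefun(\price)$ linearly over the mixture components, invoke \cref{prop: [deterministic marginal] Additive and multiplicative uniform error} directly on the $k^*$ component, and use a second-order Taylor expansion on the remaining components. The only cosmetic difference is that the paper computes the one-sided integrals $\int_{I_k} y^{\pm \smth}\,\de y$ in closed form (as $[(\locus_k+\epsilon_k)^{1\pm\smth}-(\locus_k-\epsilon_k)^{1\pm\smth}]/(1\pm\smth)$) and then Taylor-expands these antiderivatives around $\locus_k$, whereas you Taylor-expand the integrand $\pred\mapsto\error(\price,\pred)^{\smth}$ directly and use the symmetry of $\Unif(I_k)$ to kill the first-order term; both arrive at the same $\error(\price,\locus_k)^{\smth}+O(\epsilon_k^2)$ contribution with a remainder constant depending only on $(\smth,\ub)$. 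One minor wording slip: the second derivatives of $\pred\mapsto(\pred/\price)^{\smth}$ and $\pred\mapsto(\price/\pred)^{\smth}$ are not ``polynomial expressions'' unless $\smth\in\NN$, but they are power functions uniformly bounded on $[\lb,\ub]$, which is all you need.
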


\begin{proof}
 Note that $\dpred$ has density
    \begin{align}
        \pred\in\range\mapsto\sum_{k=1}^K\frac{\weight_k}{2\epsilon_k} \1_{\{\pred\in[\locus_k-\epsilon_k,\locus_k+\epsilon_k]\}}
        \notag
    \end{align}
    with respect to the Lebesgue measure. We can thus express $\detpricefun$ as a function of $\detpricefun_k$ its analogues for each $\Unif(I_k)$ as
    \begin{align}
        \detpricefun(\price)=\sum_{k=1}^K \weight_k \detpricefun_k(\price)\,. \label{[deterministic marginal] cor multiinterval predictions | proof 1}
    \end{align}
    We will now be able to proceed on each $\detpricefun_k$ as in the proof of the first part of \cref{prop: [deterministic marginal] Additive and multiplicative uniform error}. 
    
    Let us remark that the case of $k^*$ permits a direct application of \cref{prop: [deterministic marginal] Additive and multiplicative uniform error}.1, as $\price\in I_k$. This readily implies a contribution to the final bound of
    \begin{align}
        \frac{1}{\rob\ub\Eb[\Price]}\left(\weight_{k^*}(1-\frac{\smth}{2\price}\epsilon_k-C\epsilon_k^2)\right)\,,\label{eq: [deterministic marginal] cor multiinterval predictions | proof k star}
    \end{align}
    which is the first term of Eq.~\eqref{eq: [deterministic marginal/cor multiple interval predictions] bound}.

    Let us now fix $k\neq k^*$. Without loss of generality, we will order the intervals, so that $k<k^*$ implies that $x< p^*$ for every $x\in I_k$ and conversely for $k>k^*$. Notice that this implies that exactly one integral in each $\detpricefun_k$, $k\neq k^*$, may be non-zero, which is to say 
    \begin{align}
        \price\detpricefun_k(\price)= \frac{{\price}^{1-\smth}}{2\epsilon_k}\frac{(\locus_k+\epsilon_k)^{1+\smth}-(\locus_k-\epsilon_k)^{1+\smth}}{1+\smth}\1{\{k<k^*\}}+ \frac{{\price}^{1+\smth}}{2\epsilon_k}\frac{(\locus_k+\epsilon_k)^{1-\smth}-(\locus_k-\epsilon_k)^{1-\smth}}{1-\smth}\1{\{k>k^*\}}\notag
    \end{align}
    for any $k\neq k^*$.
    Let us take each term in turn and apply Taylor's theorem following the same methodology as the proof of \cref{prop: [deterministic marginal] Additive and multiplicative uniform error}. By adding and subtracting $\locus_k^{1+\smth}$ (resp. $\locus_k^{1-\smth}$), and using Taylor's theorem yields the existence of $(C_k^{(1)},C_k^{(2)})_{k\neq k^*}$ such that
    \begin{align}
        \frac{(\locus_k+\epsilon_k)^{1+\smth}-(\locus_k-\epsilon_k)^{1+\smth}}{1+\smth}&\ge 
        2\epsilon_k\locus_k^{\smth}+ \frac{\epsilon_k^3}{6}C_k^{(1)}\,,
        \notag \\
        \frac{(\locus_k+\epsilon_k)^{1-\smth}-(\locus_k-\epsilon_k)^{1-\smth}}{1-\smth}&\ge 
        2\epsilon_k\locus_k^{-\smth}+\frac{\epsilon_k^3}{6}C_k^{(2)}\,.
        \notag
    \end{align}
    Combining these over $k\in[K]$ yields
    \begin{align}
        \sum_{k\neq k^*}\weight_k\detpricefun_k(\price)&\ge {\price}^{1-\smth}\sum_{k<k^*}\weight_k\locus_k^{\smth}+{\price}^{1+\smth}\sum_{k>k^*}\weight_k\locus_k^{-\smth} + \frac{1}{12}\left(\sum_{k<k^*}C_k^{(1)}\ub^{1-\smth}\weight_k\epsilon_k^2+\sum_{k>k^*}C_k^{(2)}\weight_k\epsilon_k^2\right)\,. \label{[deterministic marginal] cor multiinterval predictions | proof 2}
    \end{align}    Now, add together Equations \eqref{[deterministic marginal] cor multiinterval predictions | proof 1} and \eqref{eq: [deterministic marginal] cor multiinterval predictions | proof k star}, recalling Eq.~ \eqref{[deterministic marginal] cor multiinterval predictions | proof 2}, and appeal to \cref{lemma: [deterministic marginal] general bounds lemma} to obtain
    \begin{align}
        \frac{\Eb[\algone(\Price,\Pred)]}{\Eb[\Price]}\ge \renorm\left( \weight_{k^*}\left(1-\frac{\smth}{2\price}\epsilon_k\right) + \sum_{k<k^*}\weight_k\left(\frac{\locus_k}{\price}\right)^{\smth}+\sum_{k>k^*}\weight_k\left(\frac{\price}{\locus_k}\right)^{\smth} + C''\sum_{k\in[K]}\epsilon_k^2\right)
    \end{align}
    for a suitably chosen constant $C''\ge 0$. To complete the proof, notice that the two sums can be combined using $\error$ as $\locus_k<\price$ if an only if $k<k^*$. 
\end{proof}

%\subsection{Complements to section \ref{subsec: example models/independent predictions}}\label{subapp: independent marginals}

\mypar{Deterministic predictions, stochastic prices}

Hereafter, we will use the following identity
\begin{align*}
        \detpredfun(\pred):&= \frac{1}{\Eb[\Price]}\price \int_\lb^\pred   \left(\frac{\price}{\pred}\right)^\smth\de \dprice(\price) + \price \int_\pred^\ub  \left(\frac{\pred}{\price}\right)^\smth\de \dprice(p)\,.\notag
\end{align*}

\mypar{Stochastic independent predictions and prices}

\IndependentMarginalsGeneralBounds*
\begin{proof}
    Starting from \cref{thm: [Stochastic bounds] bound in expectation using true coupling}, this follows from
    \[
        \frac{\Eb[\algone(\Price,\Pred)]}{\Eb[\Price]}\ge\frac{1}{\rob\ub\Eb[\Price]}\int\int \price\error(\price,\pred)^{\smth}\de \dpred(\pred)\de\dprice(\price)\,.
    \]
\end{proof}
\begin{corollary}\label{cor:  [independent marginals] general bounds lemma with density}
    Let $\truecoupling=\dprice\otimes\dpred$, with $\dprice,\dpred$ having density with respect to the Lebesgue measure, then the family $\family$ satisfies 
    \begin{align}
    \frac{\Eb[\algone(\Price,\Pred)]}{\Eb[\Price]}&\ge \frac{1}{\rob\ub\Eb[\Price]}\left( \int_\lb^\ub {\price}^{1-\smth}\de \dprice(\price)\int_\lb^\ub \pred^{-\smth}\de\dpred(\pred) + \int_\lb^\ub {\price}^{1+\smth}\de \dprice(\price)\int_\lb^\ub \pred^{\smth}\de\dpred(\pred)\right.\notag\\
    &\quad \left.-\int_\lb^\ub \left(\pred^{\smth}\int_\lb^\pred {\price}^{1-\smth}\de \dprice(\price) + \pred^{-\smth}\int_\pred^\ub{\price}^{1+\smth}\de\dprice(\price)\right)\de\dpred(\pred)\right)\label{eq: [independent marginals] general bounds lemma with density}
    \end{align}
\end{corollary}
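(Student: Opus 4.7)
The plan is to start from Corollary~\ref{lemma: [independent marginals] general bounds lemma}, which (after writing the double integral explicitly under independence) gives
\[
\frac{\Eb[\algone(\Price,\Pred)]}{\Eb[\Price]} \ge \frac{1}{\rob\ub\Eb[\Price]} \int_\lb^\ub\!\!\int_\lb^\ub \price\,\error(\price,\pred)^{\smth}\,\de\dpred(\pred)\,\de\dprice(\price),
\]
and to expand the integrand via the definition of $\error$, then rearrange using the product structure of $\truecoupling=\dprice\otimes\dpred$. The density assumption serves purely to guarantee joint measurability and finiteness, so that Fubini's theorem is applied freely throughout.

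First, observe that by the definition of $\error$ one has $\price\,\error(\price,\pred)^{\smth}=\price^{1+\smth}\pred^{-\smth}$ on $\{\price\le\pred\}$ and $\price\,\error(\price,\pred)^{\smth}=\price^{1-\smth}\pred^{\smth}$ on $\{\price>\pred\}$. Setting $a=\price^{1+\smth}\pred^{-\smth}$ and $b=\price^{1-\smth}\pred^{\smth}$, the monotonicity $a\le b\Leftrightarrow \price\le\pred$ (since $\smth>0$) lets me invoke the elementary identity $\min(a,b)=a+b-\bigl(a\,\1_{\{a>b\}}+b\,\1_{\{a\le b\}}\bigr)$, so that
\[
\price\,\error(\price,\pred)^{\smth} = \price^{1+\smth}\pred^{-\smth} + \price^{1-\smth}\pred^{\smth} - \price^{1+\smth}\pred^{-\smth}\,\1_{\{\price>\pred\}} - \price^{1-\smth}\pred^{\smth}\,\1_{\{\price\le\pred\}}.
\]

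Integrating both sides against $\de\dpred\,\de\dprice$, the two non-restricted terms factor immediately thanks to independence into products of single integrals of the form appearing in the statement. For the two restricted terms, I apply Fubini to perform the $\price$-integration first: the indicator $\1_{\{\price>\pred\}}$ restricts $\price$ to $[\pred,\ub]$, producing the inner integral $\int_\pred^\ub \price^{1+\smth}\de\dprice$ against the outer weight $\pred^{-\smth}\de\dpred$, and symmetrically $\1_{\{\price\le\pred\}}$ produces $\int_\lb^\pred \price^{1-\smth}\de\dprice$ against $\pred^{\smth}\de\dpred$. Grouping these two restricted integrals under a common outer integration over $\pred$ yields the bracketed subtracted term in \eqref{eq: [independent marginals] general bounds lemma with density}, completing the identity.

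The main obstacle is not analytic but combinatorial: one must carefully match the four exponent pairs $(1\pm\smth,\pm\smth)$ with the correct measures and correct integration domains so that the restricted terms cancel against the full ones in the right way. Beyond this bookkeeping, no deeper ingredient is required, since boundedness of the supports $[\lb,\ub]$ and the density assumption make all integrals finite and Fubini unconditionally available.
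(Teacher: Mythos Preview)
Your argument is correct and arrives at the same identity as the paper, but by a different and somewhat more direct route. The paper first splits $\int \price\,\detpricefun(\price)\,\de\dprice$ into the two regional pieces
\[
A=\int_\lb^\ub \price^{1-\smth}\!\int_\lb^{\price}\pred^{\smth}\de\dpred\,\de\dprice,\qquad
B=\int_\lb^\ub \price^{1+\smth}\!\int_{\price}^\ub\pred^{-\smth}\de\dpred\,\de\dprice,
\]
and then applies Lebesgue--Stieltjes integration by parts to each, extracting a full product of one-dimensional integrals minus a term with the order of integration swapped. You instead apply the algebraic identity $\min(a,b)=a+b-\max(a,b)$ directly to the integrand, which produces the two full products at once and reduces the remaining $\max$ piece to a single Fubini swap. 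The two computations are equivalent underneath---the Stieltjes IBP used in the paper is exactly the Fubini relation $\int V\,\de U+\int U\,\de V=U(\ub)V(\ub)$ for the cumulative functions---but your presentation is more elementary and makes the provenance of all four terms transparent in one line, without any IBP machinery. One side remark worth recording: carrying your decomposition through yields the product terms with exponent pairings $(1{+}\smth,\,-\smth)$ and $(1{-}\smth,\,+\smth)$, not $(1{-}\smth,\,-\smth)$ and $(1{+}\smth,\,+\smth)$ as printed; redoing the paper's own IBP carefully confirms that the displayed formula carries a sign typo in the $\pred$-exponents of the two product terms, and your pairings are the correct ones (as a sanity check, only they recover the value $z$ when $\dprice=\dpred=\delta_z$).
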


\begin{remark}
    The result of \cref{cor:  [independent marginals] general bounds lemma with density} can be slightly tweaked to hold even without densities using Lebesgue-Stieltjes integration by parts. We omit these details for the sake of conciseness.
\end{remark}

\begin{proof}
    Starting with \cref{lemma: [independent marginals] general bounds lemma}, decompose the integral as
    \begin{align}
        \int_\lb^\ub \price\detpricefun(\price)\de\dprice(\price)&= \underbrace{\int_\lb^\ub {\price}^{1-\smth}\int_\lb^{\price}\pred^{\smth}\de\dpred(\pred)\de \dprice(\price)}_{A} + \underbrace{\int_\lb^\ub {\price}^{1+\smth}\int_{\price}^\ub \pred^{-\smth}\de\dpred(\pred)\de\dprice(\price)}_{B}\,.\notag
    \end{align}
    Using integration by parts, in which the parts for A are $x\mapsto\int_\lb^{x}{\price}^{1-\smth}\de\dprice(\price)$ and $x\mapsto\int_\lb^{x} \pred^{\smth}\de \dpred(\pred)$ yields
    \begin{align}
        A=\int_\lb^\ub {\price}^{1-\smth}\de \dprice(\price)\int_\lb^\ub \pred^{-\smth}\de\dpred(\pred) - \int_\lb^\ub \pred^{\smth}\int_\lb^\pred {\price}^{1-\smth}\de \dprice(\price) \de\dpred(\pred)\label{eq: cor:  [independent marginals] general bounds lemma with density | proof 1}\,.
    \end{align}
    Similarly, $B$ can be integrated by parts with parts $x\mapsto -\int_x^\ub {\price}^{1+\smth}\de\dprice(\price)$ and $x\mapsto \int_x^\ub \pred^{-\smth}\de\dpred(y)$, which yields
    \begin{align}
        B= \int_\lb^\ub {\price}^{
    1+\smth}\de \dprice(\price)\int_\lb^\ub \pred^{\smth}\de\dpred(\pred) - \int_\lb^\ub\pred^{-\smth} \int_\pred^\ub {\price}^{1+\smth}\de\dprice(\price)\de\dpred(\pred)\label{eq: cor:  [independent marginals] general bounds lemma with density | proof 2}\,.
    \end{align}
    Combining Equations \eqref{eq: cor:  [independent marginals] general bounds lemma with density | proof 1} with \eqref{eq: cor:  [independent marginals] general bounds lemma with density | proof 2} completes the proof.
\end{proof}

\begin{restatable}{proposition}{IndependentMarginalsUniforms}\label{prop: [independent marginals] uniform marginals case}
    Let $\truecoupling=\dprice\otimes\dpred$ and $\dprice=\dpred=\Unif([c_1,c_2])$ the family $\family$ satisfies
    \begin{align}
        \frac{\Eb[\algone(\Price,\Pred)]}{\Eb[\Price]}&\ge \frac{1}{\rob\ub}\frac{2}{\zeta(1)^3}\Bigg(\zeta(2-\smth)\zeta(1-\smth)  - \frac{c_1^{2-\smth}\zeta(1+\smth)}{2-\smth} \notag\\
        &\qquad+ \zeta(2+\smth)\zeta(1+\smth)- \frac{c_2^{2+\smth}\zeta(1-\smth)}{2+\smth}\notag\\
        &\hspace{6em}- \zeta(3)\left(\frac1{2-\smth}-\frac1{2+\smth}\right)  \Bigg)\notag
    \end{align}
     when $\smth\not\in\{1,2\}$, with $\zeta:\gamma\in(0,+\infty)\mapsto (c_2^\gamma-c_1^\gamma)\gamma^{-1}$.
\end{restatable}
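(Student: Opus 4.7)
}
The plan is to proceed by direct computation, starting from Corollary \ref{cor:  [independent marginals] general bounds lemma with density}, which already expresses the lower bound as a combination of single and double integrals against $\dprice$ and $\dpred$. Substituting the uniform density $1/L$ (where $L := c_2 - c_1 = \zeta(1)$) reduces every single integral to an elementary monomial integral
\[
\int_{c_1}^{c_2} z^{\alpha}\,\frac{dz}{L} \;=\; \frac{1}{L}\cdot\frac{c_2^{\alpha+1}-c_1^{\alpha+1}}{\alpha+1} \;=\; \frac{\zeta(\alpha+1)}{L}
\]
valid for $\alpha \neq -1$. The hypothesis $\smth \notin \{1,2\}$ is exactly what guarantees that none of the exponents $\pm\smth$ and $1\pm\smth$ encountered coincides with the singular value $-1$, so all four ``product'' integrals in Corollary \ref{cor:  [independent marginals] general bounds lemma with density} collapse into closed-form expressions of the type $\zeta(2\pm\smth)\zeta(1\pm\smth)/L^2$.

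The only nontrivial computational work lies in the two double integrals. For the first one I would compute the inner integral first,
\[
\int_{c_1}^{\pred} \price^{1-\smth}\,\frac{d\price}{L} \;=\; \frac{\pred^{2-\smth}-c_1^{2-\smth}}{L(2-\smth)}\,,
\]
multiply by $\pred^{\smth}$ to split into monomials $\pred^{2}$ and $c_1^{2-\smth}\pred^{\smth}$, and then integrate against $\dpred$ to obtain $[\zeta(3)-c_1^{2-\smth}\zeta(1+\smth)]/[L^2(2-\smth)]$. The second double integral is handled symmetrically, after exchanging the role of the upper/lower endpoints, yielding $[c_2^{2+\smth}\zeta(1-\smth)-\zeta(3)]/[L^2(2+\smth)]$.

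Finally, I would assemble all four pieces (two products plus two double integrals with the minus sign coming from Corollary \ref{cor:  [independent marginals] general bounds lemma with density}), observe that the two $\zeta(3)$ contributions combine into the single term $-\zeta(3)(\tfrac{1}{2-\smth}-\tfrac{1}{2+\smth})$ displayed in the statement, and simplify the prefactor $1/(\rob\ub\,\Eb[\Price]\,L^2)$ using $\Eb[\Price] = \zeta(2)/\zeta(1)$, so that the overall normalising factor is expressed purely in terms of $\zeta$. The main ``obstacle'' is not conceptual but organisational: keeping careful track of the limits of integration, the minus signs inherited from Corollary \ref{cor:  [independent marginals] general bounds lemma with density}, and the pairing of $(c_1,2-\smth)$ versus $(c_2,2+\smth)$ which reflects the intrinsic asymmetry of the problem noted throughout Section \ref{sec: stochastic predictions}. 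No technical difficulty arises, as every step reduces to integrating monomials on $[c_1,c_2]$.
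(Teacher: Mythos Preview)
Your plan is correct and follows essentially the same route as the paper: both start from Corollary~\ref{cor:  [independent marginals] general bounds lemma with density}, evaluate the two product integrals and the two nested integrals against the uniform density, and then recombine. The only cosmetic difference is that the paper carries out the intermediate computations with a generic support $[m_1,m_2]$ for $\dpred$ (introducing indicator functions for the relative positions of $\pred$ and $[c_1,c_2]$) before specialising to $m_i=c_i$, whereas you set $m_i=c_i$ from the outset, which slightly streamlines the bookkeeping.
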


\begin{proof}
    Starting with the decomposition of \cref{cor:  [independent marginals] general bounds lemma with density}, we can compute the terms separately. For the first two, we have
    \begin{align}
        \int_\lb^\ub {\price}^{1-\smth}\de \dprice(\price)\int_\lb^\ub \pred^{-\smth}\de\dpred(\pred)&= C\left(\frac{c_2^{2-\smth}-c_1^{2-\smth} }{2-\smth}\right)\left(\frac{m_2^{1-\smth}-m_1^{1-\smth} }{1-\smth}\right)\label{eq: IndependentMarginalsUniforms | proof 1}\\
        \int_\lb^\ub {\price}^{
    \cstalpha}\de \dprice(\price)\int_\lb^\ub \pred^{\smth}\de\dpred(\pred)&= C\left(\frac{c_2^{2+\smth}-c_1^{2+\smth}}{2+\smth}\right)\left(\frac{m_2^{1+\smth}-m_1^{1+\smth} }{1+\smth}\right)\label{eq: IndependentMarginalsUniforms | proof 2}
    \end{align}
    in which 
    \[
        C:=\frac1{(c_2-c_1)(m_2-m_1)}\,.
    \]
    Turning now to the second term of Eq.~\eqref{eq: [independent marginals] general bounds lemma with density}, we have 
    \begin{align}
        \int_\lb^\pred {\price}^{1-\smth}\de \dprice(\price) &=\frac{1}{c_2-c_1}\left(\frac{y^{2-\smth}-c_1^{2-\smth}}{2-\smth}\1_{\{\pred\in[c_1,c_2]\}} +\frac{c_2^{2-\smth}-c_1^{2-\smth}}{2-\smth}\1_{\{\pred>c_2\}}\right)
        \notag\\
        \intertext{and}
        \int_\pred^\ub{\price}^{\cstalpha}\de\dprice(\price)&= \frac{1}{c_2-c_1}\left( \frac{c_2^{2+\smth} - \pred^{2+\smth}}{2+\smth} \1_{\{\pred\in[c_1,c_2]\}} + \frac{c_2^{2+\smth}-c_1^{2+\smth}}{2+\smth}\1_{\{\pred<c_1\}}
        \right)\,,\notag
    \end{align}
so that, by integrating according to Eq.~\eqref{eq: [independent marginals] general bounds lemma with density} yields
\begin{align}
    \int_{\lb}^\ub \pred^{\smth}\int_\lb^\pred {\price}^{1-\smth}\de \dprice(\price)\de\dpred(\pred)&= C\left(\frac1{2-\smth}\left[\frac{\pred^3}3 - c_1^{2-\smth}\frac{\pred^{1+\smth}}{1+\smth} \right]_{c_1\vee m_1\wedge c_2}^{c_2\wedge m_2\vee c_1} + \frac{c_2^{2-\smth}-c_1^{2-\smth}}{2-\smth}\left[\frac{\pred^{1+\smth}}{\cstalpha}\right]^{c_2\vee m_2}_{c_2\vee m_1}\right)\label{eq: IndependentMarginalsUniforms | proof 3}
    \\
    \int_{\lb}^\ub \pred^{-\smth}\int_\pred^\ub{\price}^{1+\smth}\de\dprice(\price)\de\dpred(\pred) &= C\left(\frac1{2+\smth}\left[c_2^{2+\smth}\frac{\pred^{1-\smth}}{1-\smth}-\frac{\pred^3}{3}\right]_{c_1\vee m_1\wedge c_2}^{c_2\wedge m_2\vee c_1} + \frac{c_2^{2+\smth}-c_1^{2+\smth}}{2+\smth}\left[\frac{\pred^{1-\smth}}{1-\smth}\right]^{c_1\wedge m_2}_{c_1\wedge m_1} \right)\label{eq: IndependentMarginalsUniforms | proof 4}
\end{align}
    Recombining Equations \eqref{eq: IndependentMarginalsUniforms | proof 1}--\eqref{eq: IndependentMarginalsUniforms | proof 4} yields the result, up to simplifying for $m_i=c_i$ $i\in\{1,2\}$.
\end{proof}

\begin{figure}
    \centering
\includegraphics[width=0.5\linewidth]{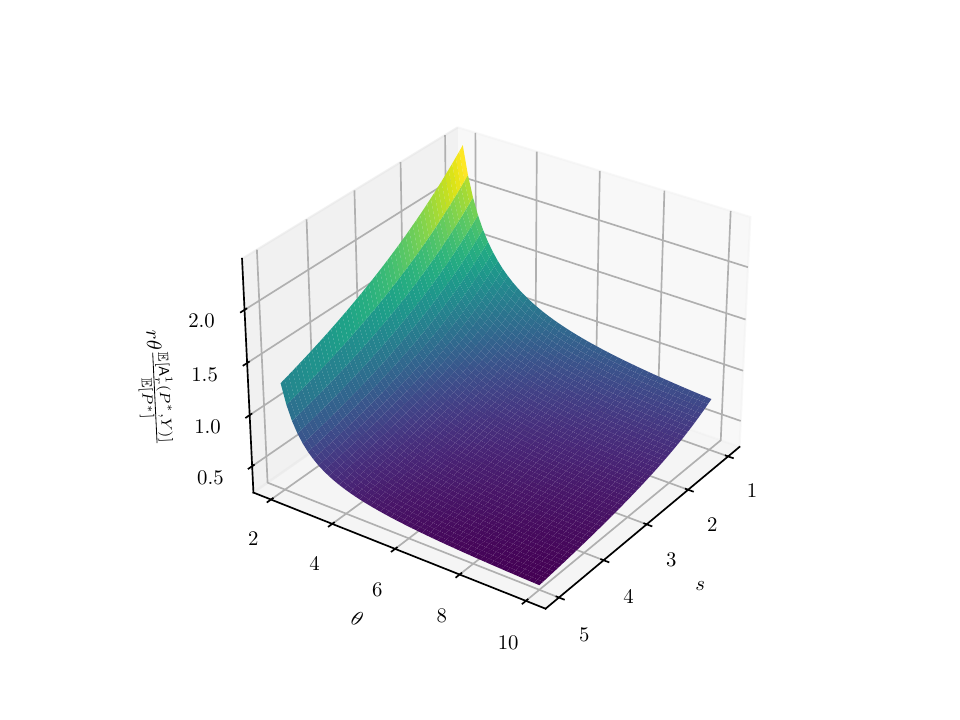}
    \caption{Numerical quadrature of \cref{prop: [independent marginals] uniform marginals case} for $c_1=1$, $c_2=\theta$ (for $\theta\in[1,10]$) as a function of $s\in[1,5]$.}
    \label{fig:3D plot}
\end{figure}

\subsection{Complements to section \ref{subsec: OT}}\label{subapp: OT}

Since it holds regardless of the coupling $\truecoupling$, the bound Eq.~\eqref{eq: in expectation of ratio using OT} has two direct benefits. First, it provides a notion of robustness for uncertainty in the coupling which is relevant for risk-assessment in practical applications. Second, it isolates the influence of the marginal distributions on the prediction from the coupling of $\Pred$ and $\Price$. Consequently, we can return to \eqref{eq: PR bound in Expectation of ratio using coupling} and isolate the contribution of the coupling, either through its transport sub-optimality 
\[
    \Eb[\Price\error(\Pred,\Price)^{\alpha-1}]-\inf_{\coupling\in\couplings(\dpred,\dprice)}\Eb[\Price\error(\Pred,\Price)^{\alpha-1}]
\]
or through a multiplicative analogue
\[
    \frac{\Eb[\Price\error(\Pred,\Price)^{\alpha-1}]}{\inf_{\coupling\in\couplings(\dpred,\dprice)}\Eb[\Price\error(\Pred,\Price)^{\alpha-1}]}\,.
\]
The geometry of these objects is highly intricate and unfortunately doesn't appear to have been studied previously. This highlights an interesting direction of research in the competitive analysis of optimal transport.

\mypar{Duality}. The Kantorovich problem Eq.~\eqref{eq: in expectation of ratio using OT} admits a dual problem under general conditions (see \eg Thm. 5.10 in  \cite{villani_optimal_2009}). In our case, the cost function is $c:= (\price,\pred)\in\range^2\mapsto \price\error(\price,\pred)^{\smth}\in[0,\ub]$.  This duality is strong, meaning that 
\begin{align}
    \inf_{\coupling\in\couplings(\dprice,\dpred)}\int\price\error(\price,\pred)^{\smth}\de\coupling(\price,\pred) = \sup_{(\varphi,\psi)\in\Xi}\int \varphi(\dpred)\de\dpred(\pred) + \int \psi(\price)\de\dprice(\pred)\,,\label{eq: dual problem OT}
\end{align}
in which 
\[
    \Xi:=\{(\varphi,\psi):\range^2\to[0,+\infty)^2 \mbox{ bounded and measurable}: \forall(\price,\pred)\in\range^2\quad \varphi(\price)+\psi(\pred)\le c(\price,\pred) \}\,.
\]
Given a bounded measurable function $f$, let $f^c$ denote its $c$-transform, \ie the operator $\cdot^c$ such that maps $f$ to
\begin{align}
    f^c:\price\mapsto \inf_{\pred\in\range} \price\error(\price,\pred)^{\smth} - \varphi(\pred)\,.\label{eq: def c-trasform}
\end{align}
The definition Eq.~\eqref{eq: def c-trasform} shows that any bounded measurable function $\varphi:\range\to\Rb$ forms an admissible $(\varphi,\varphi^c)\in\Xi$ with its $c$-transform (this is symmetrical in the sense that $(\psi^c,\psi)$ is admissible if $\psi$ is bounded and measurable).

From the perspective of competitive analysis, the dual problem offers an appealing tool, as it suffices to propose a potential $\varphi$, compute its $c$-transform, and integrate it to obtain a bound. Of course, guessing the optimal potential $\varphi$ is as hard as solving the primal, but sub-optimal proposals can effectively leverage insights about the problem. \Cref{prop: [stochastic predictions] OT dual bounds} gives an example of this methodology. Note that it is possible to improve the potential $\varphi$ again by proposing ${(\varphi^c)}^c$, which we omit for brevity. 

\OTdualBoundsOne*

\begin{proof}
We split the proof into two parts, starting with the bound, and then the equality.
\begin{enumerate}
    \item 
    We start with the potential $\psi\equiv0$, whose $c$-transform (see Eq.~\eqref{eq: def c-trasform}) is 
    \begin{align*}
\varphi(\price)=\psi^c(\price)&=\inf_{\pred\in\range}\price\error(\price,\pred)^{\smth} \\
    &=\min\left\{\frac1{\price},\frac{\price}\ub\right\}^{\smth}\price\\
    &={\price}^{1-\smth}\1_{\{\price\ge \sqrt{\ub}\}}+ {\price}^{1+\smth}\1_{\{\price\le \sqrt{\ub}\}}\,.
    \end{align*}
    Inserting into Eq.~\eqref{eq: dual problem OT} yields the result.
    \item By duality of the optimal transport problem, 
\begin{align}
    \inf_{\dpred\in\Ps(\range)}\inf_{\coupling\in\couplings(\dprice,\dpred)}\int c(\price,\pred)\de\coupling(\price,\pred)&= \inf_{\dpred\in\Ps(\range)}\sup_{(\varphi,\psi)\in\Xi}\int\varphi(\price)\de\dprice(\price)+\int \psi(\pred)\de\dpred(\pred)
    \notag\\
    &\ge \sup_{(\varphi,\psi)\in\Xi}\left\{\int\varphi(\price)\de\dprice(\price)+\inf_{\dpred\in\Ps(\range)}\int \psi(\pred)\de\dpred(\pred)\right\}\,.\label{eq: OT duality inf over G | 1}
\end{align}
The inner infimum in Eq.~\eqref{eq: OT duality inf over G | 1} is equal to $\iota:=\inf\{\psi(y):y\in\range\}\in \Rb$. Consequently, the constraint set $\Xi$
can be replaced\footnote{Hereafter, all optimisation problems are over functions $(\varphi,\psi)$ which are bounded and measurable. We omit this line by line to reduce notational clutter.} without changing the value of the problem by 
\begin{align}
    S:=\{(\varphi,\iota)\in[0,+\infty)^{\range} \times\Rb: \quad \forall \price\in\range^2\quad \varphi(\price)\le \inf_{\pred\in\range}c(\price,\pred)-\iota\}
    \notag
\end{align}
so that 
\begin{align}
    \sup_{(\varphi,\psi)\in\Xi}\left\{\int\varphi(\price)\de\dprice(\price)+\inf_{\dpred\in\Ps(\range)}\int \psi(\pred)\de\dpred(\pred)\right\}&= \sup_{(\varphi,\iota)\in S}\left\{ \int \varphi(\price) \de\dprice(\price) +\iota\right\}
    \notag\\
    &= \sup_{\iota\in\Rb}\sup_{\psi_\iota\in S_\iota}\int \varphi_\iota(\price) \de\dprice(\price) +\iota\,,\label{eq: OT duality inf over G | 2}
\end{align}
in which $S_\iota:=\{\psi_\iota: \psi_\iota\le \inf_{\pred\in\range}c(\price,\pred)-\iota\}$.

As $\dprice$ is positive, the inner maximisation over $\psi_\iota$  in Eq.~\eqref{eq: OT duality inf over G | 2} saturates the constraints, whereafter, since $\int\iota\de\dprice=\iota$ and by combining with Eq.~\eqref{eq: OT duality inf over G | 1}, one has the result. \qedhere

\end{enumerate}

\end{proof}
\section{Additive Prediction Error}\label{appendix:additive-error}
\LowerBoundSmoothnessAdditive*

We separately prove the lower and upper bounds stated in the theorem. First, in Lemma \ref{lem:A1-smoothness-ADD}, we establish the lower bound on the performance of $\A^1_\rob$:  
\[
\frac{\A^1_\rob(\prices, \pred)}{\price} \geq \max \left( \rob, \frac{1}{\rob \ub} - \cstbeta^* \frac{\eta(\price, \pred)}{\price} \right)\;.
\]
Next, in Lemma \ref{lem:impoosibility-smoothness-ADD}, we show that $\cstbeta^*$ is the best possible constant.

\subsection{Smoothness guarantee on \texorpdfstring{$\A^1_\rob$}{A}}
We begin by proving a lemma that will be useful for establishing the smoothness of $\A^1_\rob$.

\begin{lemma}
The function $\varphi_\rob : z \mapsto \frac{\rob \ub - 1}{1-\rob} + \frac{1-\rob^2\ub}{1-\rob} \cdot \frac{z}{\rob \ub}$ satisfies for all $z$ that
\[
\varphi(z) - \rob \ub = \frac{1-\rob^2\ub}{\rob\ub - 1} (z - \varphi_\rob(z))\;.
\]
\end{lemma}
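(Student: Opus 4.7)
The plan is to exploit the geometric meaning of $\varphi_\rob$: it is the unique affine function interpolating the points $(\rob\ub, \rob\ub)$ and $(\ub, 1/\rob)$ (as noted in the paragraph that defines $\varphi_\rob$). Once I know $\varphi_\rob$ is affine with $\varphi_\rob(\rob\ub) = \rob\ub$, the identity reduces to a routine manipulation involving only its slope.

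First I would verify, by plugging $z = \rob\ub$ into the explicit formula $\varphi_\rob(z) = \frac{\rob\ub - 1}{1-\rob} + \frac{1-\rob^2\ub}{1-\rob}\cdot \frac{z}{\rob\ub}$, that $\varphi_\rob(\rob\ub) = \rob\ub$. Then I would read off the slope directly from the formula, namely
\[
m := \frac{1-\rob^2\ub}{\rob\ub(1-\rob)}.
\]
Since $\varphi_\rob$ is affine and fixes $\rob\ub$, one has the identity $\varphi_\rob(z) - \rob\ub = m(z - \rob\ub)$ for every $z$.

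Starting from this, I would write $z - \varphi_\rob(z) = (z - \rob\ub) - (\varphi_\rob(z) - \rob\ub) = (1-m)(z-\rob\ub)$, and therefore
\[
\varphi_\rob(z) - \rob\ub = m(z - \rob\ub) = \frac{m}{1-m}\bigl(z - \varphi_\rob(z)\bigr).
\]
It then suffices to compute the ratio $m/(1-m)$ and check it equals the claimed constant. A short algebraic simplification,
\[
1 - m = \frac{\rob\ub(1-\rob) - (1-\rob^2\ub)}{\rob\ub(1-\rob)} = \frac{\rob\ub - 1}{\rob\ub(1-\rob)},
\]
yields $m/(1-m) = (1-\rob^2\ub)/(\rob\ub - 1)$, exactly the coefficient appearing in the statement.

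There is no real obstacle here: the only place care is required is in the simplification of $1-m$ to confirm the cancellation $\rob\ub - \rob^2\ub - 1 + \rob^2\ub = \rob\ub - 1$. A purely computational alternative (expand both sides of the target identity using the explicit formula for $\varphi_\rob$ and match coefficients of $z$ and the constant term) would also work, but the affine-through-$(\rob\ub,\rob\ub)$ viewpoint makes the identity more transparent and explains why the fixed point $\rob\ub$ appears on both sides.
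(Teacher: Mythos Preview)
Your proposal is correct and is essentially the same computation as the paper's proof: both show that $\varphi_\rob(z)-\rob\ub$ and $z-\varphi_\rob(z)$ are constant multiples of $z-\rob\ub$ (equivalently, of $\tfrac{z}{\rob\ub}-1$) and then take the ratio of the two constants. Your affine-through-the-fixed-point framing is a slightly cleaner way to package the same algebra, since it explains at a glance why the identity must hold, whereas the paper simply expands both sides directly; but the underlying manipulation and the final simplification $1-m=\tfrac{\rob\ub-1}{\rob\ub(1-\rob)}$ are identical.
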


\begin{proof}\label{lem:(varphi-rub)/(y-varphi)}
This lemma can be proved with immediate computation. For all $z \in \mathbb{R}$, it holds that
\begin{align*}
z - \varphi_\rob(z)
&= z - \frac{\rob \ub - 1}{1-\rob} + \frac{1-\rob^2\ub}{1-\rob} \cdot \frac{z}{\rob \ub}\\
&= \frac{\rob \ub - \rob^2\ub = 1 + \rob^2\ub}{1-\rob} \cdot \frac{z}{\rob \ub} - \frac{\rob \ub - 1}{1-\rob}\\
&= \frac{\rob \ub - 1}{1-\rob} \left( \frac{z}{\rob \ub} - 1 \right)\;.
\end{align*}
On the other hand, we have that
\begin{align*}
\varphi_\rob(z) 
&= \frac{\rob \ub - 1}{1-\rob} + \frac{1-\rob^2\ub}{1-\rob} \cdot \frac{z}{\rob \ub} - \rob \ub\\
&= \frac{\rob\ub-1 - \rob\ub + \rob^2 \ub}{1-\rob} + \frac{1-\rob^2\ub}{1-\rob} \cdot \frac{z}{\rob \ub}\\
&= \frac{1-\rob^2\ub}{1-\rob} \left( \frac{z}{\rob \ub} - 1\right)\\
&= \frac{1-\rob^2\ub}{\rob\ub - 1} \cdot \frac{\rob\ub - 1}{1-\rob} \left( \frac{z}{\rob \ub} - 1\right)\\
&= \frac{1-\rob^2\ub}{\rob\ub - 1} \big( z - \varphi_\rob(z) \big)\;,
\end{align*}
which concludes the proof.
\end{proof}

\begin{lemma}\label{lem:A1-smoothness-ADD}
Algorithm $\A^1_\rob$ satisfies
\begin{align}
    \frac{\A^1_\rob(\prices, \pred)}{\price} \geq \max\left( \rob, \frac{1}{\rob \ub}-\cstbeta\frac{\additiveerror(\price, \pred)}{\price}\right)\,,\label{eq: [deterministic Pareto-Optimal smooth Algorithm/ THM additive smoothness] PR bound}
    \end{align}
where $\cstbeta:= \frac{1-\rob^2\ub}{\rob \ub} \max\big( \frac{1}{1-\rob}, \frac{1}{\rob \ub - 1} \big)\;$.
\end{lemma}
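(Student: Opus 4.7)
The robustness term $\rob$ in the $\max$ comes for free: since $\thresh^1_\rob(\pred) = \max(\rob\ub,\varphi_\rob(\pred))$ clearly satisfies the bracketing constraints of \Cref{thm:thresh-pareto-optimal} for any $\pred \in [1,\ub]$, Theorem~\ref{thm:thresh-pareto-optimal} guarantees $\A^1_\rob(\prices,\pred)/\price \ge \rob$. So the real work is to establish the smoothness inequality $\A^1_\rob(\prices,\pred)/\price \ge 1/(\rob\ub) - \cstbeta^*\,\additiveerror(\price,\pred)/\price$. I will use the observation (stated around Eq.~\eqref{eq:worst-case-payoff}) that the worst-case payoff of $\A^1_\rob$ on an instance with maximum price $\price$ is exactly $\rob\ub$ (resp. $\varphi_\rob(\pred)$) if $\price$ exceeds the threshold and $1$ otherwise. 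Thus everything reduces to algebraic inequalities that depend only on $(\price,\pred)$.

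\textbf{Case A: $\pred \le \rob\ub$, so $\thresh^1_\rob(\pred)=\rob\ub$.} If $\price < \rob\ub$ the algorithm defaults to $1$ and $1/\price \ge 1/(\rob\ub)$ trivially suffices. If $\price \ge \rob\ub$ the payoff is at least $\rob\ub$, and the inequality $\rob\ub/\price \ge 1/(\rob\ub) - \cstbeta(\price-\pred)/\price$ rearranges to $\cstbeta(\price-\pred) \ge \price/(\rob\ub) - \rob\ub$. The right-hand side is non-positive for $\price \le \rob^2\ub^2$ (and recall $\rob \le \ub^{-1/2}$); otherwise one needs $\cstbeta \ge (\price-\rob^2\ub^2)/(\rob\ub(\price-\pred))$, which is maximised by taking $\price=\ub$ and $\pred\to(\rob\ub)^-$, yielding the threshold value $\frac{1-\rob^2\ub}{\rob\ub(1-\rob)}$.

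\textbf{Case B: $\pred \ge \rob\ub$, so $\thresh^1_\rob(\pred)=\varphi_\rob(\pred)$.} A quick check shows $\varphi_\rob(z)/z$ is non-increasing on $[\rob\ub,\ub]$ with minimum $1/(\rob\ub)$ at $z=\ub$, and $\varphi_\rob(z) \le z$ on that interval. \emph{Subcase B1}, $\price \ge \varphi_\rob(\pred)$: the payoff is at least $\varphi_\rob(\pred)$, and using linearity of $\varphi_\rob$ together with $\varphi_\rob(\price)/\price \ge 1/(\rob\ub)$, the required inequality reduces to $\cstbeta \ge \frac{1-\rob^2\ub}{(1-\rob)\rob\ub}$ (only the case $\pred < \price$ is non-trivial, the slope of $\varphi_\rob$ appearing explicitly). \emph{Subcase B2}, $\price < \varphi_\rob(\pred)$: the algorithm ends with payoff $1$, so one needs $1/\price \ge 1/(\rob\ub) - \cstbeta(\pred-\price)/\price$, equivalently $\cstbeta \ge (\price-\rob\ub)/\bigl(\rob\ub(\pred-\price)\bigr)$. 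As a function of $\price$ this is increasing, so the worst case is $\price \to \varphi_\rob(\pred)^-$. Plugging Lemma~\ref{lem:(varphi-rub)/(y-varphi)} into that limit gives
\[
\cstbeta \ge \frac{\varphi_\rob(\pred)-\rob\ub}{\rob\ub\,(\pred-\varphi_\rob(\pred))} = \frac{1-\rob^2\ub}{\rob\ub\,(\rob\ub-1)},
\]
independent of $\pred$.

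\textbf{Conclusion.} Taking the maximum of the constants arising in Cases A, B1 and B2 yields exactly $\cstbeta^*$. The main (though mild) obstacle is Subcase B2, where the tight worst case is a boundary value $\price \to \varphi_\rob(\pred)^-$ and the identity of Lemma~\ref{lem:(varphi-rub)/(y-varphi)} is needed to cancel the $\pred$-dependence and isolate the clean constant $\frac{1-\rob^2\ub}{\rob\ub(\rob\ub-1)}$; the other cases are routine linear inequalities.
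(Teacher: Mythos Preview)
Your proposal is correct and follows essentially the same route as the paper's proof: the same split into $\pred\le \rob\ub$ versus $\pred>\rob\ub$, the same three sub-regimes in the latter (you merge the paper's two ``$\price\ge\varphi_\rob(\pred)$'' sub-cases into B1, but the computations are identical), and the same use of Lemma~\ref{lem:(varphi-rub)/(y-varphi)} to eliminate the $\pred$-dependence in the tight sub-case B2. The only cosmetic difference is that you argue backward---isolating the minimal $\beta$ required in each case and then observing that $\beta^*$ dominates all of them---whereas the paper plugs in $\beta^*$ and verifies each inequality directly.
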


\begin{proof}
By Theorem \ref{thm:thresh-pareto-optimal} and by definition of $\A^1_\rob$, we have that $\A^1_\rob$ is $\rob$-robust, hence it satisfies for all $\prices \in [1,\ub]^n$ and $\pred \in [1,\ub]$ that 
$\frac{\A^1_\rob}{\price} \geq \rob$. It remains to prove the second lower bound that characterizes smoothness. We will prove it separately for $\pred \in [1,\rob \ub]$ and $\pred \in [\rob \ub, \ub]$.

\paragraph{Case 1.} For $\pred \in [1,\rob \ub]$, the acceptance threshold is $\thresh^1_\rob(\pred) = \rob \ub$. If $\price < \rob \ub$ then
\[
\frac{\A^1_\rob(\prices,\rob)}{\price} 
\geq \frac{1}{\price}
\geq \frac{1}{\rob \ub}
\geq \frac{1}{\rob \ub} - \cstbeta \frac{\eta(\price,\pred)}{\price}\;.
\]
Assume now that $\price > \rob \ub$. It holds that
\begin{equation}\label{eq:proof-additive-smoothness-y<rub<p>rub-eq1}
\frac{\A^1_\rob(\prices,\rob)}{\price} 
\geq \frac{\rob \ub}{\price}\;. 
\end{equation}
Since $\rob \ub \geq 1$, we have that $(\rob \ub)^2 \geq \rob \ub$, thus the mapping $z \mapsto \frac{z-\rob^2 \ub^2}{z-\rob \ub}$ is non-decreasing on $(\rob \ub, \ub]$, and we deduce that
\[
\cstbeta = 
\frac{1-\rob^2\ub}{\rob \ub} \max\left( \frac{1}{1-\rob}, \frac{1}{\rob \ub - 1} \right)
\geq \frac{1}{\rob \ub} \cdot \frac{1 - \rob^2\ub}{1-\rob}
= \frac{1}{\rob \ub} \cdot \frac{\ub - \rob^2\ub^2}{\ub-\rob \ub}
\geq \frac{1}{\rob \ub} \cdot \frac{\price - \rob^2\ub^2}{\price-\rob \ub}\;,
\]
and successive equivalences, recalling that $\price > \rob \ub$, show that
\begin{align*}
\cstbeta \geq \frac{1}{\rob \ub} \cdot \frac{\price - \rob^2\ub^2}{\price-\rob \ub}
&\iff \cstbeta (\price - \rob \ub) \geq \frac{1}{\rob \ub}(\price - \rob^2\ub^2)\\
&\iff \cstbeta \left(1 - \frac{\rob \ub}{\price}\right) \geq \frac{1}{\rob \ub}\left(1 - \frac{\rob^2\ub^2}{\price}\right)\\
&\iff \frac{\rob \ub}{\price} \geq \frac{1}{\rob \ub} - \cstbeta  \left(1 - \frac{\rob \ub}{\price}\right)\;.
\end{align*}
Combining this with Eq.~\eqref{eq:proof-additive-smoothness-y<rub<p>rub-eq1} and using that $\pred \leq \rob \ub$, we obtain
\[
\frac{\A^1_\rob(\prices,\rob)}{\price} 
\geq \frac{1}{\rob \ub} - \cstbeta  \left(1 - \frac{\rob \ub}{\price}\right)
\geq \frac{1}{\rob \ub} - \cstbeta  \left(1 - \frac{\pred}{\price}\right)
= \frac{1}{\rob \ub} - \cstbeta  \frac{\eta(\price, \pred)}{\price}\;.
\]

\paragraph{Case 2.} For $y \in (\rob \ub, \ub]$, the threshold is $\thresh^1_\rob(\pred) = \frac{\rob \ub - 1}{1-\rob} + \frac{1-\rob^2\ub}{1-\rob} \cdot \frac{\pred}{\rob \ub}$. 

Assume that $\price < \thresh^1_\rob(\pred)$. Since $y > \rob \ub$, using the definition of $\beta$ and Lemma \ref{lem:(varphi-rub)/(y-varphi)} yields
\[
\cstbeta 
= \frac{1-\rob^2\ub}{\rob \ub} \max\left( \frac{1}{1-\rob}, \frac{1}{\rob \ub - 1}\right)
\geq \frac{1}{\rob\ub}\cdot \frac{1-\rob^2\ub}{\rob\ub-1}
= \frac{1}{\rob\ub}\cdot \frac{\varphi_\rob(\pred)-\rob\ub}{\pred - \varphi_\rob(\pred)}\;.
\]
Using again that $\pred > \rob \ub$, we have that the mapping $z \mapsto \frac{z-\rob\ub}{y-z}$ is non-decreasing on $[1 , y]$, and given that $\pred < \thresh^1_\rob(\pred) = \varphi_\rob(\pred)$ and both $\pred$ and $\varphi_\rob(\pred)$ are within the interval $[1,\pred]$ ($\varphi(z) \leq z$ for $z \geq \rob \ub$), we deduce that
\[
\cstbeta \geq \frac{1}{\rob\ub}\cdot \frac{\price-\rob\ub}{\pred - \price}\;,
\]
and using that $\pred < \thresh^1_\rob(\pred) \leq \pred$, this is equivalent to writing
\[
\cstbeta \frac{\pred - \price}{\price} \geq \frac{1}{\rob \ub} - \frac{1}{\price}\;,
\]
Hence we have the lower-bound
\[
\frac{\A^1_\rob(\prices,\pred)}{\price} \geq \frac{1}{\price} \geq \frac{1}{\rob \ub} - \cstbeta \frac{\pred - \price}{\price}
= \frac{1}{\rob \ub} - \cstbeta \frac{\eta(\pred, \price)}{\price}\;.
\]

Assume now that $\price \in [\thresh^1_\rob(\pred), \pred)$. Since $\A^1_\rob$ is $\rob$-robust and $1/\rob\ub$-consistent, Theorem \ref{thm:thresh-pareto-optimal} shows that the threshold $\thresh^1_\rob$ satisfies $\thresh^1_\rob(z) \geq \frac{z}{\rob \ub}$ for all $z \in (\rob \ub, \ub]$, which is in particular true for $\pred$ with the current assumptions. Therefore, we obtain immediately that
\[
\frac{\A^1_\rob(\prices, \pred)}{\price} \geq \frac{\thresh^1_\rob}{\price} 
\geq \frac{1}{\rob \ub}
\geq \frac{1}{\rob \ub} - \cstbeta \frac{\eta(\price, \pred)}{\price}\;.
\]

Finally, assume that $\price \in [\pred, \ub]$. Using the expression of $\thresh^1_\rob(\pred)$, we have 
\begin{align*}
\frac{\A^1_\rob(\prices, \pred)}{\price}
&\geq \frac{\thresh^1_\rob}{\price}
= \frac{1-\rob \ub}{1-\rob} \cdot \frac{1}{\price} + \frac{1-\rob^2 \ub}{(1-\rob)\rob \ub}\cdot\frac{\pred}{\price}\\
&\geq \frac{1-\rob \ub}{1-\rob} \cdot \frac{1}{\ub} + \frac{1-\rob^2 \ub}{(1-\rob)\rob \ub}\cdot\frac{\pred}{\price}\\
&= \frac{1-\rob \ub}{1-\rob} \cdot \frac{1}{\ub} + \frac{1-\rob^2 \ub}{(1-\rob)\rob \ub} - \frac{1-\rob^2 \ub}{(1-\rob)\rob \ub}\left( 1 - \frac{\pred}{\price} \right)\\
&= \frac{1}{\rob \ub}\cdot \frac{\rob^2 - \rob + 1 - \rob^2\ub}{1-\rob} - \frac{1-\rob^2 \ub}{(1-\rob)\rob \ub}\left( 1 - \frac{\pred}{\price} \right)\\
&= \frac{1}{\rob \ub} - \frac{1-\rob^2 \ub}{(1-\rob)\rob \ub}\left( 1 - \frac{\pred}{\price} \right)\\
&\geq \frac{1}{\rob \ub} - \cstbeta \frac{\eta(\price, \pred)}{\price}\;,
\end{align*}
where we used in the last inequality that $\cstbeta:= \frac{1-\rob^2\ub}{\rob \ub} \max\big( \frac{1}{1-\rob}, \frac{1}{\rob \ub - 1} \big) \geq \frac{1-\rob^2\ub}{(1-\rob)\rob \ub}$ and that $\eta(\price, \pred) = \price - \pred$ since $\price \geq \pred$. This concludes the proof.
\end{proof}

\subsection{Lower bound on smoothness}

\begin{lemma}\label{lem:impoosibility-smoothness-ADD}
Let $\A$ be any algorithm with robustness $\rob$ and consistency $1/\rob \ub$. Suppose that $\A$ satisfies for all $\prices \in [1,\ub]^n$ and $\pred \in [1,\ub]$ that
\begin{equation}
\frac{\A(\prices, \pred)}{\price} \geq 
\max\left(
\rob, \frac{1}{\rob \ub} - \beta \frac{\eta(\price, \pred)}{\price}
\right)\;,
\end{equation}
for some $\beta \in \mathbb{R}$, then necessarily $\beta \geq \frac{1-\rob^2\ub}{\rob \ub} \max\big( \frac{1}{1-\rob}, \frac{1}{\rob \ub - 1} \big).$
\end{lemma}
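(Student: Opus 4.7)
The plan is to construct two different worst-case instance-prediction pairs, each forcing one of the two terms inside the $\max$ defining $\cstbeta^*$; taking the larger of the two bounds yields the desired inequality.

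As in the proof of Theorem~\ref{thm:lower-bound-smoothness}, I would first reduce to threshold algorithms on the parametrised family $\{\instance(q)\}_{q\in[1,\ub]}$ of Eq.~\eqref{eq:worst-case-instance}. For each fixed prediction $\pred$, any deterministic $\A$ is indistinguishable from $\A_\thresh$ on these sequences for some $\thresh:[1,\ub]\to[1,\ub]$, and since $\A$ attains the Pareto-optimal pair $(\rob,1/\rob\ub)$, Theorem~\ref{thm:thresh-pareto-optimal} forces $\thresh\in\SPO$. The key observation is that combining the two families of constraints in $\SPO$ at the endpoints pinches the threshold to the exact values $\thresh(\rob\ub)=\rob\ub$ and $\thresh(\ub)=1/\rob$; these are the two extremal points I will exploit.

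For the first bound, take $\pred=\rob\ub$ and the sequence $\instance(\ub)$. Since $\thresh(\rob\ub)=\rob\ub\le\ub$, Eq.~\eqref{eq:worst-case-payoff} gives a payoff of $\rob\ub+O(1/n)$, whereas $\price=\ub$ and the additive error is $\eta=\ub-\rob\ub=\ub(1-\rob)$. Substituting into the hypothesised smoothness bound and letting $n\to\infty$ yields $\rob\ge \tfrac{1}{\rob\ub}-\beta(1-\rob)$, which rearranges to $\beta\ge \tfrac{1-\rob^2\ub}{\rob\ub(1-\rob)}$.

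Symmetrically, for the second bound, take $\pred=\ub$ and the sequence $\instance(1/\rob-1/n)$. Now $\thresh(\ub)=1/\rob$ strictly exceeds the maximum price, so by Eq.~\eqref{eq:worst-case-payoff} the payoff defaults to $1$, giving $\A/\price = 1/(1/\rob-1/n)$ and $\eta=\ub-1/\rob+1/n$. Letting $n\to\infty$ in the smoothness bound yields $\rob\ge\tfrac{1}{\rob\ub}-\beta(\rob\ub-1)$, i.e. $\beta\ge\tfrac{1-\rob^2\ub}{\rob\ub(\rob\ub-1)}$. Taking the maximum of the two bounds concludes the argument. I do not anticipate a real obstacle: the proof is essentially a mirror image of Theorem~\ref{thm:lower-bound-smoothness}, the only subtlety being that additive error is not scale-invariant, so the two regimes (algorithm below versus above the threshold) yield different constraints and must both be witnessed.
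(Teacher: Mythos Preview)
Your proposal is correct and follows essentially the same approach as the paper: the same reduction to threshold algorithms via $\SPO$ to pin down $\thresh(\rob\ub)=\rob\ub$ and $\thresh(\ub)=1/\rob$, then the same two extremal instance-prediction pairs to force each term in the $\max$. The only cosmetic differences are the order of the two cases and your use of $q=1/\rob-1/n$ in place of the paper's $q_\varepsilon=\thresh(\ub)/(1+\varepsilon)$, which are equivalent after taking the limit.
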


\begin{proof}
Consider an algorithm $\A$ and $\beta \in \mathbb{R}$ satisfying the assumptions of the theorem. To establish the lower bound, we consider the instances $\{\instance(\qmax)\}_{\qmax \in [1,\ub]}$ as defined in Eq.~\eqref{eq:worst-case-instance}. On these instances, any deterministic algorithm is equivalent to a threshold-based algorithm. In particular, $\A$ is identical to $A_\thresh$ for some $\thresh : [1,\ub] \to [1,\ub]$.  

The assumption on $\A$ ensures that it achieves Pareto-optimal consistency $1 / (\rob \ub)$ and robustness $\rob$. Consequently, $\A_\thresh$ also attaints them on the sequences of prices $\{\instance(\qmax)\}_{\qmax \in [1,\ub]}$. These instances are precisely those used to establish the constraints on Pareto-optimal thresholds in Theorem \ref{thm:thresh-pareto-optimal}, which implies that the theorem’s constraints hold for $\thresh$.  In particular, we have that $\thresh(\rob \ub) = \rob \ub$ and $\thresh(\ub) = 1/\rob$. 

Let us now prove the lower bound on $\beta$. Let $\pred = \ub$ and $\qmax_\varepsilon = \frac{\thresh(\ub)}{1+\varepsilon} < \thresh(\ub)$ for some $\varepsilon > 0$. Using Eq.~\eqref{eq:worst-case-payoff} and the assumed lower bound on $\A$, it holds that
\[
\frac{\A(\instance(\qmax_\varepsilon),\pred)}{\price}
= \frac{\A_\thresh(\instance(\qmax_\varepsilon),\pred)}{\qmax_\varepsilon}
= \frac{1}{\qmax_\varepsilon}
\geq \frac{1}{\rob \ub} - \beta \cdot \frac{\ub - \qmax_\varepsilon}{\qmax_\varepsilon}\;.
\]
Taking the limit when $\varepsilon \to 0$ and recalling that $\thresh(\ub) = 1/\rob$, we obtain that
\[
\rob 
\geq \frac{1}{\rob \ub} - \beta \cdot \frac{\ub - 1/\rob}{1/\rob}
= \frac{1}{\rob \ub} - \beta \cdot (\rob \ub - 1) \;,
\]
and it follows that
\begin{equation}\label{eq:beta-lower-bound1}
\beta 
\geq \frac{1/\rob \ub - \rob}{\rob \ub - 1}
= \frac{1 - \rob^2 \ub}{\rob \ub(\rob \ub - 1)}\;.
\end{equation}

On the other hand, for $\pred = \rob \ub$ and $\qmax = \ub$, using Eq.~\eqref{eq:worst-case-payoff}, the assumed lower bound on $\A$, and that $\thresh(\rob \ub) = \rob \ub$, we have
\[
\frac{\A_\thresh(\instance(\ub),\pred)}{\ub}
= \frac{\A_\thresh(\instance(\ub),\pred)}{\ub}
= \frac{\thresh(\rob \ub) + O(\tfrac{1}{n})}{\ub}
= \rob + O(\tfrac{1}{n})
\geq \frac{1}{\rob \ub} - \beta \cdot \frac{\ub - \rob \ub}{\ub} 
= \frac{1}{\rob \ub} - \beta (1 - \rob)\;.
\]
Taking the limit for $n \to \infty$, we obtain that
\begin{equation}\label{eq:beta-lower-bound2}
\beta 
\geq \frac{1/\rob \ub - \rob}{1 - \rob}
= \frac{1-\rob^2 \ub}{\rob \ub(1 - \rob)}.
\end{equation}
Finally, combining Equations \eqref{eq:beta-lower-bound1} and \eqref{eq:beta-lower-bound2}, we deduce that 
\[
\beta 
\geq \max \left(\frac{1 - \rob^2 \ub}{\rob \ub(\rob \ub - 1)}, \frac{1 - \rob^2 \ub}{\rob \ub(1- \rob)} \right)
= \frac{1-\rob^2\ub}{\rob\ub} \max\left( \frac{1}{\rob \ub - 1}, \frac{1}{1-\rob} \right).
\]
\end{proof}

\subsection{Comparison with prior smooth algorithms}\label{appx:comparison-with-prior-smooth} 
In \citep{benomar2025tradeoffs}, the authors introduce a randomized family $\{\tilde{\A}^\rho\}_{\rho \in [0,1]}$ of algorithms. For a fixed $\rho$, the maximum robustness that their algorithm can achieve is at most $\frac{1-e^\rho}{\rho} \ub^{-1/2}$, hence remains bounded away from $\ub^{-1/2}$. Given any robustness level $\rob \in [\ub^{-1}, \frac{1-e^\rho}{\rho} \ub^{-1/2}]$, the corresponding consistency achieved by their algorithm is  $\con = \big( \frac{1-e^\rho}{\rho} \big)^2 \frac{1}{\rob \ub}$.
This algorithm ensures smoothness in expectation with respect to the error $\eta(\price,\pred)$, i.e. that 
\[
\frac{\Eb[\tilde{\A}^\rho(\prices, \pred)]}{\price} 
\geq \left( \frac{1-e^\rho}{\rho} \right)^2 \frac{1}{\rob \ub} - \beta_\rho \frac{\eta(\price, \pred)}{\price}\;,
\]
with $\beta_\rho$ a constant proportional to $1/\rho$. The major drawbacks of this approach are that
\begin{enumerate}
    \item the achieved robustness and consistency are not Pareto-optimal, \ie they deviate from the front defined in Eq.~\eqref{eq:pareto_optimal_front},
    \item the guarantees of the algorithm only hold in expectation, since the algorithm is randomized.
\end{enumerate}

\subsection{Probabilistic analysis}\label{subapp: proba additive analysis}

\begin{corollary}\label{cor: [Stochastic bounds] additive smoothness}
    The family $\family$ satisfies
    \begin{align}
        \frac{\Eb[\algone(\Price,\Pred)]}{\Eb[\Price]}\ge \max\left\{\parr\,,\,\frac{1}{\parr\ub}-\cstbeta\frac{\Eb[\Price\additiveerror(\Price,\Pred)]}{\Eb[\Price]}\right\}\,.\label{eq: [Stochastic bounds/ THM additive smoothness] PR bound}
    \end{align}
\end{corollary}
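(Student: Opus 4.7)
The plan is to mirror the proof of the multiplicative stochastic bound (\cref{thm: [Stochastic bounds] bound in expectation using true coupling}), but starting from the additive deterministic guarantee of \cref{thm: [deterministic Pareto-Optimal smooth Algorithm] additive smoothness} (in the form established in \cref{lem:A1-smoothness-ADD}) rather than from \cref{thm: [deterministic Pareto-Optimal smooth Algorithm] consistency-robustness-smoothness complex}. The whole argument is essentially a one-line application of Jensen's inequality to a pointwise bound.

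Concretely, I would multiply the deterministic inequality
\[
\frac{\algone(\prices,\pred)}{\price}\ge\max\!\left(\rob,\;\frac{1}{\rob\ub}-\cstbeta\frac{\additiveerror(\price,\pred)}{\price}\right)
\]
through by $\price\ge 0$, take expectation against $\truecoupling$, and use $\Eb[\max(f,g)]\ge\max(\Eb[f],\Eb[g])$ together with linearity to obtain
\[
\Eb[\algone(\Price,\Pred)]\ge\max\!\left(\rob\,\Eb[\Price],\;\frac{\Eb[\Price]}{\rob\ub}-\cstbeta\,\Eb[\additiveerror(\Price,\Pred)]\right).
\]
Dividing by $\Eb[\Price]>0$ already yields a bound of the desired shape. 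To match the stated form exactly, I would then use that $\Price\ge 1$ almost surely (since prices live in $[1,\ub]$) to deduce $\Eb[\additiveerror(\Price,\Pred)]\le \Eb[\Price\,\additiveerror(\Price,\Pred)]$; the resulting mild weakening of the error term recovers the expression in \eqref{eq: [Stochastic bounds/ THM additive smoothness] PR bound}. This last reformulation is cosmetic but aligns the statement with the multiplicative version in \cref{thm: [Stochastic bounds] bound in expectation using true coupling}, which likewise carries $\Eb[\Price\,\error^{\smth}]$ under the expectation.

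There is no real obstacle here: everything reduces to Jensen and linearity. The only mild subtlety is to take the maximum of the two expectations rather than the expectation of the maximum, so that the robustness guarantee $\rob$ is preserved independently of how large the error term $\Eb[\Price\,\additiveerror(\Price,\Pred)]$ becomes.
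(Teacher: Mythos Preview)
Your proposal is correct and matches the paper's intended approach: the corollary is stated without proof in the paper, but it sits in direct analogy with \cref{thm: [Stochastic bounds] bound in expectation using true coupling}, whose one-line proof (``Apply Jensen's inequality to \cref{thm: [deterministic Pareto-Optimal smooth Algorithm] consistency-robustness-smoothness complex}'') is exactly what you reproduce here starting instead from the additive bound of \cref{lem:A1-smoothness-ADD}. Your observation that the natural argument first yields $\Eb[\additiveerror(\Price,\Pred)]$ and that one then weakens via $\Price\ge 1$ to reach $\Eb[\Price\,\additiveerror(\Price,\Pred)]$ is correct and worth noting, since it shows the stated form is slightly looser than what the method actually delivers.
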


\begin{corollary}\label{cor: [Stochastic bounds] additive smoothness OT bound}
    The family $\family$ satisfies the worst-case performance ratio bound
    \begin{align}
        \frac{\Eb[\algone(\Price,\Pred)]}{\Eb[\Price]}\ge \frac{1}{\parr\ub}-\beta\sup_{\coupling\in\couplings(\dprice,\dpred)}\frac{\int \price \additiveerror(\price,\pred)\de\coupling(\price,\pred)}{\Eb[\Price]}\,.\label{eq: [Stochastic bounds/ THM additive smoothness] PR bound 2}
    \end{align}
\end{corollary}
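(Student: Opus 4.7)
The plan is to obtain \Cref{cor: [Stochastic bounds] additive smoothness OT bound} as a direct distributional-robustness relaxation of \Cref{cor: [Stochastic bounds] additive smoothness}. The starting point is the inequality
\begin{align}
    \frac{\Eb[\algone(\Price,\Pred)]}{\Eb[\Price]}\ge \frac{1}{\parr\ub}-\cstbeta\,\frac{\int \price\additiveerror(\price,\pred)\de\truecoupling(\price,\pred)}{\Eb[\Price]}\,,\notag
\end{align}
which is nothing more than \Cref{cor: [Stochastic bounds] additive smoothness}, with the expectation $\Eb[\Price\additiveerror(\Price,\Pred)]$ rewritten as an integral against the true coupling $\truecoupling$ via \cref{eq: expectation=coupling integral}. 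At this stage I drop the outer $\max\{\parr,\cdot\}$ present in \Cref{cor: [Stochastic bounds] additive smoothness}, since removing a term inside a $\max$ only weakens the bound and hence preserves validity.

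The second step is to invoke the definition of the set of couplings $\couplings(\dprice,\dpred)$: as $(\Price,\Pred)$ has marginals $\dprice$ and $\dpred$ by assumption, the true coupling $\truecoupling$ is an element of $\couplings(\dprice,\dpred)$. Consequently,
\begin{align}
    \int \price\additiveerror(\price,\pred)\de\truecoupling(\price,\pred) \;\le\; \sup_{\coupling\in\couplings(\dprice,\dpred)}\int \price\additiveerror(\price,\pred)\de\coupling(\price,\pred)\,.\notag
\end{align}
Because this quantity enters the previous display with coefficient $-\cstbeta<0$, replacing it by its supremum reverses the direction of the inequality in the correct way and yields the bound stated in the corollary.

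The only point requiring care is the sign-tracking: here the additive error appears with a negative coefficient, so we upper bound it and obtain a $\sup$-formulation, in mirror symmetry with \Cref{thm: [Stochastic bounds] OT bound for coupling robustness}, where the multiplicative error entered with a positive coefficient and yielded an $\inf$-formulation. Both bounds are worst-case guarantees over the dependence structure of $(\Price,\Pred)$, providing a practical robustness certificate: provided the marginals $\dprice$ and $\dpred$ are identified correctly, the performance bound holds irrespective of the actual coupling. No genuine technical obstacle arises; the argument is a one-line monotonicity relaxation, and the content of the corollary lies in recognising that this relaxation connects the stochastic smoothness bound to a tractable optimal-transport-type object.
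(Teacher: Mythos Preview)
Your proof is correct and follows exactly the approach the paper implicitly intends: the paper does not spell out a proof for this corollary, but its multiplicative analogue (\Cref{thm: [Stochastic bounds] OT bound for coupling robustness}) is derived from \Cref{thm: [Stochastic bounds] bound in expectation using true coupling} by the very same one-line relaxation over $\couplings(\dprice,\dpred)$, with the $\inf$ replaced by a $\sup$ here because the error term carries a negative sign. Your sign-tracking remark captures precisely this mirror symmetry, so there is nothing to add.
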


One can observe that the bounds of \cref{cor: [Stochastic bounds] additive smoothness OT bound} represent the supremum version of the transport problem associated with $\Wass_1$ (see below). The supremum is expected due to the additive nature of the analysis, which makes the error term a negative (additive) correction rather than a multiplicative factor. While not a classical optimal transport problem, this supremum can be transformed into an optimal transport problem with cost $(\pred,\price)\mapsto - \abs{\pred-\price}$ and one can recover (parts of) the standard theory from there, see \eg \cite{villani_optimal_2009}.

Note that the Wasserstein-$p$ distance, for $p\in[1,+\infty)$, denoted $\Wass_p$, on the space $\Ps(\range)$ of probability distributions\footnote{If $\range$ had been unbounded, $\Wass_p$ would only have been defined for distributions with a a $p\textsuperscript{th}$-moment.} over $\range$ is 
\[
    \Wass_p:(\dprice,\dpred)\mapsto\inf_{\coupling\in\couplings(\dprice,\dpred)}\int \abs{\price-\pred}^p\de\coupling(\price,\pred)\,.
\]

\section{Additional Numerical Experiments}\label{app: experiments}

We give here an additional experiment made with the same synthetic data described in Section \ref{sec: experiments}. The first experiment shows the performance of the algorithm as a function of the multiplicative error. Instead of fixing $\error_{\min}$, we set a maximum error level $\eta_{\max}$ and sample $\pred$ uniformly at random from the interval $[\price - \eta_{\max}, \price + \eta_{\max}]$. Figure \ref{fig: additive experiment} presents the results in this setting.  

\begin{figure}
    \centering
    \includegraphics[width=0.5\textwidth]{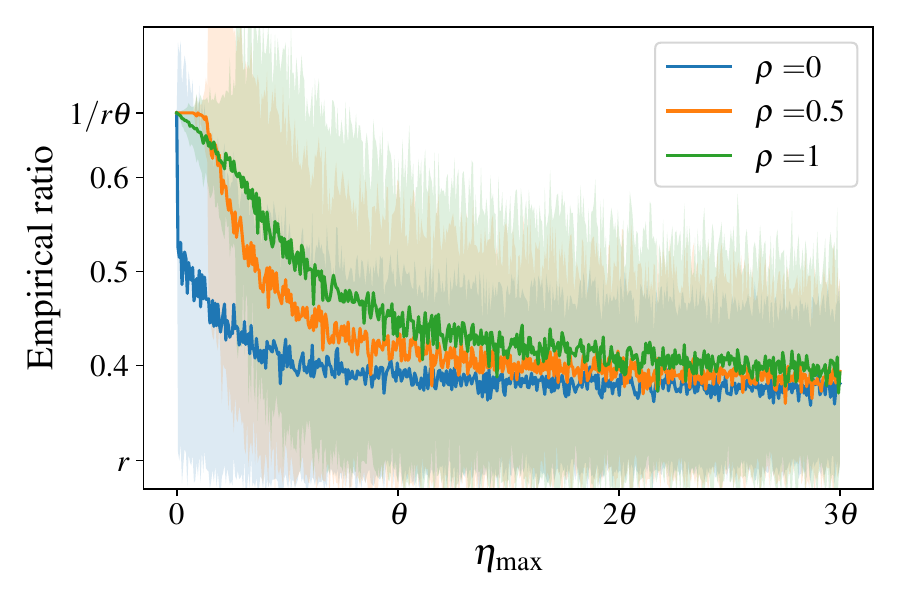}
    \caption{Performance of $\A^\rho_\rob$ with $\rho \in \{0,0.5,1\}$.}
    \label{fig: additive experiment}
\end{figure}

Similarly to the behaviour with respect to the multiplicative error, Figure \ref{fig: additive experiment} shows that $\rho = 0$ yields a significant performance degradation for an arbitrarily small error, which confirms the brittleness of \citet{sun_pareto-optimal_2021}'s algorithm. In contrast, $\rho=1$ achieves the best smoothness, having a performance that gracefully degrades with the prediction error.

\subsection{Experiments on real datasets}

We use the same experimental setting and Bitcoin data as in Section \ref{sec: experiments}, but we set different values of $\lambda \in \{0.2, 0.8\}$ instead of fixing $\lambda=0.5$ as in Figure \ref{fig:BTC-lmb0.5}. This yields different robustness levels, again expressed as $\rob = \ub^{-(1-\lambda/2)}$. The results are shown in Figures \ref{fig:BTC-lmb0.2} and \ref{fig:BTC-lmb0.8} for $\lambda = 0.2$ and $0.8$, respectively.

\begin{figure}[h!]
    \centering
    \begin{minipage}{0.48\textwidth}
        \centering
        \includegraphics[width=\textwidth]{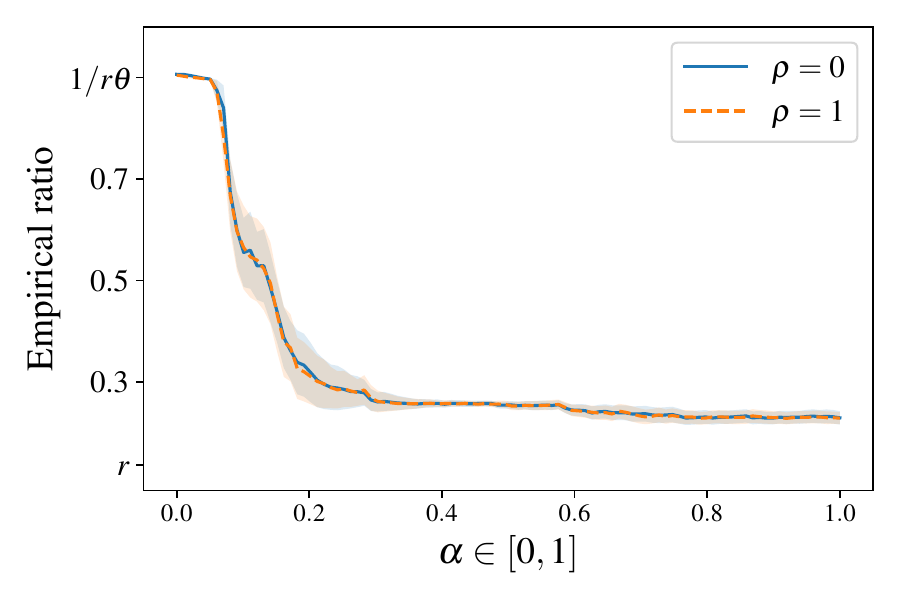}
        \caption{Comparison of $\A^1_\rob$ and $\A^0_\rob$ on the Bitcoin price dataset with $\lambda = 0.2$.}
        \label{fig:BTC-lmb0.2}
    \end{minipage}
    \hfill
    \begin{minipage}{0.48\textwidth}
        \centering
        \includegraphics[width=\textwidth]{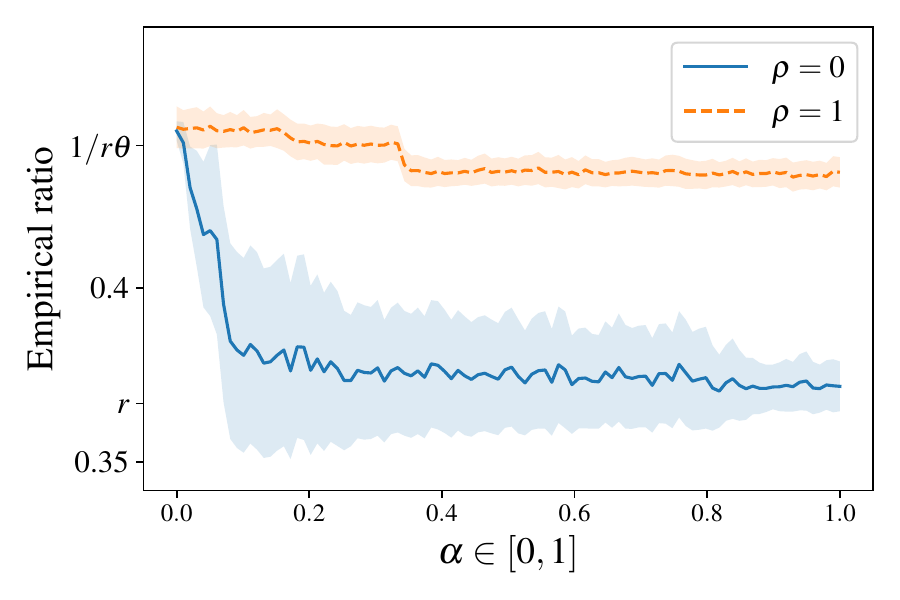}
    \caption{Comparison of $\A^1_\rob$ and $\A^0_\rob$ on the Bitcoin price dataset with $\lambda = 0.8$.}
    \label{fig:BTC-lmb0.8}
    \end{minipage}
\end{figure}

For $\lambda = 0.2$, Figure \ref{fig:BTC-lmb0.2} shows that the performances of $\A^1_\rob$ and $\A^0_\rob$ are similar when $\lambda$ is small, i.e., when $\rob$ is close to $1/\theta$. This corresponds to a consistency of $1$, meaning that the algorithm fully trusts the prediction. Since both algorithms rely heavily on prediction in this setting, their behavior is naturally similar.

For larger $\lambda$, as seen in Figures \ref{fig:BTC-lmb0.5} and \ref{fig:BTC-lmb0.8}, the performance gap between the two algorithms increases. However, for $\lambda$ close to $1$, both consistency and robustness approach $1/\rob \ub$. While the performance of $\A^1_\rob$ degrades significantly more slowly than that of $\A^0_\rob$ for $\lambda = 0.8$ (Figure \ref{fig:BTC-lmb0.8}), the values of $\rob$ and $1/\rob \ub$ remain close. Figure \ref{fig:BTC-lmb0.5}, presented in Section \ref{sec: experiments}, is an intermediate setting between these two extremes, where $\A^1_\rob$ yields a better smoothness than $\A^0_\rob$, without having the values $\rob$ and $1/\rob$ close to each other.

%\input{sections/app_fractional.tex}

%%%%%%%%%%%%%%%%%%%%%%%%%%%%%%%%%%%%%%%%%%%%%%%%%%%%%%%%%%%%%%%%%%%%%%%%%%%%%%%
%%%%%%%%%%%%%%%%%%%%%%%%%%%%%%%%%%%%%%%%%%%%%%%%%%%%%%%%%%%%%%%%%%%%%%%%%%%%%%%
\end{document}